\documentclass[preprint]{elsarticle}
\usepackage[a-1b]{pdfx}
\journal{Computers \& Security} 

\usepackage[T1]{fontenc}
\usepackage{lmodern}
\usepackage{amsthm}
\usepackage{verbatim}
\usepackage{csquotes}
\usepackage{xspace}
\usepackage{soul}
\usepackage{multicol}
\usepackage{mathtools}
\usepackage[export]{adjustbox}
\usepackage{lscape}
\usepackage{amsmath}
\usepackage{amssymb}
\usepackage{tabularx}
\usepackage{array}
\usepackage[all]{xy}
\usepackage{thmtools}
\usepackage{thm-restate}

\usepackage[dvipsnames,svgnames]{xcolor}

\usepackage{tikz}
\usetikzlibrary{positioning,calc}
\usetikzlibrary{backgrounds}
\usetikzlibrary{arrows,shapes}
\usetikzlibrary{tikzmark} 
\usetikzlibrary{calc} 

\usetikzlibrary{fadings, shadings}
\usepackage{tikz-3dplot}

\usepackage{array, booktabs, longtable, caption}
\usepackage{multirow, makecell}

\usepackage{spi}
\usepackage{prooftree}
\usepackage{hyperref}

\usepackage{colortbl}
\usetikzlibrary{decorations.pathmorphing}
\usetikzlibrary{decorations.pathreplacing,tikzmark, arrows.meta}
\tikzset{snake it/.style={decorate, decoration=snake}}

\tikzstyle{causal}=[purple, dotted, -{Turned Square[open]}, thick] 
\tikzstyle{structural} = [red, dashed, ->>, thick] 
\tikzstyle{transition} = [->, thick] 

\newcommand*{\eg}{e.g.\@,\xspace}
\newcommand*{\Eg}{E.g.\@,\xspace}

\newcommand*{\ie}{i.e.\@,\xspace}
\newcommand*{\aka}{a.k.a.\@\xspace}

\renewcommand{\st}{s.t.\@\xspace}
\newcommand*{\resp}{resp.\@\xspace}
\newcommand*{\wrt}{w.r.t.\@\xspace}

\renewcommand{\aka}{a.k.a.\@\xspace}

\makeatletter
\newcommand*{\etc}{%
	\@ifnextchar{.}%
	{etc}%
	{etc.\@\xspace}%
}
\makeatother

\AtBeginDocument{%

}

\newcommand{\dlts}[2]{\updownlts^{#1}_{#2}}
\renewcommand{\dlts}[2]{\xrightarrow[\raisebox{.15em}{\protect{\(\scriptstyle #2\)}}]{#1}} 
\renewcommand{\lts}[1]{\xrightarrow{#1}}
\renewcommand{\pair}[2]{\left\langle#1, #2\right\rangle} 

\newcommand{\simi}[1]{\mathrel{\preceq_{#1}}}
\newcommand{\nsimi}[1]{\mathrel{\npreceq_{#1}}}

\newcommand{\bisimi}[1]{\mathrel{\sim_{\textnormal{\emph{#1}}}}}
\newcommand{\nbisimi}[1]{\mathrel{\not\sim_{\textnormal{\emph{#1}}}}}

\newcommand{\alia}{\alpha}
\newcommand{\alib}{\beta}
\newcommand{\alic}{\gamma}

\newcommand{\match}[1]{\mathopen{\left[ #1 \right]}}

\newcommand{\prefix}[1]{\textcolor{red}{#1}}

\renewcommand{\names}[2]{#2} 
\renewcommand{\env}{\vec{\alic}} 

\newcommand{\fal}[1]{\mathopen{\mathrm{fa}}\left(#1\right)} 
\newcommand{\aliroot}{\lambda} 
\newcommand{\lout}[2]{\co{#1}(#2)} %

\DeclareMathOperator{\id}{id} 
\DeclareMathOperator{\pok}{ok}

\newcommand{\Indy}{\smile} 
\newcommand{\notIndy}{\mathrel{\not\smile}} 
\newcommand{\sIndy}{\mathrel{I_{\ell}}}
\newcommand{\ESS}{\mathrel{\mathsf{S}}}

\definecolor{green(ncs)}{rgb}{0.0, 0.62, 0.42}
\newcommand{\diff}[1]{\textcolor{green(ncs)}{#1}}

\newtheorem{thm}{Theorem}
\newtheorem{lem}[thm]{Lemma}
\newdefinition{rmk}[thm]{Remark}
\newtheorem{defi}[thm]{Definition}
\newtheorem{example}[thm]{Example} 
\newtheorem{remark}[thm]{Remark} 
\newproof{pf}{Proof}

\newcommand{\boxm}[1]{\mathopen{\big[ #1 \big]}} 
\newcommand{\diam}[1]{\mathopen{\big\langle #1 \big\rangle}}
\newcommand{\ttt}{\texttt{true}}
\newcommand{\fff}{\texttt{false}}
\newcommand{\yields}{\supset}

\usepackage{stackengine}
\newcommand{\eboxm}[2]{\mathopen{\big[ \stackanchor[1.5pt]{\footnotesize $#1$}{\footnotesize $#2$} \big]}} 
\newcommand{\ediam}[2]{\mathopen{\big\langle \stackanchor[1.5pt]{\footnotesize $#1$}{\footnotesize $#2$} \big\rangle}}
\newcommand{\event}[2]{\big( \stackanchor[1.5pt]{\footnotesize $#1$}{\footnotesize $#2$} \big)}

\newcommand{\reader}{r}
\newcommand{\passport}{p}
\DeclareMathOperator{\getchallenge}{get}
\DeclareMathOperator{\error}{err}

\DeclareMathOperator{\ch}{ch}

\newcommand{\system}[1]{\textnormal{#1}} 

\newcommand{\SpecUK}{\system{Spec}_{\ch}}
\newcommand{\SystemUK}{\system{System}_{\ch}}
\newcommand{\MainUK}{\system{P}}
\newcommand{\SystemCSF}{\system{System}_{\getchallenge}}
\newcommand{\SpecCSF}{\system{Spec}_{\getchallenge}}

\newcommand{\SystemF}{\system{System}_{2}}
\newcommand{\SpecF}{\system{Spec}_{2}}

\newcommand{\SystemMin}{\system{System}_{\min}}
\newcommand{\SpecMin}{\system{Spec}_{\min}}

\newcommand{\Reader}{\emph{V}}

\newcommand{\locp}{\prefix{001}[]}
\newcommand{\locra}{\prefix{000}[]}
\newcommand{\locrb}{\prefix{0100}[]}
\newcommand{\locq}{\prefix{1000}[]}

\newcommand{\HPFM}{\mbox{HP-$\mathcal{F}\!\mathcal{M}$}}

\newcommand{\rulel}[1]{\textnormal{\textsc{#1}}}

\definecolor{clem}{HTML}{A30B37} 
\definecolor{ross}{HTML}{BABD8D} 
\definecolor{chri}{HTML}{EB6424} 
\definecolor{sjou}{HTML}{4C956C} 
\usepackage[normalem]{ulem}

\newcommand{\locA}{$\ell_1$}
\newcommand{\locB}{$\ell_2$}
\newcommand{\locC}{$\ell_3$}
\newcommand{\locD}{$\ell_4$}

\renewcommand\appendixautorefname[1]{}

\begin{document}

\begin{frontmatter}
\title{%
Unlinkability and History Preserving Bisimilarity
}

\author[1]{Cl{\'e}ment Aubert}
\ead{caubert@augusta.edu}

\author[2]{Ross Horne\texorpdfstring{\corref{cor1}}{}}
\ead{ross.horne@strath.ac.uk}

\author[3]{Christian Johansen}
\ead{christian.johansen@ntnu.no}

\author[4]{Sjouke Mauw}
\ead{sjouke.mauw@uni.lu}

\cortext[cor1]{Corresponding author} 

\affiliation[1]{
	organization={School of Computer and Cyber Sciences}, 
	addressline={Augusta University},
	city={Augusta},
	postcode={30904},
	state={GA}, 
	country={USA}
}

\affiliation[2]{
	organization={Computer and Information Sciences},
	addressline={University of Strathclyde},
	city={Glasgow},
	country={United Kingdom}
}

\affiliation[3]{
	organization={Department of Information Security and Communication Technology},
	addressline={Norwegian University of Science and Technology},
	city={Gj\o{}vik},
	country={Norway}
}

\affiliation[4]{
	organization={University of Luxembourg},
	country={Luxembourg}
}

\begin{keyword}
	protocols\sep
	security\sep 
	privacy\sep 
	concurrency\sep
	structural operational semantics\sep
	asynchronous transition systems\sep
	applied \(\pi\)-calculus
\end{keyword}

\begin{abstract}
	An ever-increasing number of critical infrastructures rely heavily on the assumption that security protocols satisfy a wealth of requirements.
	Hence, the importance of certifying \eg privacy properties using methods that are better at detecting attacks can hardly be overstated.
	This paper scrutinises the \enquote{unlinkability} privacy property using relations equating behaviours that cannot be distinguished by attackers.
	Starting from the observation that some reasonable design choice can lead to formalisms missing attacks, we draw attention to a classical concurrent semantics accounting for relationship between past events, and show that there are concurrency-aware semantics that can discover attacks on all protocols we consider.
	More precisely, we focus on protocols where trace equivalence is known to miss attacks that are observable using branching-time equivalences.
	We consider the impact of three dimensions: design decisions made by the programmer specifying an unlinkability problem (style), semantics respecting choices during execution (branching-time), and semantics sensitive to concurrency (non-interleaving), and discover that reasonable styles miss attacks unless we give attackers enough power to observe choices and concurrency.
	Our main contribution is to draw attention to how a popular concurrent semantics -- history-preserving bisimilarity -- when defined for the non-interleaving applied \(\pi\)-calculus, can discover attacks on all protocols we consider, regardless of the choice of style.
	Furthermore, we can describe all such attacks using a novel modal logic that is hence suitable to formally certify attacks on privacy properties.
	This study highlights the threats posed by relying exclusively on tools implementing coarser semantics for protocol verification, and justifies in a very precise sense why security practitioners should account for history between past events to build reliable tools.
\end{abstract}
\end{frontmatter}

\section{Introduction}
\label{sec:intro}

Underspecified or poorly implemented communication standards can lead to catastrophic failures, privacy breaches and other undesirable consequences.
Both business managers and protocol engineers are equally aware of the possibility that resulting vulnerabilities may be exploited, and are looking for formal guarantees of their non-existence.
Despite decades of work and numerous advances, completely foolproof methods to verify essential security and privacy properties are still sought after.
To discuss and address precisely this general problem, our paper focuses on one particular property, \enquote{unlinkability}, for one particular protocol, the Basic Access Control (BAC) protocol.
This well-studied property has seen variations in its representation, and our first contribution is to 
systematically identify and categorise those \enquote{stylistic differences} -- as explained in \autoref{sec:tech_preliminry_intro} -- to be able to adequately compare how different approaches fare in detecting attacks.
This foundational study allows us to sharpen and clarify our problem statement: 
\emph{Why have existing protocol verification tools produced false 
	positives (\ie  insecure protocols being verified as safe) and reached 
	fundamentally different conclusions about the privacy properties of 
	this protocol?}

Answering this question requires first to adopt a non-interleaving structural operational semantics, to represent specifications and systems in a concise, uniform and abstract way, as \emph{processes} (\aka \emph{agents}).
Our take on the applied $\pi$-calculus makes it easier to track \emph{events} and their origin locations, while maintaining usual features such as the ability to model Dolev-Yao attackers.
We further leverage (bi)simulations, used to represent competing positions in a game in the presence of intruders\footnote{%
	While simulation sets once and for all which process is leading the game, \emph{bi}simulation allows changes in who is leading. 
}\,: this classical tool equates processes that may be internally different, provided that they are indistinguishable from the attacker's perspective.
The classical methodology ensures that if a system and its specification, both represented as processes, cannot be distinguished by a (bi)simulation, then the property of interest is guaranteed, since the attacker cannot tell apart the implementation and the specification.
If, on the other hand, there exists a distinguishing strategy, then a modal logic capturing said (bi)simulation can be used to forge a concrete attack.
This tooling is the second variable of our method: different (bi)simulations will capture attackers with varying capacities, more distinguishing relations corresponding to more powerful attackers.
The third, potentially surprising, element required to conduct our study is the acknowledgement that seemingly unimportant stylistic differences in how specifications are represented can greatly impact existing protocol verification tools.
While it is agreed that those differences \emph{should not} matter, our method highlights precisely in which sense \emph{they do} for existing tools, resulting in protocols being flagged as safe or unsafe depending on the modelling technique chosen rather than due to features of the protocol itself.

Our first finding is that \emph{existing protocol verification tools miss attacks because they equate processes that attackers can actually distinguish}, and that \emph{this observation was obfuscated by minor stylistic differences}, resulting in tools disagreeing on the privacy provided by seemingly identical protocols.
Bridging this knowledge gap allows us to better navigate this research landscape and to improve our fundamental understanding of unlinkability.
Our second finding establishes that the so called \emph{history-preserving bisimilarity} adequately models attackers capable of forging attacks on all existing implementations by taking into account relations between past events.
This \enquote{just powerful enough} bisimulation allows us to ignore unimportant stylistic differences when modelling the unlinkability of security protocols.
To the best of our knowledge, this is the first time that a compelling security application has emerged for this relation, which was originally developed in the early days of concurrency theory as a process semantics to better capture concurrency.

Our contributions go beyond the case study of unlinkability for the BAC protocol by illustrating more generally how existing protocol verification tools are inherently limited by their design choices.
By illustrating both technically and intuitively how attackers can leverage information about past events to forge attacks that neither the so-called \emph{linear-time} nor \emph{interleaving} bisimilarities 
can represent, we also address an interesting challenge to the community: \emph{can existing tools adopt our \enquote{history preserving} view, so that they can  become insensible to stylistic changes?}.
We believe that computer security practitioners will find it critical to re-evaluate the design choices of their verification methods based on our findings, and sketch possible avenues to improve protocol verification tools so that they can account for attacks forged using knowledge about past events and their relations.

\paragraph{Structure of the paper}
\label{par:structure_of_paper}
In \autoref{sec:lit-review} we give a short literature review for the two main areas of this paper, namely for unlinkability and bisimulations of the type we employ; we relate to more specific works throughout the paper in the respective places where these works are relevant.
We then give in \autoref{sec:tech_preliminry_intro} a preliminary technical presentation of our results, meant for readers to understand key observations and results before consuming the details in later sections.
We revisit in \autoref{sec:SOS} the non-interleaving semantics for the applied $\pi$-calculus~\cite{Aubert2022e}.
\autoref{sec:interleaving} presents formal machinery and results 
explaining the effectiveness of interleaving semantics when evaluating unlinkability,
in particular, by proving which formulations of unlinkability 
require $i$-simulation and which require more distinguishing power 
to discover attacks.
The reader just wishing to know the most novel highlights
can skip forward to \autoref{sec:HP} 
where HP-simulation (\autoref{def:hp-sim}),
and its characteristic modal logic (\autoref{fig:modal-2}),
are applied in the proof of
\autoref{thm:min-HP}
to obtain a description of an attack on unlinkability which 
can not be discovered using
semantics less powerful than HP-bisimilarity.
\autoref{sec:protocols} serves to explain
further the context of the contributions in \autoref{sec:interleaving} and \ref{sec:HP},
by tightening the scope of these observations.
These results are finally discussed in a cohesive manner in \autoref{sec:discussion}, wrapping up the multiple results of this paper in a cohesive final picture.

\section{Literature Review}
\label{sec:lit-review}
We provide a brief review of the literature on the two main topics of
this paper: unlinkability of the BAC protocol and the types of
bisimulation used for protocol verification. More specific related
works are discussed in context, at the points in the paper where they
are most relevant.

\subsection{Unlinkability of the BAC protocol}

The Basic Access Control (BAC) protocol, as defined in the ICAO 9303 standard~\cite{international1996machine}, describes how machine-readable passports should be implemented and their communication protocols.
Implementations are supposed to enforce various privacy properties~\cite{10.1145/2825026}, among which \emph{unlinkability}~\cite[14.3]{CC}.
This particular property has been studied with a wealth of formal
methods tools, including process algebras~\cite{Arapinis2010}, the ProVerif cryptographic protocol verifier~\cite{Hirschi2016,Hirschi2019}, DEEPSEC~\cite{Cheval2018}, or bisimilarity equivalence technique~\cite{Filimonov2019,Horne2021}. 
Other tools of interest to study protocols formally include Tamarin~\cite{10.1145/2810103.2813662}, Maude-NPA~\cite{10.1007/978-3-319-11851-2_11}, Akiss~\cite{Chadha2016}, SAPIC~\cite{Kremer2016}, SAT-Equiv~\cite{Cortier2017} and SPEC~\cite{Tiu2016}.

\subsection{Branching-time and non-interleaving behavioural equivalences}

The study of the discriminating power of behavioural equivalences \wrt security analysis has been an active research subject for nearly three decades, since the landmark studies on \enquote{the linear time-branching time spectrum}~\cite{Glabeek1989,Glabbeek2001,Glabbeek1993}.
In the security domain, interleaving bisimilarity has been proposed to establish non-interference properties~\cite{DBLP:conf/fosad/FocardiG00,doi:10.3233/JCS-1994/1995-3103,DBLP:conf/apn/Gorrieri20,Ryan2001}.
Prior to the current work, coarser equivalences than bisimilarity have been sufficient to express attacks on security properties of protocols~\cite{DBLP:conf/fosad/FocardiGM02}.
Prior work on vulnerabilities in the BAC protocol using the applied \(\pi\)-calculus~\cite{Abadi2018,Aubert2022e} 
required only the power of similarity~\cite{Filimonov2019,Horne2021}.

ST-bisimilarity~\cite{Johansen2016} and HP-bisimilarity~\cite{Bednarczyk1991,Rabinovich1988,Aubert2020b} have also been redefined for the applied \(\pi\)-calculus and conjectured to be a useful toolbox to study security properties of protocols~\cite{Aubert2022k}. Non-interleaving semantics have previously been proposed for investigating non-interference~\cite{DBLP:conf/apn/Gorrieri20}. The current work is however the first to apply such non-interleaving semantics to a formulation of unlinkability of a protocol, for which HP-bisimilarity finds attacks that interleaving bisimilarity misses.

\section{Technical Preliminaries}
\label{sec:tech_preliminry_intro}

Unlinkability is a strong privacy property,
\textcquote[14.3]{CC}{\emph{\textins{ensuring} that a user may make multiple uses of resources or services without others being able to link these uses together}}.
In the context of ePassports~\cite{ICAO}, unlinkability ensures that an ePassport holder may make use of their ePassport multiple times without a third party
being able to link these uses together.
Unlinkability has been studied intensely, particularly in the setting of the Basic Access Control (BAC) protocol, which is used to authenticate ePassports when passing a passport control reader machine~\cite{Arapinis2010,Hirschi2016,Cheval2018,Hirschi2019,Filimonov2019,Horne2021}. Such results have, however, a much wider application than BAC, \eg to the more recent PACE protocol for ePassports, or to RFID protocols. Since the BAC protocol has been the subject of quite some debate, we use it to illustrate our motivation.

The BAC protocol involves a prover (\ie an ePassport)
and a verifier (\ie an ePassport reader).
An established approach to modelling an unlinkability property
is to check whether a system that allows the same prover to be used
multiple times (\eg due to passing through multiple checkpoints)
is equivalent to a specification where each prover is used at most
once.

The verification expert may choose to work with different equivalences, 
reflecting the observational powers of an external attacker.
One distinction is between trace equivalences and (bi)similarities, both being well studied verification methods implemented in a multitude of tools.

In 2010, the applied $\pi$-calculus was used~\cite{Arapinis2010} to model and prove a notion of unlinkability of the BAC protocol based on bisimilarity, but a bug in the ProVerif tool used for this purpose rendered the proof unacceptable. 
This original formulation also made use of a constant \enquote{$\getchallenge$} message at the beginning of the protocol to mark the creation of a new session.
In 2016, a proof was provided for unlinkability without such a \enquote{$\getchallenge$} message~\cite{Hirschi2016,Hirschi2019}, but that proof makes use of trace equivalence instead of bisimilarity as previously done~\cite{Arapinis2010}.
In 2018, 
an attack was discovered using the DEEPSEC tool~\cite{Cheval2018}
on a scheme for unlinkability
differing from the one described above~\cite[Sect.~6]{Horne2021},
and, importantly,
for a formulation of unlinkability where only a finite number of
sessions are allowed.
In 2019, practical attacks were discovered~\cite{Filimonov2019} using the original 2010 bisimilarity-based formulation of unlinkability.
More recently, cleaner
attack formulas have been exposed~\cite{Horne2021}, leveraging a formulation that makes use of a fresh channel for each endpoint 
(\ie each participant in each session).

What may be surprising in the timeline above
is that a proof of unlinkability in 2016 appears to be contradicted by the attack discovered in 2019.
The decisive difference between the 2016 and 2019 results is 
the use of trace equivalence rather than the classic bisimilarity (which we call interleaving bisimilarity, and abbreviate as $i$-bisimilarity), as recently discussed~\cite{Filimonov2019,Horne2021}.
That is, the difference is in the so-called linear-time/branching-time spectrum~\cite{glabbeek90concur}.
The essential role of branching-time comes from the fact that attack strategies involve a choice between receiving different inputs
at a particular moment 
by some participant in the protocol (usually the ePassport).
Moreover, interleaving similarity ($i$-similarity), which is the one-sided relaxation of $i$-bisimilarity where the same
player always leads throughout the 
 game, suffices to discover this attack for most formulations of unlinkability,
since the above-mentioned choice between inputs can be dictated by one player.

\paragraph{Synopsis}
We work in the setting of applied $\pi$-calculus, whose syntax is defined in \autoref{figure:syntax} on page~\pageref{figure:syntax}.
We appeal to the reader's patience to allow us to use this syntax 
in the rest of this technical preliminary to explain the three stylistic choices distilled from related work.
These stylistic differences are
summarised in \autoref{table:stylistic-diff},
where we indicate the presence of the attack with respect to $i$-similarity,
regardless of the equivalence used in the respective paper.
The condensed information from \autoref{table:stylistic-diff} is explained in the rest of this synopsis.

\begin{table}[t]
\caption{
In this table, we fix $i$-simulation as 
the distinguishing equivalence employed,
and summarise how stylistic differences impact the ability to find attacks.
}
\label{table:stylistic-diff}
\centering
\begin{tabular}{|l|l|l|l|l|l}
	Style \& inspiration
	& First message & Channels & Bounded & Attack 
	\\ \hline
	Minimal (min)~\cite{Hirschi2019}
	&
	nonce
	&
	single
	&
	no
	&
	\textbf{no}
	\\
	W/ $\getchallenge$ ($\getchallenge$)~\cite{Arapinis2010}
	&
	\textbf{constant $\getchallenge$}
	&
	single
	&
	no
	&
	yes
	\\
	W/ endpoints ($\ch$)~\cite{Horne2021}
	&
	nonce
	&
	\textbf{endpoints}
	&
	no
	&
	yes
	\\
	Bounded ($2$)~\cite{Cheval2018}
	&
	nonce
	&
	single
	&
	\textbf{yes}
	&
	yes
\end{tabular}
\end{table}

As standard when modelling protocols in the applied $\pi$-calculus, the control flow of inputs and outputs that an ePassport (the prover) and the ePassport reader (the verifier) perform are represented as processes denoted as $\MainUK(c,d,ke,km)$ and $\Reader(c,d,ke,km)$, respectively. 
We delay the definition of these processes until later in the paper on
page~\pageref{eq:def-prover},
since we aim to draw attention to the various definitions of unlinkability here, with the BAC protocol serving merely as a well-known running example. It is sufficient at this point to know that each represents the inputs and outputs of one participant within one session and the following two things about the parameters. 
Firstly, $c$ represents a channel used for all outputs within the given session of the prover or verifier, while $d$ is used for all inputs. 
Secondly, there are two long-term keys $ke$ and $km$, representing keys built into the ePassport (the prover),
the same two keys also appear in verifier processes 
into which an ePassport holder has loaded these keys (typically, via an OCR session that the attacker cannot intercept).

The various models of unlinkability can then be captured by 
an equivalence problem between a system and specification as in
\autoref{thm:system-min-interleaving} below, which, like all the other theorems stated in this preliminary, is proven later in the paper.
Here the system allows the same ePassport to be used 
against multiple verifiers, all sharing the same long-term keys.
The innermost replication (\eg the \enquote{\(\bang\)} symbol) captures the generation of multiple
instances of sessions using the same ePassport, and hence the same
long-term keys.
This is depicted in \autoref{fig:idea}, where
the left side of the choice in the System
depicts a single session with one ePassport used
across session with two verifiers, whereas the right side of the
choice indicates that it is also possible that two distinct ePassports
are used, one for each verifier.
The \enquote{\(\sim\)} symbol between them indicates that the unlinkability property holds if they are bisimilar, \eg they cannot be distinguished using a bisimilarity.
At the same time, the Specification explicitly supposes a scenario where each ePassport is only ever
used once, 
\ie against a single verifier, 
in one session (as the inner replication operator is missing) and hence is certainly impossible to track.
This formulation
from~\cite{Hirschi2016,Hirschi2019}, with neither a leading $\getchallenge$
nor fresh channel name for each of the two participants in each session (called endpoints), 
has no attack with respect to $i$-similarity (denoted $\simi{i}$
), \ie the two processes are equated.

\begin{figure}[t]
	{
		\centering
		\includegraphics[width=.9\linewidth]{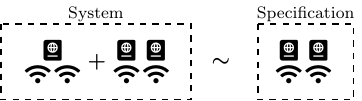}
	}
	\caption{The unlinkability property, illustrated using ePassports (top) and readers (bottom).}
	\label{fig:idea}
\end{figure}

\begin{thm}[Minimal]
	\label{thm:system-min-interleaving}
$
\SystemMin
\simi{i}
\SpecMin
$, 
where
\[
\begin{array}{rl}
\SystemMin \triangleq & \bang\mathopen{\nu ke, km.\bang}\left( \Reader(c,d,ke,km) \cpar \MainUK(c,d,ke,km) \right)
\\\\
\SpecMin \triangleq & \bang\mathopen{\nu ke, km.}\left( \Reader(c,d,ke,km) \cpar \MainUK(c,d,ke,km) \right)
\end{array}
\]
\end{thm}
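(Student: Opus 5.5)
Since the claim is that the specification $\SpecMin$ $i$-simulates the system $\SystemMin$ (the system is the leader, the specification answers), the plan is to exhibit an explicit $i$-simulation relation $\mathcal{R}$ containing the pair $(\SystemMin, \SpecMin)$ and to check that it is closed under the transitions of the left-hand side. The states reachable from $\SystemMin$ consist, up to structural congruence, of a finite number of \emph{opened} ePassport identities, each with fresh keys $ke_j, km_j$ and each with a finite number of \emph{active} sessions: partially executed pairs $\Reader(c,d,ke_j,km_j)$ and $\MainUK(c,d,ke_j,km_j)$ (really, residuals of the verifier and prover independently, since the verifier and the prover of one session need not progress together) together with fresh nonces and keys generated within sessions. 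The relation $\mathcal{R}$ relates such a system state to the specification state obtained by re-indexing: each active session of the system (with identity $j$) is mapped to a distinct fresh identity in the specification, i.e.\ a distinct unfolding of the outer replication with its own $\nu ke, km$. Because of the single shared input channel $d$ and output channel $c$, the attacker cannot direct an input to a particular session, and all outputs appear on $c$, so labels can be matched exactly.

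The key steps in order are as follows. First, I will set up the invariant of $\mathcal{R}$ precisely: every prover and verifier residual on the left has a counterpart on the right with the same residual control state, and the attacker's knowledge (the frame of output terms) on the two sides is statically equivalent under the renaming sending each session's keys to the corresponding fresh keys. Second, I will check the transitions: a replication unfolding on the left (opening a new identity or a new session within an identity) is matched by unfolding the outer replication of $\SpecMin$; an output on $c$ is matched by the corresponding residual's output; and an input on $d$ of a recipe $M$ is matched by the counterpart residual receiving the same recipe. Third, I will verify that after each step the invariant is re-established, in particular static equivalence of the extended frames.

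The main obstacle is the static-equivalence and input-matching step: on the left, messages encrypted with $ke_j, km_j$ from one session could be replayed into another session of the same passport, and a passport's test (MAC check) may then succeed on the left while the counterpart on the right, holding different keys, fails and emits an error. I would handle this by exploiting the BAC structure: every MAC check is on a message containing the session's own fresh nonce, so cross-session replays fail on both sides (nonces are fresh and distinct per session), and the error outputs coincide. Formally I would prove a lemma that for any recipe $M$ over the frame, the tests in a residual evaluate identically under the two frames, by induction on $M$ using the equational theory of the symmetric encryption and MAC. Because the specification is not required to lead back (only simulation is claimed), I need not worry about choices being resolved differently, which is precisely where the attack for the other styles lives.
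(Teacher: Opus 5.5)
\textbf{There is a genuine gap in your proposal.} You identify the right danger, but your way of disposing of it is false, and with it the mirroring relation you describe is not an $i$-simulation.

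Cross-session replays do \emph{not} fail in $\SystemMin$. A verifier is an oracle: it accepts any message as a challenge and returns it encrypted and MAC'ed under the passport's long-term keys. Concretely:
\begin{enumerate}
\item let the prover of session $j$ of passport $i$ output $nt_j$;
\item feed $nt_j$ to the verifier of a \emph{different} session $\ell$ of the same passport, which returns $\langle m, \mathrm{mac}(m, km_i)\rangle$ with $m = \mathrm{enc}(\langle nr, \langle nt_j, kr\rangle\rangle, ke_i)$;
\item feed that response back to prover $j$.
\end{enumerate}
Both of prover $j$'s tests pass: the MAC key is $km_i$, the encryption key is $ke_i$, and the embedded nonce is its own $nt_j$. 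So it proceeds to output a pair.

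Under your re-indexing, verifier $\ell$ and prover $j$ belong to different identities of $\SpecMin$ with different keys. The counterpart prover therefore fails its MAC test and blocks silently. (This variant of BAC has no error message, so there are no error outputs to coincide.) Your invariant ``same residual control state'' is broken at this point.

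In this scenario the system's next output cannot be answered at all. A fresh prover would output a nonce, which the test $\langle\mathrm{fst}(x),\mathrm{snd}(x)\rangle = x$ distinguishes from a pair. The only thread that could have produced a pair would have needed an input first, and that input move was already spent on the blocked prover. Your proposed lemma, that the tests in each residual evaluate identically under the two frames, is exactly the statement that fails. This is not a corner case: it is the cross-session behaviour underlying every attack in the paper.

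\textbf{The missing idea} is that the duplicator must stop mirroring precisely here.
\begin{itemize}
\item When a system prover receives an input that will make it authenticate, answer it not with the counterpart prover but with the input of a \emph{brand-new verifier}. Whether the input will authenticate can be decided at that moment from the input recipe and the current frame. The new verifier comes from one more unfolding of the outer replication of $\SpecMin$; the paper reserves the pair $2f(k)+1$ for this.
\item Answer the prover's final output with that verifier's response.
\item Static equivalence survives: both outputs are a pair of a ciphertext containing fresh names and a MAC, under keys the attacker never learns.
\item The counterpart prover in the specification is left waiting forever. This is harmless because only the system leads in a simulation. This asymmetry is also exactly why the argument does not extend to bisimilarity (\autoref{thm:system-min-silent}).
\end{itemize}
With the relation amended this way, a direct $i$-simulation along your lines goes through. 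It is somewhat lighter than the paper's route, which builds the same relation as an HP-simulation (also tracking the relation over events, \autoref{thm:sim-proof}) and then obtains the statement from \autoref{lem:HP-i}.
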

The key observation behind the above theorem
is that the input and output actions of 
any successfully authenticating ePassport in the system,
can be imitated in the specification by starting a fresh reader session, which has an indistinguishable communication pattern from the perspective of an attacker.
The formal proof constructs a relation defining precisely this strategy.

The original 2010 formulation~\cite{Arapinis2010} adds a special message, a \enquote{$\getchallenge$} constant, sent by the reader and received by the ePassport,
as indicated in red below (compare the difference in process terms with those in \autoref{thm:system-min-interleaving}).

\begin{restatable}[With \enquote{$\getchallenge$}]{thm}{systemget}
	\label{thm:system-get}
$\SystemCSF
\nsimi{i}
\SpecCSF
$,  where
\[
\begin{array}{rl}
	\SystemCSF \triangleq & \bang\mathopen{\nu ke, km.\bang}\left( {\color{red} \cout{c}{\getchallenge}.}\Reader(c,d,ke,km) \cpar  {\color{red} \mathopen{d(x).}
		\match{ x = \getchallenge }}\MainUK(c,d,ke,km) \right)
	\\[1.6em]
	\SpecCSF \triangleq & \bang\mathopen{\nu ke, km.}\left( {\color{red} \cout{c}{\getchallenge}.}\Reader(c,d,ke,km) \cpar  {\color{red} \mathopen{d(x).}
		\match{ x = \getchallenge }}\MainUK(c,d,ke,km) \right)
\end{array}
\]
\end{restatable}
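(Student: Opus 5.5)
The plan is to refute $\SystemCSF \simi{i} \SpecCSF$ with an explicit winning strategy for the attacker in the simulation game, where $\SystemCSF$ always leads. I will then argue by exhaustive case analysis that no $i$-simulation can contain the pair $(\SystemCSF, \SpecCSF)$. Equivalently, the strategy can be packaged as a distinguishing formula of the modal logic characterising $i$-similarity: a sequence of diamonds, ending in a conjunction over the two possible continuations. I will present it game-theoretically first and extract the formula afterwards.

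The attack exploits that the constant $\getchallenge$ output by a reader is observable \emph{before} any nonce is exchanged. As a result, the specification must commit early to which session (and hence which pair of keys $ke,km$) the reader belongs to. The system will perform the following moves.
\begin{enumerate}
\item Unfold the outer replication once, obtaining one identity with keys $ke,km$.
\item Unfold the inner replication so that this identity has one reader session $R$ and two ePassport sessions $P_1,P_2$.
\item Let $R$ output $\getchallenge$ on $c$.
\item Have the attacker feed $\getchallenge$ on $d$ to both $P_1$ and $P_2$.
\item Let $P_1$ and $P_2$ output fresh nonces $n_1,n_2$.
\end{enumerate}
Whatever the specification does, each of its moves must be matched by transitions of the same labels. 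In $\SpecCSF$, every ePassport capable of inputting $\getchallenge$ lives in a distinct $\nu ke,km$ scope, so $P_1$ and $P_2$ are simulated by ePassports with \emph{different} keys. The reader that output $\getchallenge$ is also bound to some fixed keys in the specification, so it shares keys with at most one of the two simulating ePassports.

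The system then branches. It can feed $n_1$ to $R$, or feed $n_2$ to $R$. In either case $R$ outputs its encrypted and MACed reply, the attacker forwards it to the ePassport whose nonce was used, and that ePassport passes its MAC and nonce checks and emits its final successful output. Since simulation lets $\SystemCSF$ pick the branch after $\SpecCSF$ has committed, the system chooses the ePassport whose specification counterpart does not share keys with the reader's. There, the forwarded ciphertext fails the MAC or decryption check (modelled by the $\match{\cdot}$ guards and equational theory defined later), so the specification can only produce the error branch. This is distinguishable from the success output, which completes the argument.

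The main obstacle is the exhaustiveness of the specification's responses. I must argue that the specification cannot escape by using a different reader to match $R$'s later input. It cannot: the observable $\getchallenge$ output has already been attributed, by the matching transition, to one concrete reader residual. The specification also cannot avoid committing $P_1,P_2$ to distinct identities. Both points follow from the structure of $\SpecCSF$: without inner replication, each $\nu ke,km$ block contains exactly one reader and one ePassport. I also need that message equality and the MAC check respect the key scoping, so that a reply created under keys $k$ never passes the check under keys $k'\neq k$. I will establish these using the definitions of the BAC processes $\Reader$ and $\MainUK$ and the static equivalence of frames, once these are introduced in \autoref{sec:SOS}.
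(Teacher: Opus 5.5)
Your argument is correct, but it runs the paper's attack with the roles of reader and ePassport swapped.

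\textbf{The two strategies.} The paper's certificate $\phi_{\ref{thm:system-get}}$ starts \emph{one} ePassport (input $d\,\getchallenge$) and \emph{two} readers (outputs $g_1 = g_2 = \getchallenge$). It feeds the single nonce $nt_1$ to both readers, then branches on which reply, $u_1$ or $u_2$, is returned to that ePassport. In $\SpecCSF$ the two readers necessarily live in different $\nu ke,km$ blocks, so one reply carries the wrong keys.

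You instead start one reader and two ePassports, and branch on which nonce is given to the reader. This works because a reader of $\SpecCSF$ must emit $\getchallenge$ before any nonce exists, which ties it to one key pair. The duplicator therefore cannot later route $d\,n_j$ to a fresh reader with matching keys, which is exactly the escape that makes \autoref{thm:system-min-interleaving} true.

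\textbf{What each presentation buys.} The paper packages the strategy as a simulation-fragment formula and concludes via \autoref{thm:i-HM}, leaving the check $\SpecCSF \nvDash \phi_{\ref{thm:system-get}}$ to the reader. Your game presentation needs no Hennessy--Milner theorem, but it must carry out that exhaustive analysis of duplicator replies itself. The paper's one-ePassport/two-readers shape has the further advantage of being the one it reuses for the endpoint and bounded styles and for the HP attack $\chi$.

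\textbf{What must be made explicit.} First, your exhaustiveness list covers the reader's input and key scoping. It omits the case where the duplicator answers the final moves $d\,u$ and $\co{c}(w)$, or the reply $\co{c}(u)$, with some component other than the mismatched ePassport. You need to close this case as follows.
\begin{itemize}
\item A freshly unfolded reader can only output $\getchallenge$, which static equivalence separates from the pair $w$.
\item A fresh ePassport cannot output before inputting $\getchallenge$, and the input $d\,u$ blocks it at $\match{x = \getchallenge}$.
\item The other started ePassport, if it shares the reader's keys, fails the nonce check.
\item The committed reader is already spent.
\end{itemize}
This is the second place where $\getchallenge$ is essential: without it, a fresh reader mimics the last input/output of an authenticating ePassport. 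So it should not be left implicit.

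Second, in the BAC variant used here authentication fails silently. There is no error branch; the specification simply has no matching output.

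Third, when you extract a formula you must include equality tests against the constant, e.g.\ $n_1 \neq \getchallenge$, $n_2 \neq \getchallenge$, $u \neq \getchallenge$ and $w \neq \getchallenge$, mirroring the paper's. Without them, $\SpecCSF$ satisfies the bare sequence of diamonds by using fresh readers' $\getchallenge$ outputs (and, for the nonces, keeping real nonces in reserve).
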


The above theorem states that $i$-similarity distinguishes these processes.
The trick is that the initial message acts as a proxy for distinguishing between an existing session and the creation of a new session.
This approach turns out to be sufficient to reveal attacks in the BAC protocol, although it does not make it clear which session each message belongs to.

The 2021 formulation~\cite{Horne2021} uses a fresh 
channel
for each session participant to model explicitly that an attacker can tell the difference between two concurrent sessions each with a separate verifier. Such an attack is demonstrated to be practical, 
\eg  by standing next to the verifier device or observing the signature of the underlying wireless transport protocol session.
This stylistic difference in the formulation of the BAC protocol is again highlighted in red below (compared with \autoref{thm:system-min-interleaving}).
Intuitively, the channel $c$ is new for each participant in each session, and this is being sent out (on the two channels $\reader$ and $\passport$) for anyone that wants to communicate with the respective participant (hence the parametrization of both processes $\Reader$ and $\MainUK$ with the new channel name $c$).

\begin{restatable}[With endpoints]{thm}{systemend}
\label{thm:system-end}
$\SystemUK \nsimi{i} \SpecUK$, where
\[
\begin{array}{rl}
\SystemUK \triangleq & \bang\mathopen{\nu ke,  km.}\bang\left( {\color{red} \nu c.\cout{\reader}{c}.}\Reader({\color{red}c},{\color{red}c},ke,km) \cpar {\color{red} \nu c.\cout{\passport}{c}.}\MainUK({\color{red}c},{\color{red}c},ke,km) \right)
\\[1.7em]
\SpecUK \triangleq & \bang\mathopen{\nu ke,  km.}\left( {\color{red} \nu c.\cout{\reader}{c}.}\Reader({\color{red}c},{\color{red}c},ke,km) \cpar {\color{red} \nu c.\cout{\passport}{c}.}\MainUK({\color{red}c},{\color{red}c},ke,km) \right)
\end{array}
\]
\end{restatable}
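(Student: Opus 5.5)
To show $\SystemUK \nsimi{i} \SpecUK$, I will exhibit a distinguishing formula of the modal logic characterising $i$-similarity. Since similarity is witnessed by a one-sided game, it suffices to find a formula built from diamond modalities and conjunction (the existential fragment) that $\SystemUK$ satisfies and $\SpecUK$ does not. Each frame of the attacker's knowledge is compared up to static equivalence. Equivalently, I can argue directly: suppose a simulation $\mathcal{R}$ contains $(\SystemUK,\SpecUK)$, follow one strategy of the system, and show that every possible response of the specification eventually gets stuck.

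The attack strategy follows the known BAC attack~\cite{Horne2021,Filimonov2019}. First, the system outputs a fresh passport endpoint on $\passport$ and a fresh reader endpoint on $\reader$, both from one inner replication sharing $ke,km$. The attacker relays between them so that a full honest session completes. During this session it records the reader's encrypted and MACed message $m$, which contains the passport's nonce. Second, the system opens a \emph{new} passport endpoint $c'$ on $\passport$ under the \emph{same} keys, via the inner $\bang$. The passport outputs a fresh nonce, and the attacker replays $m$ to $c'$. The passport then takes a branch (MAC check succeeds but nonce check fails, or the error branch), whose observable output differs from what a passport with fresh keys would emit.

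In the specification, each passport endpoint has its own fresh $ke,km$, so the simulating process must match the second $\cout{\passport}{c'}$ with a passport under distinct keys. The attacker's frame will then contain the MAC check as a distinguishing test. In the system, the replayed $m$ verifies under $km$ and yields the "nonce mismatch" behaviour. In the specification, it fails the MAC and yields the "MAC failure" behaviour, or vice versa in the ePassport variant being modelled. The key point is the choice made at the specification side's $\cout{\passport}{c'}$ step, since the specification may only spawn fresh sessions. Because endpoints are explicit fresh channels, the specification cannot reuse an old session's channel to fake the new endpoint. This is exactly why the endpoint style reveals the attack when the minimal style of \autoref{thm:system-min-interleaving} does not.

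The main obstacle is the quantification over all specification responses. The first phase must be matched step by step, and each output on $\passport$ or $\reader$ must be matched by a freshly spawned replicated copy, each with independent keys. I must then show that after the second endpoint is opened, no choice of matching leaves a state whose frame is statically equivalent following the replay. I will handle this by an invariant: in the specification, distinct $\passport$-endpoints always have distinct key pairs. The distinguishing output can then be tested by an equality check on the frame, for instance that the output equals a specified error message versus a MAC failure. This gives a static inequivalence, hence a violated box or diamond, completing the proof.
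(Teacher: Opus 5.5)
\textbf{There is a genuine gap: the observable your strategy relies on does not exist in this model.}

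The prover $\MainUK$ (defined in \autoref{sec:HM}) guards its final output with two plain matches and has no \texttt{else} branch. A failed MAC check and a failed nonce check therefore both end in silent deadlock, and there are no error messages at all. When you replay the old reader message $m$ on the new endpoint $c'$, both sides block. The system's passport passes the MAC check and fails the nonce check. The specification's passport fails the MAC check. Neither outputs anything, so the specification matches your whole trace step for step.

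Static equivalence does not rescue the argument either. The equational theory has no MAC-verification destructor and $km$ is never extruded, so no test in the frame can check a MAC.

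What you describe is the error-message replay attack, which needs a variant with distinguishable errors. A single linear trace is the wrong shape of attack here. Because failures are silent, the specification can pair reader and passport sessions to fit whatever trace you fix. The attack has to exploit the fact that the specification must commit to that pairing when the reader endpoints are opened.

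\textbf{The missing idea is a branching observation made after that commitment.} Open one passport endpoint and two reader endpoints; in the system all three can come from the same inner session and so share $ke, km$. Send the passport's nonce to both readers, collect both responses, and then test both continuations:
\[
\diam{\co{p}(p_1)}\diam{\co{r}(r_1)}\diam{\co{r}(r_2)}\diam{\co{p_1}(nt_1)}\diam{r_1\,nt_1}\diam{r_2\,nt_1}\diam{\co{r_1}(u_1)}\diam{\co{r_2}(u_2)}
\]
followed by $\big(\diam{p_1\,u_1}\diam{\co{p_1}(w)}\ttt \wedge \diam{p_1\,u_2}\diam{\co{p_1}(w)}\ttt\big)$. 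This is the paper's $\phi_{\ref{thm:system-end}}$.

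In $\SpecUK$ each outer session contains exactly one reader and one passport. So the two reader endpoints, fixed when $\co{r}(r_1)$ and $\co{r}(r_2)$ are played, come from distinct sessions with distinct keys, and at most one of them shares keys with $p_1$. Since $p_1$ is a fresh channel, only that passport can input or output on it, and so one of the two conjuncts fails.

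The formula uses only diamonds and conjunction, so it lies in the simulation fragment, and \autoref{thm:i-HM} yields $\SystemUK \nsimi{i} \SpecUK$.

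Your key-distinctness invariant and your remark about fresh endpoints are the right ingredients. They must be applied to two readers against one passport, not to two passports, with the conjunction supplying the branching.
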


Again, $i$-similarity is enough to discover attacks on this variant of BAC.
Since similarity is coarser than bisimilarity, they are certainly not related by $i$-bisimilarity, denoted $\bisimi{i}$.
To present these attacks we will define one modal logic as the testing framework for each of the (bi)similarities we consider, and then give on page~\pageref{proof:systemend} the modal formula that represents the attack (\ie the strategy that the system can execute, but that the specification cannot).

The clash between \autoref{thm:system-min-interleaving} and Theorems~\ref{thm:system-get} and \ref{thm:system-end}
is awkward from a protocol modelling perspective, since $\getchallenge$
is not really part of the cryptographic protocol, exactly like the artificial endpoints from \autoref{thm:system-end}.
We would expect to be able to drop such leading constant messages or fresh channels, considering them to be just
a part of the underlying setup.

Somewhat more curious is that when bounding the number of sessions, an attack can be discovered, as opposed to the formulation of unlinkability in \autoref{thm:system-min-interleaving},
where neither constant messages nor fresh endpoints were employed.
This fact is captured by the theorem below, which allows at most 2 ePassport and 2 reader sessions by dropping the replication operators.
In the case of the specification, both sessions use different keys,
whereas for the system it is possible to use the same keys in multiple prover or verifier sessions.

\begin{restatable}[Bounded]{thm}{systemfinite}
	\label{thm:system-finite}
	$\SystemF
	\nsimi{i}
	\SpecF
	$,
	where
	\[
	\begin{array}{rl}
		\SystemF \triangleq & \mathopen{\nu ke_1, km_1, ke_2, km_2.}\big( \\
		\MoveEqLeft[-2] \left( \Reader(c,d,ke_1,km_1) + \Reader(c,d,ke_2,km_2) \right) \cpar
		\\\MoveEqLeft[-2]
		\left( \MainUK(c,d,ke_1,km_1) + \MainUK(c,d,ke_2,km_2) \right) \cpar\\
		\MoveEqLeft[-2] \left( \Reader(c,d,ke_1,km_1) + \Reader(c,d,ke_2,km_2) \right) \cpar\\
		\MoveEqLeft[-2] \left( \MainUK(c,d,ke_1,km_1) + \MainUK(c,d,ke_2,km_2) \right) \big)\\[1.5em]
		\SpecF \triangleq & \mathopen{\nu ke_1, km_1, ke_2, km_2.}\big( \\
		\MoveEqLeft[-2] \Reader(c,d,ke_1,km_1) \cpar \MainUK(c,d,ke_1,km_1)  \cpar \\
		\MoveEqLeft[-2] \Reader(c,d,ke_2,km_2) \cpar \MainUK(c,d,ke_2,km_2) \big)
	\end{array}
	\]
\end{restatable}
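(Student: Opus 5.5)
The plan is to show that no $i$-simulation can contain the pair $(\SystemF, \SpecF)$. I will do this by exhibiting a distinguishing formula of the (interleaving) modal logic that characterises $i$-similarity: $\SystemF$ satisfies it and $\SpecF$ does not. Soundness of the logic for $\simi{i}$ then gives $\SystemF \nsimi{i} \SpecF$. Equivalently, the formula encodes a winning strategy for the attacker in the simulation game. In that game $\SystemF$ always leads, so the attacker fixes every choice on the system side, while $\SpecF$ must commit, step by step, to which of its two key-pairs answers each visible action.

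The strategy is the replay attack on BAC, adapted to the bounded setting.

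First, on the $\SystemF$ side, I resolve the choices $+$ so that both ePassport components and the first reader component all select $\MainUK(c,d,ke_1,km_1)$ or $\Reader(c,d,ke_1,km_1)$. The whole attack then lives inside the single key-pair $ke_1,km_1$.

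Second, I run one honest session. The first ePassport outputs its nonce $n_1$ on $c$. The attacker forwards it to the reader on $d$. The reader outputs a message $M$ encrypted under $ke_1$ and MAC-ed under $km_1$, which embeds $n_1$. The attacker forwards $M$ to the first ePassport, which completes successfully.

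Third, the second ePassport outputs its nonce $n_2$, and the attacker replays $M$ to it. Since the keys agree, the MAC check passes and only the nonce check fails. The ePassport therefore emits the error specific to a nonce mismatch, rather than the MAC-failure error.

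The formula is the sequence of these labelled transitions, with diamond modalities carrying the appropriate input and output labels. It ends with an output of the nonce-error constant. To state it I will need the concrete definitions of $\MainUK$ and $\Reader$ and their error messages from the definition given later on page~\pageref{eq:def-prover}.

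The proof that $\SpecF$ fails the formula goes by case analysis on how $\SpecF$ can match each step. It has exactly one ePassport and one reader per key-pair, and no choice remains. The successful completion in the second step forces the first ePassport and the reader to use the same key-pair, say $(ke_i,km_i)$. This is because a message built under $(ke_j,km_j)$ with $j\neq i$ cannot pass the MAC check; I will argue this using the equational theory, since the keys are restricted names unknown to the attacker. The second ePassport must then be the remaining one, under $(ke_j,km_j)$ with $j \neq i$. On input $M$ its MAC check fails, so the only observable output is the MAC-failure error, never the nonce error. I also need to rule out that $\SpecF$ matches the second ePassport's nonce output $n_2$ with some other output, for instance a reader's. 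This follows from the shape of the labels: a fresh nonce output by an ePassport versus a ciphertext output by a reader. Static equivalence of the frames at each step has to be checked too, so the attacker's knowledge must stay consistent with the substitutions.

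I expect the main obstacle to be the bookkeeping around names and frames. The semantics tracks extruded names and aliases, so the formula has to refer to $n_1$, $n_2$ and $M$ through aliases, and the case analysis for $\SpecF$ must show that every alternative matching is eventually blocked. I will first reduce this to the key observation that equal MAC keys are required for acceptance. This is essentially the argument used for the attack formula of \autoref{thm:system-end}, so I will reuse its structure. The delicate new point is that $\SpecF$ is forced to commit its key-pairs during the honest session, before the replay reveals the mismatch.
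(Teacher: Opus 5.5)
\textbf{There is a genuine gap: the observation your attack ends on does not exist in this model.} Your overall method is right and is the paper's: exhibit a formula in the simulation fragment that \(\SystemF\) satisfies and \(\SpecF\) does not, then apply \autoref{thm:i-HM}. The concrete replay strategy, however, fails.

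The BAC prover \(\MainUK\) used here (page~\pageref{eq:def-prover}) guards its final output with two matches \(\match{\cdot}\) that have no else-branch. It has no error messages at all: as the paper says, on a malformed response the prover ``does not respond at all''. So when you replay \(M\) to the second ePassport in \(\SystemF\), the nonce check fails and that ePassport simply blocks, exactly as the mismatched ePassport in \(\SpecF\) does. Your formula ends with an output of a nonce-error constant, so it is false on \(\SystemF\) itself. No truncation of it helps, because a failed authentication is unobservable. Even the error-message variant in \autoref{sec:protocols} has a single constant \(\error\), so distinct MAC and nonce errors are never available.

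In a silent-failure model, a distinguishing strategy must be built from \emph{successful} authentications. Two passports sharing a key never produce a success that \(\SpecF\) cannot match.

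\textbf{The missing idea is to put two readers, not two passports, on the key of one passport, and to use the branching of the simulation game.} The paper's formula is
\[
\diam{\co{c}(nt_1)}\diam{\co{c}(nt_2)}\diam{d\,nt_1}\diam{d\,nt_1}\diam{\co{c}(u_1)}\diam{\co{c}(u_2)}\bigl(\diam{d\,u_1}\diam{\co{c}(w)}\ttt \wedge \diam{d\,u_2}\diam{\co{c}(w)}\ttt\bigr).
\]
\(\SystemF\) satisfies it by resolving both readers, and the passport that emits \(nt_1\), to \((ke_1,km_1)\). Either reader response then authenticates that passport.

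In \(\SpecF\) the two readers hold different keys. The passport holding \(nt_1\) therefore accepts at most one of \(u_1,u_2\). The other passport rejects both, because its nonce is \(nt_2\). Hence one conjunct fails.

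Two features of this formula matter:
\begin{itemize}
\item The conjunction is essential. Along a single branch, \(\SpecF\) can arrange for its correctly keyed reader to be the one producing the response that is tested.
\item The early \(\diam{\co{c}(nt_2)}\) exhausts all sessions. The leaves are \(\ttt\), so without it \(\SpecF\) could match the final \(\co{c}(w)\) with an unused passport emitting its fresh nonce.
\end{itemize}

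Finally, the endpoint attack of \autoref{thm:system-end}, whose structure you planned to reuse, is exactly this one-passport, two-reader strategy. It is not a replay distinguished by error messages.
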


We consider this clash between Theorems~\ref{thm:system-min-interleaving} and \ref{thm:system-finite}
difficult to accept.
Moreover, it makes the verification task difficult, if not impossible, as one needs to decide whether to write models using the replication (going for the unbounded case) or to write many versions of the protocol, one for each number of sessions, where the above theorem is only the case for 2.

In \autoref{sec:HM} we will give instantiations for the Verifier and Prover
processes and provide proofs for Theorems \ref{thm:system-get},
\ref{thm:system-end}, and \ref{thm:system-finite}.

\paragraph{Contributions}
In this work, we take two steps in order to obtain a more coherent picture regarding branching-time attacks,
independently of the style of modelling unlinkability.
Firstly, we observe that, while $i$-similarity could not detect attacks on the minimal style employed in \autoref{thm:system-min-interleaving},
if instead we employ $i$-bisimilarity while keeping the style the same, then attacks on BAC can be detected.
This is a contribution of this work, since we are not aware of any formulation of a privacy problem in the literature for which it is proven that bisimilarity is necessary and similarity is insufficient.

Going further, for a more comprehensive methodology,
we show that it is necessary to 
look beyond even $i$-bisimilarity
into the non-interleaving spectrum.
The attack patterns explored in this work, demanding branching-time equivalences,
are not unique to the BAC protocol.
They can be applied across the board to many authentication protocols, such as the more recent PACE protocol for ePassports, as explained in~\cite{Horne2021}.
In this work, we draw attention to a more minimal historical predecessor of the BAC protocol
devised for RFID tags called the Feldhofer protocol~\cite{Feldhofer2004}, where $i$-similarity discovers attacks in line with Theorems~\ref{thm:system-get}--\ref{thm:system-finite},
yet, surprisingly, 
even 
$i$-bisimilarity
cannot discover attacks for a minimal formulation of unlinkability.

Our second contribution to privacy is to show that a non-interleaving equivalence, history-preserving bisimilarity (denoted $\bisimi{HP}$),
is however sufficient to reveal the unlinkability attacks on the Feldhofer protocol without making use of any of the three styles of protocol manipulation that Theorems~\ref{thm:system-get}--\ref{thm:system-finite} employ.
Thus our main aim is to resolve the observed discrepancy between the
various theorems, leading to the following revision of 
\autoref{thm:system-min-interleaving} which states that
history-preserving bisimilarity can indeed distinguish the two
processes without recourse to specific stylistic features.

\begin{restatable}{thm}{systemHP}
\label{thm:min-HP}
$
\SystemMin
\nbisimi{HP}
\SpecMin
$, where $\SystemMin$ and $\SpecMin$ as in \autoref{thm:system-min-interleaving}.
\end{restatable}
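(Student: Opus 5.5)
To show $\SystemMin \nbisimi{HP} \SpecMin$, we will not build a relation. Instead we will exhibit a distinguishing formula in the modal logic characterising HP-bisimilarity (\autoref{fig:modal-2}), which the system satisfies and the specification does not. By the soundness of that logic with respect to HP-bisimilarity, this gives the theorem. The formula will encode a strategy in which the attacker makes two sessions of the same ePassport progress concurrently, and then uses history, namely the causal and concurrency relations between past events, to ask a question about an event already performed.

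The first step is to fix the instantiations of $\Reader$ and $\MainUK$ for BAC from \autoref{sec:HM}. In the system, the attacker starts two prover sessions under one $\nu ke,km$ and interleaves them with one reader. The scenario runs as follows.
\begin{enumerate}
\item A reader outputs its nonce $n_r$.
\item The attacker forwards $n_r$ to prover session $A$, which replies with an encrypted and MACed message $m$.
\item The attacker forwards $m$ to the reader, and the reader accepts.
\end{enumerate}
In the specification, this same trace can be matched, but only by committing to which prover event sits in the same $\nu ke,km$ scope as which reader. The key point is that HP-bisimilarity keeps track of the causal order between matched events. Once the defender has paired each event of the system with an event of the specification, the attacker can go back and exploit an earlier choice. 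For instance, the attacker can replay $m$ to a second concurrent prover session, or to a second reader. In the system this is possible in a way that is concurrent with, or causally dependent on, an already-matched event. In the specification, the matching events must come from different key scopes, so their causal structure differs, or the replayed message fails the MAC check. This is the same branching used in the $i$-bisimulation attacks, but now the choice of which past event to reuse can be made after the fact.

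Concretely, I would write a formula using the event-labelled modalities $\ediam{\cdot}{\cdot}$ and $\eboxm{\cdot}{\cdot}$ of \autoref{fig:modal-2}. Each modality names the new event and the earlier events it depends on. The formula has three parts:
\begin{itemize}
\item diamonds for the system's moves that set up two overlapping sessions;
\item a box, where the defender commits to a matching in $\SpecMin$ (reader and passport in the same scope, or not);
\item a final diamond whose causal constraints (the $\mathrel{I_{\ell}}$ or location-independence annotations) require that an event synchronising on the same key as an earlier, independent event be possible.
\end{itemize}

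The verification then splits on the defender's response in the specification. If the defender opens a fresh $\nu ke,km$ session, the later MAC check fails. If the defender reuses the existing session, the required independence or causal link is violated, because in $\SpecMin$ each key scope contains exactly one prover and one reader. We also need to confirm that the system satisfies the formula, by exhibiting the concrete transitions with their locations.

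The main obstacle is choosing the scenario so that every defender response is refuted by the history constraint. \autoref{thm:system-min-interleaving} shows that interleaving strategies alone always have a response via fresh reader sessions. The formula must therefore force a later event to be causally tied, through a location, to a specific earlier event. That earlier event must be one that the specification matched with a fresh session, and the tie must make that fresh session unable to accept a message computed under the other session's keys. I would first try a scenario with two prover sessions and one reader, and fall back to two readers and one prover if the case analysis leaves a gap.
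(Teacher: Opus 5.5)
Your proposal does not yet contain a proof. No formula is written down, and the parts of the plan you do commit to would not survive the case analysis.

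First, the scenario misreads the model of \autoref{sec:HM}. It is the prover that emits the nonce; the reader only inputs a nonce and answers once, and there is no observable ``accept''. The only witness of a successful authentication is the prover's final output.

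More seriously, the specification's defence is exactly the move you claim fails. When the system's prover receives a correct reply and answers, $\SpecMin$ matches that input and output with a fresh reader. A reader performs no MAC check and answers with a pair statically indistinguishable from the prover's answer; this is the heart of \autoref{thm:sim-proof}. A location tie cannot forbid this. In \autoref{def:hp-sim} the dependent events $\ESS_2$ are discarded, so the prover's nonce output leaves $\ESS$ as soon as a reader inputs $nt_1$. The later input of $u_1$ is then constrained only relative to the retained reader outputs, and a fresh reader's input respects those constraints just as well.

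Nor can the attacker ``go back'' to an earlier choice: the modalities of \autoref{fig:modal-2} only move forward. Your description of the box as the place where the defender commits also has the roles reversed; at a box it is the specification that chooses the transition.

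The missing idea is a change of player after that exchange, as in the formula $\chi$ of \autoref{thm:Feldhofer-attack}, which the paper notes also distinguishes the BAC processes. The strategy is:
\begin{enumerate}
\item one prover nonce, fed to two readers;
\item a conjunction letting the system choose which reply to return to the prover (in $\SpecMin$ at most one of the two readers shares the prover's keys, so feeding the other reply to the real prover blocks);
\item after $\ediam{d\,u_i}{\ell_1}\ediam{\co{c}(w)}{\ell_1}$, the test $\eboxm{d\,z}{\ell_4}\ediam{\co{c}(v)}{\ell_4}\ttt$, with $\ell_4$ independent of all earlier locations and $z$ a dummy message.
\end{enumerate}
In the system the prover has finished, so every independent input of $z$ goes to a reader that answers in the same location. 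In the specification branch that used a fresh reader, the original prover is still waiting; the specification sends $z$ to it, and it blocks.

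Your stated obstacle also rests on a misreading. \autoref{thm:system-min-interleaving} is about $i$-similarity only, whereas $i$-bisimilarity already separates these two BAC processes (\autoref{thm:system-min-silent}). The paper's proof is just that result combined with \autoref{lem:HP-i}: HP-bisimilarity implies $i$-bisimilarity, so no causal information is needed here at all.
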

Furthermore, the above theorem applies
to all protocols we have considered, notably Feldhofer and BAC,
suggesting that HP-bisimilarity is a robust choice of semantic equivalence for 
unlinkability problems. 

The body of this paper is dedicated to introducing and explaining these notions of equivalence
and evaluating them on the examples from this preliminary.
The formal machinery introduced is, of course, a third contribution,
notably we introduce a novel modal logic for describing attacks whenever unlinkability
fails with respect to HP-bisimilarity.

\section{Background: a non-interleaving structural operational semantics}
\label{sec:SOS}

This section reviews the definitions of non-interleaving applied $\pi$-calculus~\cite{Aubert2022e} necessary to understand the current paper, particularly 
how its structural operational semantics supports a notion
of independence (\ie concurrency) that we use to define the
various non-interleaving equivalences studied in this paper.

\subsection{Preliminaries: syntax, located aliases and events}

Notions for manipulating syntax with name binders are standard, hence presented in brief.
The novelties are the notion of \emph{located alias} in \autoref{def:syntax}, which is an alias variable prefixed by a \emph{location} string representing a component in a parallel composition, and \emph{location labels}, which are used to define \emph{events} in \autoref{def:location-and-events}.
In addition, we give our rationale for selecting a minimal notion of structural congruence. 

\begin{figure}
	\begin{multicols}{2}
	\begin{align*}
		\shortintertext{\textsc{Processes:}}
		P, Q \Coloneqq &&& 0 \tag*{deadlock} \\
		\mid &&& \mathopen\nu x. P \tag*{new} \\
		\mid &&& P \cpar Q \tag*{parallel} \\
		\mid &&& G \tag*{guarded process} \\
		\mid &&& \bang P \tag*{replication} \\
		\mid &&& \mathopen{\left[M = N\right]}P \tag*{match} 
	\end{align*}
	
	\begin{align*}
		\shortintertext{\textsc{Guarded processes:}}
		G, H \Coloneqq&&& \cin{M}{x}.P \tag*{input prefix} \\
		\mid &&&  \cout{M}{N}.P \tag*{output prefix} \\
		\mid &&& \mathopen{\left[M = N\right]}G \tag*{match} \\
		\mid &&& G + H \tag*{choice} \\
		\mid &&& \mathopen\nu x. G \tag*{new} 
	\end{align*}
	
	\begin{align*}
		\shortintertext{\rulel{Extended processes:}}
		A, B \Coloneqq &&& \sigma \cpar P \tag*{active process} \\ 
		\mid &&& \mathopen{\nu x.} A\tag*{new}
	\end{align*}
	
	\begin{align*}
		\shortintertext{\textsc{Messages:}  for \(f\in \Sigma\),}
		M, N \Coloneqq &&& x \tag*{variable} \\
		\mid &&&  \alia \tag*{alias} \\
		\mid &&& f(M_1, \hdots, M_n) \tag*{function} 
	\end{align*}
	
	\begin{align*}
		\shortintertext{\textsc{Early action labels:}}
		\pi  \Coloneqq &&&  M\,N\tag*{free input} \\ 
		\mid &&& \co{M}(\alia)\tag*{output} \\
		\mid &&& \tau\tag*{interaction} 
	\end{align*}
\end{multicols}
	\caption{Syntax of extended processes with guarded choices.
	}\label{figure:syntax}
\end{figure}

\begin{defi}[Syntax]
	\label{def:syntax}
All variables $x$, $y$ and $z$ are in the same syntactic category, but are
distinct from \emph{located aliases}, which are ranged over by $\alia$, $\alib$,
$\alic$. A located alias consists of an alias variable, say $\lambda$,
prefixed with a string $\prefix{s} \in \left\{ \prefix{0}, \prefix{1} \right\}^*$,
\ie $\alia = \prefix{s}\lambda$.
Messages, processes, and extended processes are defined according to the grammar in \autoref{figure:syntax}, given a signature of function symbols $\Sigma$, and omitting the deadlock process when guarded by a prefix (\eg we write \(\cout{M}{x}\) for \(\cout{M}{x}.0\)).
Sequences of names $\nu \vec{x}. P$ abbreviate multiple name binders defined inductively such that $\nu \epsilon. P = P$ and $\nu x, \vec{y}. P = \nu x. \nu \vec{y}.P$, where $\epsilon$ is the empty sequence (later on also used to denote the empty string).

\emph{Substitutions} $\sigma$, $\theta$ or $\rho$ are functions on located aliases, whose  
domains and ranges ignore elements mapped to themselves: $\dom{\sigma} = \left\{ \alia \colon \sigma(\alia) \neq \alia \right\}$ and $\ran{\sigma} = \left\{\sigma(\alia) \colon \alia \in \dom{\sigma} \right\}$.
For \eg \(\sigma(\prefix{0}x) = \prefix{1}y\) and \(\sigma (\alib) = \alib\) if \(\alib \neq \prefix{0}x\) we have \(\dom{\sigma} = \{\prefix{0}x\}\) and \(\ran{\sigma} = \{\prefix{1}y\}\) and write \(\sigma = \sub{\prefix{1}y}{\prefix{0}x}\).
The \emph{identity substitution} is denoted \(\id\), composition $\sigma \circ \theta$, and we write substitutions in suffix form, \eg $(x \sigma)\theta = x (\sigma \circ \theta) = \theta(\sigma(x))$.

The substitutions appearing in \emph{extended processes}
are called \emph{active substitutions} and map aliases in their finite domain to messages containing no aliases.
We assume a \emph{normal form} for extended processes, where aliases do not appear in processes.
\end{defi}

Working with applied $\pi$-calculus 
has the advantage of being close to languages used in tools such as ProVerif~\cite{Blanchet2016}, DEEPSEC~\cite{Cheval2019}, Akiss~\cite{Chadha2016}, SAPIC~\cite{Kremer2016}, SAT-Equiv~\cite{Cortier2017} and SPEC~\cite{Tiu2016}.
We however work with guarded choices.

We now define some syntactical notions to manipulate variables, and give some intuitions on how to read messages after \autoref{figure:active}.

\begin{defi}[Freshness, $\alpha$-conversion, etc.]
	A variable \(x\) (\resp an alias \(\alia\)) is \emph{free in a message \(M\)} if \(x \in \fv{M}\) (\resp \(\alia \in \fal{M}\)) for
	\begin{align*}
	\fv{f(M_1, \ldots M_n)} &= \cup_{i = 1}^n \fv{M_i}
	&&&
	\fv{x} &= \left\{ x \right\}
	&&&
	\fv{\alia} &= \emptyset
	\\
	\fal{f(M_1, \ldots M_n)} &= \cup_{i = 1}^n \fal{M_i}
	&&& 
	\fal{x} &= \emptyset
	&&&
	\fal{\alia} &= \left\{ \alia \right\}.
	\end{align*}
	The \(\mathrm{fv}\) function extends in the standard way to (extended) processes, letting $\fv{\nu x.P} = \fv{P}\setminus\left\{x\right\}$
	and $\fv{M(x).P} = \fv{M} \cup \left(\fv{P}\setminus\left\{x\right\}\right)$, and similarly for \(\fv{\nu x.A} = \fv{A}\setminus\left\{x\right\}\) and $\fv{\sigma \cpar P} = \fv{\ran{\sigma}} \cup \fv{P}$.
The functions for free variables and free aliases extend to labels as follows:
\begin{gather*}
\fv{M\,N} = \fv{M}\cup\fv{N} 
\quad
\fv{\co{M}(\alia)} = \fv{M} 
\quad
\fv{\tau} = \emptyset
\\
\fal{M\,N} = \fal{M}\cup\fal{N} 
\quad
\fal{\co{M}(\alia)} = \fal{M} 
\quad
\fal{\tau} = \emptyset.
\end{gather*}

	We say a variable $x$ is \emph{fresh for a message \(M\) (\resp process \(P\), extended process \(A\))}, written \(\isfresh{x}{M}\) (\resp \(\isfresh{x}{P}\), \(\isfresh{x}{A}\)) 	whenever $x \notin \fv{M}$ (\resp \(x \notin \fv{P}\), \(x \notin \fv{A}\)), and similarly for aliases.
	Freshness extends point-wise to lists of entities, \ie $\isfresh{x_1,x_2, \ldots x_m}{M_1, M_2, \ldots, M_n}$, denotes the conjunction of all $\isfresh{x_i}{M_j}$ for all $1 \leq i \leq m$ and $1 \leq j \leq n$.

	We define \emph{$\alpha$-conversion} (denoted $\equiv_\alpha$) for variables only (not aliases, which are fixed \enquote{addresses}) as the least congruence 
	such that,
	whenever $\isfresh{z}{\nu x.P}$, we have
	$\nu x.P \mathrel{\equiv_\alpha} \nu z.(P\sub{x}{z})$
	and $M(x).P \mathrel{\equiv_\alpha} M(z).(P\sub{x}{z})$.
	Similarly, for extended processes, we have the least congruence such that, whenever $\isfresh{z}{\nu x.A}$, we have
	$\nu x.A \mathrel{\equiv_\alpha} \nu z.(A\sub{x}{z})$.
	Restriction is such that $\sigma\mathclose{\restriction_{\vec{\alpha}}}(x) = \sigma(x)$ if $x \in \vec{\alpha}$ and $x$ otherwise.

	\emph{Capture-avoiding substitutions} are defined for processes such that
	\begin{align*}
		(M(x).P)\sigma \mathrel{\equiv_\alpha} M\sigma(z).P\sub{x}{z}\sigma && \text{ and } && (\nu x.P)\sigma \mathrel{\equiv_\alpha} \nu z.P\sub{x}{z}\sigma
	\end{align*} for $\isfresh{z}{\ran{\sigma}, \nu x.P}$.
	For extended processes, it is defined such that 
	\begin{align*}
		(\nu x.A)\mathclose{\sigma} \mathrel{\equiv_\alpha} \nu z.(A\mathclose{(\sub{x}{z} \circ \sigma)}) && \text{ and } && (\theta \cpar P)\sigma = ({\theta \circ \sigma}\mathclose{\restriction_{\dom{\theta}}} \cpar P\sigma)
	\end{align*}
	for $\isfresh{z}{\ran{\sigma}, \nu x.A}$.
	A substitution $\sigma$ extends to labels such that $(M\,N)\sigma = M\sigma\,N\sigma$ and $(M(\alia))\sigma = M\sigma(\alia\sigma)$, and $\tau\sigma = \tau$.
\end{defi}

{
\newcommand{\tablespace}{\\[3.4em]} 
	\begin{figure*}
	\begin{tabular}{c@{\hskip 2.5em} c}
		\begin{prooftree}
			M \mathrel{=_E} K
			\qquad
			\justifies
			\names{\env}{
				\mathopen{\cin{K}{x}.}P 
				\dlts{M\,N}{
					[]
				}
				{\id \cpar P\sub{x}{N}} }
			\using
			\mbox{\textsc{Inp}}
		\end{prooftree}
		&
		\begin{prooftree}
			M \mathrel{=_E} K
			\justifies
			\names{\env}{
				\cout{K}{N}.P 
				\dlts{\co{M}(\aliroot)}{
					[]
				}
				{\sub{\aliroot}{N}} \cpar P }
			\using
			\mbox{\textsc{Out}}
		\end{prooftree}
		\tablespace
		\begin{prooftree}
			\names{\env}{
				P}
			\dlts{\pi}{u}
			\mathopen{\nu \vec{x}.}\left( \sigma \cpar R \right)
			\quad
			\isfresh{\vec{x}}{Q}
			\justifies
			\names{\env}{
				P \cpar Q
				\dlts{\pi}{\prefix{0}u}
				\mathopen{\nu \vec{x}.}\left( \sigma \cpar R \cpar Q \right)
			}
			\using
			\mbox{\textsc{Par-L}}
		\end{prooftree}
		&
		\begin{prooftree}
			\names{\env}{
				Q}
			\dlts{\pi}{u}
			\mathopen{\nu \vec{x}.}\left( \sigma \cpar R \right)
			\quad
			\isfresh{\vec{x}}{P}
			\justifies
			\names{\env}{
				P \cpar Q
				\dlts{\pi}{\prefix{1}u}
				\mathopen{\nu \vec{x}.}\left( \sigma \cpar P \cpar R \right)
			}
			\using
			\mbox{\textsc{Par-R}}
		\end{prooftree}
		\tablespace
		\begin{prooftree}
			\names{\env}{
				P\sub{x}{z}
				\dlts{\pi}{
					u 
				}
				A
			}
			\qquad
			\isfresh{z}{\pi, \nu{x}.{ P }}
			\justifies
			\names{\env}{
				\nu{x}.{ P }
				\dlts{\pi}{
					u 
				}
				\mathopen{\nu{z}.}A
			}
			\using
			\mbox{\rulel{Ext}}
		\end{prooftree}
		&
		\begin{prooftree}
			\names{\env}{
				A
				\dlts{\pi}{
					u 
				}
				B
			}
			\qquad
			\isfresh{x}{{\pi} }
			\justifies
			\names{\env}{
				\nu{x}.{ A }
				\dlts{\pi}{
					u 
				}
				\mathopen{\nu{x}.}B
			}
			\using
			\mbox{\rulel{Res}}
		\end{prooftree}
		\tablespace
		\multicolumn{2}{c}{
			\begin{prooftree}
				\names{\env}{
					P}
				\dlts{\co{M\sigma}(\lambda)}{\prefix{s}[s']} 
				\mathopen{\nu \vec{x}.}\left( {\sub{\lambda}{N}} \cpar Q \right)
				\quad
				\isfresh{\vec{x}}{\ran{\sigma}}
				\quad
				\fal{M} \subseteq \dom{\sigma}
				\quad
				\isfresh{\prefix{s}\lambda}{\dom{\sigma}}
				\justifies
				\names{\env}{
					\sigma \cpar P}
				\dlts{\co{M}(\prefix{s}\lambda)}{\prefix{s}[s']}
				\mathopen{\nu \vec{x}.}\left( \sigma \circ {\sub{\prefix{s}\lambda}{N}} \cpar Q \right)
				\using
				\rulel{Alias-out}
		\end{prooftree}}
		\tablespace
		\multicolumn{2}{c}{
			\begin{prooftree}
				\names{\env}{
					P}
				\dlts{\pi\sigma}{u} 
				\mathopen{\nu \vec{x}.}\left( \id \cpar Q\right)
				\quad
				\isfresh{\vec{x}}{\ran{\sigma}}
				\quad
				\fal{\pi} \subseteq \dom{\sigma}
				\justifies
				\names{\env}{
					\sigma \cpar P}
				\dlts{\pi}{u}
				\mathopen{\nu \vec{x}.}\left( \sigma \cpar Q \right)
				\using
				\rulel{Alias-free}
			\end{prooftree}
		}
		\tablespace
		\multicolumn{2}{c}{
			\begin{tabular}{c@{\hskip 2em} c@{\hskip 2em} c}
				\begin{prooftree}
					\names{\env}{ G } \dlts{\pi}{[t]} A
					\justifies
					\names{\env}{ G + H } \dlts{\pi}{[0t]} A
					\using
					\mbox{\rulel{Sum-L}}
				\end{prooftree}
				&
				\begin{prooftree}
					\names{\env}{ H } \dlts{\pi}{[t]} A
					\justifies
					\names{\env}{ G + H } \dlts{\pi}{[1t]} A
					\using
					\mbox{\rulel{Sum-R}}
				\end{prooftree}
		& 
		\begin{prooftree}
			\names{\env}{ G } \dlts{\pi}{u} A
			\qquad
			M \mathrel{=_E} N
			\justifies
			\names{\env}{ {\mathopen{\left[M=N\right]}{G} }} \dlts{\pi}{u} 
			{A}
			\using
			\mbox{\textsc{Mat}}
		\end{prooftree}
			\end{tabular}
		}
		\tablespace
		\multicolumn{2}{c}{
				\begin{prooftree}
					\names{\env}{  P \cpar \bang P 
						\dlts{\pi}{u}
						A }
					\justifies
					\names{\env}{ \bang P 
						\dlts{\pi}{u}
						A }
					\using
					\mbox{\textsc{Bang}}
				\end{prooftree}
		}
		\tablespace
		\multicolumn{2}{c}{
			\begin{prooftree}
				\names{\env}{ P }
				\dlts{\co{M}(\aliroot)}{\ell_0} 
				\mathopen{\nu \vec{y}.}\left( {\sub{\aliroot}{N}} \cpar P' \right)
				\qquad
				\names{\env}{  Q }
				\dlts{M\,N}{\ell_1}
				\mathopen{\nu \vec{w}.}\left(\id \cpar Q' \right)  
				\qquad
				\isfresh{\vec{y}}{Q}
				\qquad
				\isfresh{\vec{w}}{P, \vec{y}}
				\justifies
				\names{\env}{  P \cpar Q}
				\dlts{\tau}{(\prefix{0}\ell_0, \prefix{1}\ell_1)}
				\mathopen{\nu \vec{y}, \vec{w}.}\left(\id \cpar P' \cpar Q' \right) 
				\using
				\mbox{\rulel{Close-L}}
			\end{prooftree}
		}
		\tablespace
		\multicolumn{2}{c}{
			\begin{prooftree}
				\names{\env}{ P }
				\dlts{M\,N}{\ell_0}
				\mathopen{\nu \vec{y}.} \left(\id \cpar P' \right)  
				\qquad
				\names{\env}{  Q }
				\dlts{\co{M}(\aliroot)}{\ell_1} 
				\mathopen{\nu \vec{w}.}\left(  {\sub{\aliroot}{N}} \cpar Q' \right)
				\qquad
				\isfresh{\vec{w}}{P}
				\qquad
				\isfresh{\vec{y}}{Q, \vec{w}}
				\justifies
				\names{\env}{  P \cpar Q}
				\dlts{\tau}{(\prefix{1}\ell_1, \prefix{0}\ell_0)}
				\mathopen{\nu \vec{y}, \vec{w}.}\left( \id \cpar P' \cpar Q' \right)
				\using
				\mbox{\rulel{Close-R}}
			\end{prooftree}
		}
	\end{tabular}
	
	\caption{
		An \emph{early} non-interleaving structural operational semantics.
	}\label{figure:active}
\end{figure*}
}

\autoref{figure:active} defines the structural operational rules for deriving transitions.
Transitions are early in the sense that for each input there is a separate label for each choice of input message.
Since the LATS models generated by this SOS have extended process terms as the states,
we are ultimately 
interested in transitions between extended processes, and as a result the rules \rulel{Res}, \rulel{Alias-Out}, \rulel{Alias-Free} generally conclude the derivation trees\footnote{%
	Note, however, that we sometimes omit the active substitution when it is \(\id\), and will abuse the notation by writing \(P\) for \(\id \cpar  P\).
	Examples will always spell out the active substitution, for clarity.
}.
The two \rulel{Alias} rules check that private names do not appear directly in the labels and instantiate the aliases. 
The rest of the rules are applied to processes, higher up in the derivation tree.
A major strength of the applied $\pi$-calculus is that our results
hold independently of the choice of \emph{equational theory}, \ie of a set of equations, denoted $E$, that can be applied to messages.
In our examples we use a standard equational theory for pairs and symmetric encryption,
defined as $\fst{\pair{M}{N}} \mathrel{=_E} M$, $\snd{\pair{M}{N}} \mathrel{=_E}
N$, and $\dec{\enc{M}{K}}{K} \mathrel{=_E} M$.

The most important departure from established formulations of SOS for the applied $\pi$-calculus
is that we label transitions with \emph{events} rather than actions.
An event, in addition, indicates the location(s) of the input(s) and output(s) involved in the action.
When denoting transitions, the event is decomposed, with the action appearing above the transition's arrow and the locations below.
\begin{defi}[Location and events]
	\label{def:location-and-events}
	A \emph{location} \(\ell\) is of the form $\prefix{s}[t]$, where $\prefix{s} \in \left\{\prefix{0},\prefix{1}\right\}^*$
	and $t \in \left\{{0},{1}\right\}^*$. 
	If $\prefix{s}$ or \(t\) is empty, we omit it (hence, we write \(\prefix{\epsilon}[\epsilon]\) as \([]\)).
	A \emph{location label} $u$ is either a location \(\ell\) or a pair of locations \((\ell_0, \ell_1)\), and we let $\prefix{c}(\ell_0, \ell_1) = (\prefix{c}\ell_0, \prefix{c}\ell_1)$ for $\prefix{c} \in \left\{\prefix{0},\prefix{1}\right\}$.
An \emph{event} $\event{\pi}{u}$ is a pair consisting of an 
action label $\pi$ 
(as defined in \autoref{figure:syntax}) 
and a location label $u$.
\end{defi}

A \emph{free input} action label $M\,N$ 
in rule~\rulel{Inp}
indicates the message $N$ input on channel $M$.
An \emph{output} action label $\co{M}(\aliroot)$ 
in rule~\rulel{Out}
indicates that on channel $M$ is output a message that gets assigned the alias $\aliroot$, which can then be used to refer to that message in the future by an attacker observing the network.
That alias can also be used to establish a direct communication, where an input and output interact as in rules~\rulel{Close-*}, where the action label $\tau$ is used to denote an internal communication, and the observer cannot intercept the aliases used for this communication.
The only departure compared to the actions of a modern formulation of the applied $\pi$-calculus~\cite{Horne2021,Abadi2018} is the use of \emph{located} aliases.

The first part $\prefix{s}$ of a location $\prefix{s}[t]$ is a location prefix indicating which \rulel{Par} rules are applied,
while the second part $[t]$ indicates which branch is chosen by the \rulel{Sum} rules.
The label unambiguously picks out a specific input or output action in the syntax tree of a process with guarded choices.
When a transition is labelled with a pair of locations, the pair records the locations involved in a
$\tau$-transition, where the first is the location of an output transition
and the second is the location of an input transition that interacts with the output.
Note that both \rulel{Close} rules ensure that the first location is always the output,
which helps make some proofs more syntax directed.

\begin{example}\label{ex:pok}
Consider the process 
 $\id \cpar \nu m, n . (\cout{a}{\pair{m}{n}}\, \cpar \cin{m}{x}.[x = n] \cout{\pok}{\pok})$,
 where the empty active substitution (\eg \(\id\)) indicates that nothing has been output yet.
 Below is the derivation tree for an output transition of this extended process.
 
\noindent
\begin{centering}
\includegraphics[max width=\linewidth-.5em]{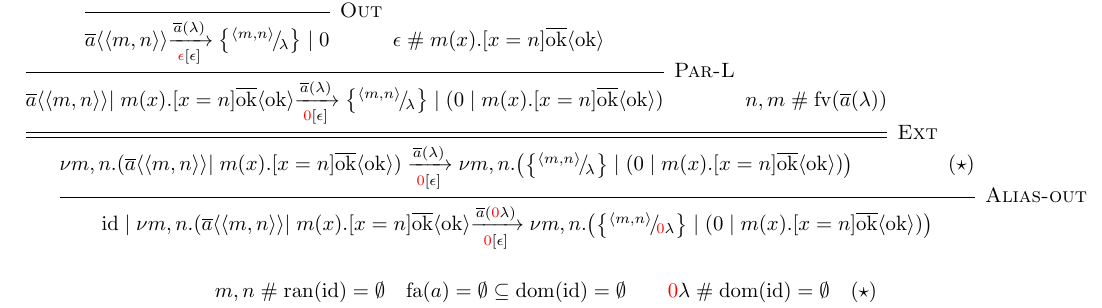}
\end{centering}

For readability, we sometimes add parenthesis around processes and apply the \rulel{Ext} or \rulel{Res} rules \enquote{in batch}, as shown in these example derivation trees (with a double line). 

In the derivation above, the private names \(m\) and \(n\) never occur in the labels, something that is guaranteed by the \rulel{Ext} rules (the \rulel{Res} rule also guarantees this property).
However, an attacker could intercept an output, gain knowledge of the alias  $\prefix{0}\lambda$ and use it in subsequent labels to indirectly refer to the private name \(m\) and \(n\): this \enquote{redirection} is represented using the active substitution $\sub{\prefix{0}\lambda}{\pair{m}{n}}$.
The \rulel{Alias-free} rule switches perspective by instantiating the aliases, \eg in the input transition below
we may refer to the private name $m$
by using the message $\fst{\prefix{0}\lambda}$ on the label (assuming some message theory $E$ featuring equation $\fst{\pair{m}{n}} \mathrel{=_E} m$).

\begin{centering}
	\includegraphics[max width=\linewidth-.3em]{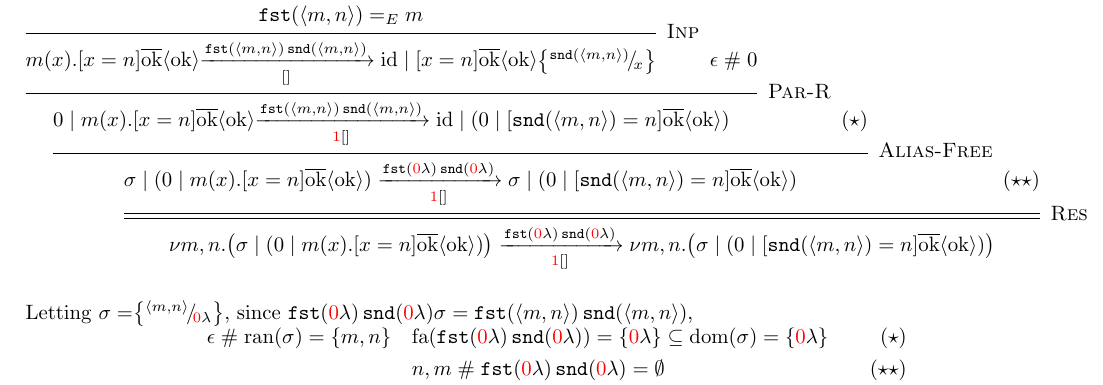}
\end{centering}

This helps explain why input and output prefixes $\cin{M}{x}.P$ and $\cout{M}{N}.P$, respectively, indicate the channel as a term $M$, which is the modern approach to the applied $\pi$-calculus~\cite{Abadi2018}.
\end{example}

Additional examples of transitions, along with their derivation trees, can be found in our companion paper~\cite[Appendix A]{Aubert2022e}.

\subsection{The independence relation}
\label{sec:ind}

A key reason for introducing event-labelled transitions is that it eases defining what it means for two transitions to be independent (\ie their events must be independent).
Independence of events is the key technological innovation
we use to define non-interleaving semantics and so-called concurrency diamonds.

Firstly, we define structural independence (\autoref{def:prefix-location}), which ensures that two independent events must occur in different locations.
This is done by making sure that the location prefixes occurring in both events are not (string) prefixes of each other.

\begin{defi}[Structural independence]
	\label{def:prefix-location}
	The \emph{structural independence}, $\sIndy$, is a 
	relation on location labels
defined as follows, where $\prefix{s} \in \left\{\prefix{0}, \prefix{1}\right\}^*$.
\begin{gather*}
\begin{prooftree}
\justifies
\prefix{s0}\ell_0 \sIndy \prefix{s1}\ell_1
\end{prooftree}
\qquad
\begin{prooftree}
\justifies
\prefix{s1}\ell_0 \sIndy \prefix{s0}\ell_1
\end{prooftree}
\qquad
\begin{prooftree}
\ell_0 \sIndy u
\qquad
\ell_1 \sIndy u
\justifies
\left( \ell_0, \ell_1 \right) \sIndy u
\end{prooftree}
\qquad
\begin{prooftree}
u \sIndy \ell_0
\qquad
u \sIndy \ell_1
\justifies
u \sIndy \left( \ell_0, \ell_1 \right)
\end{prooftree}
\end{gather*}
\end{defi}

\begin{example}
	In
\(
\id \cpar (
{\cout{a}{a}} \cpar \cout{b}{b}.({\cout{c}{c}} \cpar \cout{d}{d}))
\) the locations of the four output events are: \(\prefix{0}, \prefix{1}, \prefix{10}\) and \(\prefix{11}\),  respectively.
The output on channel $a$ is structurally independent from all other outputs.
The output on channel $b$ is not structurally independent with respect to the outputs
on channels $c$ or $d$, both having \prefix{1} as prefix.
However, these two are independent, since neither
\prefix{10} nor \prefix{11} are prefixes of each other.
\end{example}

From the way we have designed our SOS, if an event is a $\tau$-transition, then 
all locations involved must be structurally independent.
But, in addition to structural independence, we also need to detect \emph{link dependence} \ie whether an output influences another event.

\begin{defi}[Independence of events]
	\label{def:independence}
	The independence relation $\Indy$ on events
	is 
	such that 
	$\event{\pi_0}{u_0} \Indy \event{\pi_1}{u_1}$
	whenever:
	$u_0 \mathrel{I_{\ell}} u_1$
	and also,
	if $\pi_i = \co{M}(\alia)$
	then 
	$\isfresh{\alia}{\pi_j}$,
	with $i\neq j \in \{0,1\}$.
\end{defi}

\begin{example}
Observe that \autoref{ex:pok} contains a link dependency.
The two events \(\event{\co{a}(\prefix{0}\lambda)}{\prefix{0}[]}\) and \(\event{\fst{\prefix{0}\lambda}\snd{\prefix{0}\lambda}}{\prefix{1}[]}\) are structurally independent, since $\prefix{0}[] \mathrel{I_{\ell}} \prefix{1}[]$,
but are not link independent, since \(\prefix{0}\lambda\) is not fresh in \(\fst{\prefix{0}\lambda}\,\snd{\prefix{0}\lambda}\). 
\end{example}

Independence is deliberately only well-defined for events that can be enabled in the same state (\ie extended process) or in consecutive states.
To see this observe that the process in \autoref{ex:pok} can make a third transition labelled with event $\event{ \lout{\pok}{\prefix{1}\lambda_1} }{ \prefix{1}[] }$.
That event must happen after 
 \(\event{ \fst{\prefix{0}\lambda}\, \snd{\prefix{0}\lambda} }{ \prefix{1}[] }\),
due to structural dependence, which itself happens after
 \(\event{ \co{a}(\prefix{0}\lambda) }{ \prefix{0}[]}\), due to the link dependency above.
 Yet \(\event{ \co{a}(\prefix{0}\lambda) }{ \prefix{0}[] }\) and
 $\event{ \cout{\pok}{\prefix{1}\lambda_1} }{ \prefix{1}[] }$
 are independent according to \mbox{\autoref{def:independence}}.
However, since these events can never be enabled in consecutive states, this is not a limitation for the theory developed in this work.
Indeed it is a bonus, since independence becomes easy to compute, as we may ignore transitive dependencies.
If transitive dependencies are required, they may be computed by unfolding our LATS (similarly to transition system with independence) into suitable  event structures and Petri nets~\cite{Mukund1992,Joyal1993}.

\section{Interleaving semantics and attack formulas for each style}
\label{sec:interleaving}

This section introduces a formulation of an interleaving semantics
for the applied $\pi$-calculus.
The purpose of this section is to explain in detail how 
interleaving semantics
provide the results for our benchmark unlinkability properties
described in \autoref{sec:tech_preliminry_intro}.
We provide attacks using a modal logic characterising interleaving semantics.
We employ the SOS from \autoref{sec:SOS},
which is capable also of defining non-interleaving semantics,
so that results in this section can be compared more easily to results in subsequent sections.

The section also contains a new result showing that there is a particular formulation 
of unlinkability for the BAC protocol 
where an attack can be
detected using bisimilarity (\autoref{thm:system-min-silent}).
In later sections, we will show that 
the given attack is not detectable using similarity (\autoref{thm:sim-proof}).
Although it is well-known that bismilarity is finer than 
similarity, it may be unexpected that the additional power offered by bisimilarity is necessary to find attacks for some formulations of unlinkability.

\subsection{Interleaving semantics in the presence of located aliases}

Interleaving semantics are covered extensively in related work~\cite{Horne2021,Abadi2018}, and a companion paper explains how they are obtained from our non-interleaving operational semantics~\cite{Aubert2022k}.
Hence, we keep this section brief, as its goal is to define our interleaving similarity (\autoref{def:i-sim}).

All equivalences for the applied $\pi$-calculus make use of \emph{static equivalence}.
Static equivalence ensures that
an attacker cannot distinguish a snapshot of two extended processes
by performing a test using only intercepted messages that have been output in the past,
represented by the aliases \emph{extruded} by outputs.
In an extended process, the active substitution
records the information available to the attacker intercepting outputs on public channels.
The attacker can then combine that information in various ways to see whether the data that it has intercepted passes a test,
\eg hashing the first message and checking whether it is equal to the second message.

We find it insightful to express static equivalence by making use of the following satisfaction relation,
which represents the tests that may be performed.
\begin{defi}[Satisfaction]\label{def:equality}
	A satisfaction relation $\vDash$ between extended processes and messages equalities is defined inductively as follows.
	\begin{gather*}
	\begin{prooftree} 
	A\sub{x}{y} \vDash M = N
	\qquad
	\isfresh{y}{\nu x.A, M, N}
	\justifies
	\nu x.A \vDash M = N
	\end{prooftree}
\qquad
	\begin{prooftree}
	M\mathclose{\theta} \mathrel{=_E} N\mathclose{\theta}
	\justifies
	\theta \cpar P \vDash M = N
	\end{prooftree}
	\end{gather*}
We write \(\nvDash\) for the complement of \(\vDash\).
\end{defi}

The name management above ensures that the private names in an extended process
do not appear directly in $M$ or $N$,
leaving only the possibility of 
using aliases from the domain of the active substitution in $M$ and $N$ to indirectly refer to 
private names. 
In other words, $M$ and $N$ are recipes that must produce the same message, up to the equational theory $E$,
given the information recorded in the active substitution of the extended process.

\begin{example}
As a simple example we have the following.

\noindent\includegraphics[max width=\linewidth]{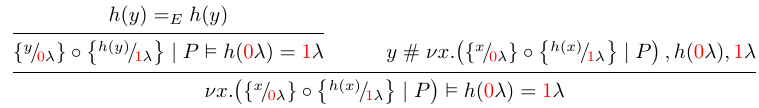}
\end{example}

Now we can make a generic point about all reasonable notions of equivalence based on our structural operational semantics~\cite{Aubert2022e}.
Recall that each alias has a location prefix, allowing each location to have its unique pool of aliases, thus ensuring that the choice of alias is localised and not impacted by choices of aliases made by concurrent threads.
For example, the following process has two transitions, labelled with $\event{ \co{a}(\prefix{0}\lambda) }{ \prefix{0}[] }$ and $\event{ \co{b}(\prefix{1}\lambda) }{ \prefix{1}[] }$: 

\begin{tikzpicture}[baseline=(current  bounding  box.center)]
	\node (orig) {\(\id \cpar \mathopen{\nu x.}\left({\cout{a}{x}} \cpar {\cout{b}{h(x)}}\right)\)};
	\node [above right = .1cm and 1.5cm of orig](left) {\(\mathopen{\nu x.}\left({\sub{\prefix{0}\lambda}{x}} \cpar 0 \cpar {\cout{b}{h(x)}} \right)\)};
	\draw [->] ($(orig.north east) + (-.3, 0)$) .. controls (orig.east |- left) ..
	node[above, sloped, pos=.9]{\(\co{a}(\prefix{0}\lambda)\)}
	node[below, sloped, pos=.9]{\(\prefix{0}[]\)}
	(left.west);
	\node [below right = .1cm and 1.5cm of orig](right) {\(\mathopen{\nu x.}\left({\sub{\prefix{1}\lambda}{h(x)}} \cpar {\cout{a}{x}} \cpar 0 \right)\)};
	\draw [->] ($(orig.south east) + (-.3, 0)$) .. controls (orig.east |- right) ..
	node[above, sloped, pos=.9]{\(\co{b}(\prefix{1}\lambda)\)}
	node[below, sloped, pos=.9]{\(\prefix{1}[]\)}
	(right.west);
\end{tikzpicture}

Clearly, any reasonable semantics should equate the above process with the one below, where the only difference is that the parallel processes $\cout{a}{x}$ and $\cout{b}{h(x)}$ have been permuted (\ie exchanged their locations).

\begin{tikzpicture}[baseline=(current  bounding  box.center)]
	\node (orig) {\(\id \cpar \mathopen{\nu x.}\left({\cout{b}{h(x)}} \cpar {\cout{a}{x}}\right)\)};
	\node [above right = .1cm and 1.5cm of orig](left) {\(\mathopen{\nu x.}\left({\sub{\prefix{1}\lambda}{x}} \cpar {\cout{b}{h(x)}} \cpar 0 \right) \)	};
	\draw [->] ($(orig.north east) + (-.3, 0)$) .. controls (orig.east |- left) ..
	node[above, sloped, pos=.9]{\(\co{a}(\prefix{1}\lambda)\)}
	node[below, sloped, pos=.9]{\(\prefix{1}[]\)}
	(left.west);
	\node [below right = .1cm and 1.5cm of orig](right) {\(\mathopen{\nu x.}\left({\sub{\prefix{0}\lambda}{h(x)}} \cpar 0 \cpar {\cout{a}{x}} \right)\)};
	\draw [->] ($(orig.south east) + (-.3, 0)$) .. controls (orig.east |- right) ..
	node[above, sloped, pos=.9]{\(\co{b}(\prefix{0}\lambda)\)}
	node[below, sloped, pos=.9]{\(\prefix{0}[]\)}
	(right.west);
\end{tikzpicture}

Notice that the events labelling the transitions differ only in the prefix string $\prefix{0}$ or $\prefix{1}$, but that this change impacts the domain of the active substitutions.
Therefore, when defining any notion of equivalence using this operational semantics, we must keep track of a bijection between aliases in the domain of the active substitution of each extended process. This allows for differences in prefixes and makes the particular choice of alias irrelevant when performing equivalence checking.
The following function is convenient to 
pick out the domain of an active substitution.
The domain remembers the set of aliases that have already been extruded.
\begin{defi}
	We extend the domain function to 
	extended processes such that
	$\dom{\mathopen{\nu \vec{x}.}\left( \theta \cpar P \right)} = \dom{\theta}$.
\end{defi}
We make use of bijection $\rho \colon \dom{A} \rightarrow \dom{B}$ 
in all notions of equivalence based on our structural operational semantics.
The following defines the notions of 
\enquote{interleaving similarity} and bisimilarity.
\begin{defi}[Interleaving similarity]\label{def:i-sim}
	A relation $\mathcal{R}$ is an \emph{interleaving similarity} (\emph{$i$-simulation})
	whenever if $A \mathrel{\mathcal{R}^{\rho}} B$,
	such that $\rho \colon \dom{A} \rightarrow \dom{B}$ is a bijection,
	then:
	\begin{itemize}
		\item If $A \dlts{\pi}{u} A'$ and $\pi$ is not an output, then
		there exist $B'$, $u'$, \st 
		$B \dlts{\pi\rho}{u'} B'$, 
		and $A' \mathrel{\mathcal{R}^{\rho}} B'$.
		\item
		If $A \dlts{\co{M}(\alpha)}{u} A'$, 
		 then there exist $B'$, $u'$, and $\alpha'$ \st 
		$B \dlts{\co{M\rho}(\alpha')}{u'} B'$, and $A' \mathrel{\mathcal{R}^{\rho\left[\alpha \mapsto \alpha'\right]}} B'$.
		\item $A \vDash M = N$ iff $B \vDash M\rho = N\rho$.
	\end{itemize}
	We say that an extended process $A$ \emph{is $i$-simulated by} $B$, and write $A \simi{i} B$, whenever
	there exists an $i$-simulation $\mathcal{R}$ such that $A \mathrel{\mathcal{R}^{id}} B$.
	If, in addition, the relation $\mathcal{R}$ is symmetric (defined such that $A \mathrel{\mathcal{R}^{\rho}} B$ iff $B \mathrel{\mathcal{R}^{\rho^{-1}}} A$), then $A$ and $B$ are $i$-bisimilar, written $A \bisimi{i} B$. 
\end{defi}

In the second clause above, notice how the bijection $\rho$ is updated,
where
$\rho\left[ \alpha \mapsto \alpha' \right]  : \dom{A'} \rightarrow \dom{B'}$
is
such that $\rho\left[ \alpha \mapsto \alpha' \right](\alpha) = \alpha'$ and 
otherwise $\rho\left[ \alpha \mapsto \alpha' \right](\beta) = \rho(\beta)$.
Therefore,
since
$\rho : \dom{A} \rightarrow \dom{B}$ is a bijection
and $\isfresh{\alpha}{\dom{A}}$ and $\isfresh{\alpha'}{\dom{B}}$,
as guaranteed by the SOS,
then
$\rho\left[ \alpha \mapsto \alpha' \right]$ is a bijection.
Also, notice that the above definition disregards the locations,
which plays an important role in subsequent sections when non-interleaving semantics are introduced.

\begin{example}	
Now consider again the extended processes examined before the definition
above
\begin{align*}
	\id \cpar \mathopen{\nu x.}\left({\cout{a}{x}} \cpar {\cout{b}{h(x)}}\right) && \text{ and } && \id \cpar \mathopen{\nu x.}\left({\cout{b}{h(x)}} \cpar {\cout{a}{x}}\right)\text{.}
\end{align*}
They are $i$-bisimilar, \ie there exists a symmetric $i$-simulation that relates them.
This bisimulation involves building up a bijection $\rho$
between aliases in each corresponding transition,
such that $\rho \colon \prefix{0}\lambda \mapsto \prefix{1}\lambda$
and
$\rho \colon \prefix{1}\lambda \mapsto \prefix{0}\lambda$.
Observe also that the final states these processes reach
are 
\begin{align*}
A = \mathopen{\nu x.}\left( {\sub{\prefix{0}\lambda}{x}}\circ{\sub{\prefix{1}\lambda}{h(x)}} \cpar 0 \cpar 0 \right)
&& \text{and} &&
B = \mathopen{\nu x.}\left( {\sub{\prefix{1}\lambda}{x}}\circ{\sub{\prefix{0}\lambda}{h(x)}} \cpar 0 \cpar 0 \right)\text{.}
\end{align*}

Since
$A \vDash h(\prefix{0}\lambda) = \prefix{1}\lambda$,
we also want this test to be satisfied by $B$, modulo the alias substitution $\rho$ that has been built by the $i$-bisimilarity, \ie
$B \vDash (h(\prefix{0}\lambda))\rho = (\prefix{1}\lambda)\rho$:
\begin{align*}
	(h(\prefix{0}\lambda)) \rho &= h((\prefix{0}\lambda)\rho) = h (\prefix{1}\lambda)&&& (\prefix{1}\lambda)\rho &= \prefix{0}\lambda\\
\shortintertext{Since we are working with respect to \(B\) and the active substitution \(\sigma = \sub{\prefix{1}\lambda}{x}\circ{\sub{\prefix{0}\lambda}{h(x)}}\) in \(B\) is such that}
(h(\prefix{1}\lambda)) \sigma &= h((\prefix{1}\lambda)\sigma) = h(x) &&& (\prefix{0}\lambda) \sigma &= h(x)
\end{align*}
by definition of satisfaction of equality in \(B\), $B \vDash (h(\prefix{0}\lambda))\rho = (\prefix{1}\lambda)\rho$ indeed holds.

Notice that it is necessary to apply $\rho$ to the messages when checking that
equality tests are preserved, and that it must be applied before the active substitution.
\end{example}

One may ask whether it is possible to simply have a permutation of location prefixes, keeping alias variables the same.
Such an approach would not be sufficiently flexible to capture relations such as those in \autoref{ex:alias_substitution} below.
In these examples observe that proving their relation requires a simulation involving bijections 
mapping two aliases with the same location prefix to two aliases with different location prefixes.
This explains why we employ a bijection between aliases rather than a bijection between locations.
\begin{example}\label{ex:alias_substitution}
Consider the following two interleaving similarities:
\begin{align}
	\id \cpar ( \mathopen{\nu x.}\left({\cout{b}{h(x)}.\cout{a}{x}}\right)) & \simi{i} \id \cpar ( \mathopen{\nu x.}\left({\cout{b}{h(x)}} \cpar {\cout{a}{x}}\right)) \label{example:sim-1}
	\shortintertext{and}
	\id \cpar ( \mathopen{\nu x.}\left({\cout{a}{x}} \cpar {\cout{x}{h(x)}}\right)) & \simi{i} \id \cpar (\mathopen{\nu x.}\left({\cout{a}{x}.\cout{x}{h(x)}}\right)) \label{example:sim-2}
	\text{.}
\end{align}
In both examples, observe that on one side there are two locations, and on the other there is only one.
To prove (\ref{example:sim-1}) we construct a simulation involving a bijection over aliases $\rho$ such that
\begin{align*}
	\rho \colon \lambda \mapsto \prefix{0}\lambda && \text{ and } && \rho \colon \lambda' \mapsto \prefix{1}\lambda
\end{align*}
and (\ref{example:sim-2}) requires a bijection over aliases $\rho$ such that
\begin{align*}
	\rho \colon \prefix{0}\lambda \mapsto \lambda && \text{ and } && \rho \colon \prefix{1}\lambda \mapsto \lambda'\text{.}
\end{align*}

As we can see, aliases containing only the prefix string \(\prefix{\epsilon}\) are mapped to aliases containing the prefix string \(\prefix{0}\) or \(\prefix{1}\) in the first case, while aliases containing the prefix strings \(\prefix{0}\) and \(\prefix{1}\) are both mapped to aliases containing the prefix string \(\prefix{\epsilon}\) in the second case.
\end{example}

\subsection{Describing attacks using modal logics}\label{sec:modal}

In this section, we show how to use modal logics to describe attacks that are detected by interleaving semantics on our benchmark unlinkability problems.
According to \autoref{def:i-sim}, to prove that two extended processes are bisimilar, we need to define a suitable relation over extended processes, indexed by an alias substitution, and check that the relation is a bisimulation.
In contrast, to prove that two extended processes are \emph{not} bisimilar, 
one needs to exhibit a behaviour that one process has but not the other. Modal logics are well suited for modelling behaviours of transition systems, especially in the interleaving case. Therefore, we need to find a modal logic formula that
holds for one process but not for the other; 
we then say that this formula describes an attack.

The modal logic called \emph{classical $\FM$} in~\cite{Horne2021}, characterises interleaving bisimilarity for the 
applied $\pi$-calculus, and was introduced to characterise attacks on BAC.
In this section, we present a formulation of this modal logic that accounts for the locations of aliases.
In later sections, we will adapt this logic to characterise non-interleaving bisimilarities.
Here, we present examples of formulas describing attacks on formulations of the
 BAC protocol.

\begin{figure}
	\begin{multicols}{2}
		\noindent\begin{align*}
			\phi, \psi \Coloneqq &&& \ttt \tag*{top} \\
			\mid &&& M = N \tag*{equality} \\
			\mid &&& \phi \wedge \psi \tag*{conjunction} \\
			\mid &&& \neg \phi \tag*{negation} \\
			\mid &&& \diam{\pi}\phi \tag*{diamond} 
		\end{align*}
		\begin{align*}
			\fff &\triangleq \neg \ttt\\
			M \neq N &\triangleq \neg M = N\\
			\phi \vee \psi &\triangleq \neg\left(\neg \phi \wedge \neg \psi \right) \\
			\phi \yields \psi &\triangleq \neg \phi \vee \psi\\
			\boxm{\pi}\phi &\triangleq \neg \diam{\pi}\neg \phi
		\end{align*}
	\end{multicols}
	\caption{Formulas of $\FM$ (left) and their common abbrevations (right)}\label{fig:FM-formulas}
\end{figure}

\begin{defi}
	\label{def:FM-syntax}
The syntax of $\FM$ is given by the grammar on the left in \autoref{fig:FM-formulas} 
and common abbreviations are given on the right,
where $\pi$ is either of the form
$\tau$, $M\,N$ or $\co{M}(x)$, and
where $M$, $N$ are messages and $x$ is a variable.

Formulas $\diam{\co{M}(x)}\phi$ and
$\boxm{\co{M}(x)}\phi$
bind occurrence of $x$ in $\phi$.
The \emph{simulation} fragment of $\FM$ is the fragment of $\FM$ where negation is applied to equality only.
\end{defi}

Notice that in the logic, compared to the process calculus, a variable rather than 
an alias is used for bound outputs.
This, we will see, is to allow the variable to be instantiated with a specific alias
when evaluating the formula,
as determined by the process.

Inference rules for the satisfaction relation are presented in \autoref{fig:modal-1}.
The semantics for conjunction and negation are standard, whereas for equality we take it from \autoref{def:equality}.
The diamond modality is where creativity was required. 
Observe in \autoref{fig:modal-1} that, below the horizontal line, we define the classical negation of formulas  in terms of failure of satisfaction.
The meaning of the set comprehension in the rule for the failure of diamond is that all branches in the set must be explored.

\begin{figure}
	{
		\newcommand{\tablespace}{\\[3em]} 
		\newcommand{\hypospace}{\hspace{1.5em}} 
		\newcommand{\columnspace}{1.5em} 
		\begin{tabular}{c@{\hskip \columnspace} c}
			\begin{prooftree}
				\justifies
				A \vDash \ttt 
			\end{prooftree}
			& 
			\begin{prooftree} 
				A\sub{x}{y} \vDash \phi
				\hypospace
				\isfresh{y}{\nu x.A, \phi}
				\justifies
				\nu x.A \vDash \phi
			\end{prooftree}
			\tablespace
			\begin{prooftree}
				M\theta \mathrel{=_E} N\theta
				\justifies
				\theta \cpar P \vDash M = N
			\end{prooftree}
			& 
			\begin{prooftree}
				A \dlts{\co{M}(\alpha)}{u} B 
				\hypospace
				B \mathrel{\vDash}
				\phi\sub{x}{\alpha}
				\hypospace
				\isfresh{\alpha}{\phi}
				\justifies
				A \mathrel{\vDash} \diam{\co{M}(x)}\phi
			\end{prooftree}
			\tablespace
			\begin{prooftree}
				A \vDash \phi
				\qquad
				A \vDash \psi
				\justifies
				A \vDash \phi \wedge \psi
			\end{prooftree}
			& 
			\begin{prooftree}
				A \dlts{\pi}{u} B
				\hypospace				
				B \mathrel{\vDash}
				\phi
				\justifies
				A \mathrel{\vDash} \diam{\pi}\phi
									\using
									\mbox{$\pi$ not output}
			\end{prooftree}
			\tablespace
			\hline
			&\hfill \textsc{negation}
			\\[1.4em]
			\begin{tabular}{c@{\hskip \columnspace} c}
				\begin{prooftree}
					A \nvDash \phi
					\justifies
					A \vDash \neg\phi
				\end{prooftree}
				& 
				\begin{prooftree}
					A \vDash \phi
					\justifies
					A \nvDash \neg\phi
				\end{prooftree}
			\end{tabular}
			& 
			\multirowcell{3}{
				\begin{prooftree}
					\left\{
					B \nvDash
					\phi\sub{x}{\alpha}
					\colon
					A \dlts{\co{M}(\alpha)}{u} B, 
					\text{ and }
					\isfresh{\alpha}{\phi}
					\right\}
					\justifies
					A \mathrel{\nvDash} \diam{\co{M}(x)}\phi
				\end{prooftree}
				\\[4em]
				\begin{prooftree}
					\left\{
					B \nvDash
					\phi
					\colon
					A \dlts{\pi}{u} B
					\right\}
					\justifies
					A \mathrel{\nvDash} \diam{\pi}\phi
									\using
									\mbox{$\pi$ not output}
				\end{prooftree}
			}
			\tablespace
			\begin{prooftree}
				M\theta \mathrel{\neq_E} N\theta
				\justifies
				\theta \cpar P \nvDash M = N
			\end{prooftree}	
			& 
			\tablespace
			\begin{tabular}{c@{\hskip \columnspace} c}
				\begin{prooftree}
					A \nvDash \phi
					\justifies
					A \nvDash \phi \wedge \psi
				\end{prooftree}
				&
				\begin{prooftree}
					A \nvDash \psi
					\justifies
					A \nvDash \phi \wedge \psi
				\end{prooftree}
			\end{tabular}
		\end{tabular}
		\caption{Rules for modal logic classical $\FM$, where modalities are labelled with actions.}\label{fig:modal-1}
	}
\end{figure}

In \autoref{fig:modal-1}, we present the obvious standard diamond modality
that can be used to characterise interleaving equivalences.
It is the same as in related work on classical $\FM$~\cite{Horne2021}, except that
when there is an output, we replace the output variable with 
the alias determined by the process, \ie note the $\phi\sub{x}{\alpha}$ in the rules involving the diamond.
While that alias is guaranteed by the structural operational semantics
to be fresh for the extended process $A$, we should check that $\alpha$ was fresh for $\phi$,
since $\phi$ and $A$ can be designed independently of each other.
This allows processes with actions in different locations to satisfy the same formulas, as shown in this example.

\begin{example}\label{ex:commutative}
It holds that  
\begin{align*}
\id \cpar {\cout{a}{c}} \cpar {\cout{b}{h(c)}}
& \vDash
\diam{\lout{a}{x}}\diam{\lout{b}{y}}(h(x) = y)
\shortintertext{and}
\id \cpar {\cout{b}{h(c)}} \cpar {\cout{a}{c}}
&\vDash
\diam{\lout{a}{x}}\diam{\lout{b}{y}}(h(x) = y)\text{,}
\end{align*}
even if each 
of the two $i$-bisimilar 
extended processes must produce different located aliases
when evaluating the modalities.
\end{example}

Since diamond is dual to box, the failure of diamond can be defined directly
by checking for failure in all reachable extended processes, as suggested in \autoref{fig:modal-1}.
Since our transition system is image-finite, there are finitely many branches to 
check.\footnote{
To be more precise, for any $A$ and $\pi$,
the set of $B$ such that $A \dlts{\pi}{u} B$, for any location $u$, quotiented by the congruence generated by $\alpha$-conversion, $\nu x.\nu y.A \equiv \nu y.\nu x.A$ and $\bang P \equiv P \cpar \bang P$
is finite.
When accounting for $A$, $\pi$ and $u$, the stronger \emph{event determinism} holds meaning:
if $A \dlts{\pi}{u} B$ and $A \dlts{\pi}{u} C$ then $B \equiv C$, even without $\bang P \equiv P \cpar \bang P$~\cite{Aubert2022k}.
}

Notice also that the location part of the events 
 is never used in these interleaving semantics.
The located aliases and location labels are however important for our non-interleaving equivalences in later sections,
and for concurrency diamonds required to extend techniques such as partial order reduction to the full applied $\pi$-calculus.

\subsection{Using Hennessy-Milner properties to identify attacks}
\label{sec:HM}

In \autoref{thm:i-HM} below, we exploit the established Hennessy-Milner properties: the first part ensures that bisimilarity holds whenever the same $\FM$ formulas are satisfied; while the second statement ensures that similarity holds whenever formulas in the simulation fragment of $\FM$ (\autoref{def:FM-syntax}) are preserved from left to right.
\begin{thm}\label{thm:i-HM}
The following hold for extended processes $A, B$:
\begin{itemize}
\item 
$A \bisimi{i} B$ iff, we have that, for all $\FM$ formulae $\phi$, $A \vDash \phi$ iff $B \vDash \phi$.
\item 
$A \simi{i} B$ iff, we have that, for all formulae $\phi$ in the simulation fragment of $\FM$, $A \vDash \phi$ implies $B \vDash \phi$.
\end{itemize}
\end{thm}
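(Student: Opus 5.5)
We prove \autoref{thm:i-HM} following the standard Hennessy--Milner scheme, with the modifications needed for the alias bijections $\rho$. Both items have a soundness direction (relation implies formula preservation) and a completeness direction (formula preservation implies relation). The simulation statement is the one-sided version of the bisimulation statement, so we give the argument once and note where negation is used.

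\emph{Soundness.} We prove, by induction on $\phi$, the following claim. If $A \mathrel{\mathcal{R}^{\rho}} B$ for an $i$-simulation $\mathcal{R}$, and $A \vDash \phi$, then $B \vDash \phi\rho$. Here $\rho$ acts on the free aliases of $\phi$, and $\fal{\phi} \subseteq \dom{A}$ is assumed (otherwise we pad $\rho$ by the identity on fresh aliases).
\begin{itemize}
\item The equality case is exactly the third clause of \autoref{def:i-sim}. Its $\nu$-binders are handled by the freshness side condition in \autoref{def:equality}.
\item The conjunction case is immediate.
\item For $\diam{\pi}\phi$ with $\pi$ not an output, we use the first clause. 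We observe that $(\diam{\pi}\phi)\rho = \diam{\pi\rho}(\phi\rho)$.
\item For $\diam{\co{M}(x)}\phi$, suppose $A \dlts{\co{M}(\alpha)}{u} A'$ with $A' \vDash \phi\sub{x}{\alpha}$. The second clause gives $B \dlts{\co{M\rho}(\alpha')}{u'} B'$ with $A' \mathrel{\mathcal{R}^{\rho[\alpha\mapsto\alpha']}} B'$. The induction hypothesis then yields $B' \vDash \phi\sub{x}{\alpha}\rho[\alpha\mapsto\alpha'] = (\phi\rho)\sub{x}{\alpha'}$. This identity uses the freshness of $\alpha$ for $\phi$ and of $\alpha'$ for $\phi\rho$; we obtain the latter by $\alpha$-renaming the bound $x$, or by noting that the SOS yields aliases fresh for $\dom{B}$.
\item For negation, only negated equalities appear in the simulation fragment, and they are handled by the ``iff'' in the static clause. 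For full $\FM$, symmetry of $\mathcal{R}$ lets us apply the claim to $B \mathrel{\mathcal{R}^{\rho^{-1}}} A$, so $B \vDash \phi\rho$ implies $A \vDash \phi$, which gives the negation case.
\end{itemize}
Taking $\rho=\id$ at the root gives the soundness direction of both items.

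\emph{Completeness.} Define $A \mathrel{\mathcal{R}^{\rho}} B$ iff $\rho \colon \dom{A}\to\dom{B}$ is a bijection and, for every formula $\phi$ of the relevant fragment with $\fal{\phi}\subseteq\dom{A}$, $A\vDash\phi$ implies $B\vDash\phi\rho$ (respectively ``iff'' for $\FM$). We check that $\mathcal{R}$ satisfies \autoref{def:i-sim}.
\begin{itemize}
\item The static clause holds because both $M=N$ and $M\neq N$ belong to the simulation fragment.
\item For a transition $A\dlts{\pi}{u}A'$, suppose no $B'$ with $B\dlts{\pi\rho}{u'}B'$ is related to $A'$. By image-finiteness (the footnote after \autoref{fig:modal-1}), there are finitely many such $B'_1,\dots,B'_n$ up to the stated congruence. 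For each $B'_i$ we pick a distinguishing $\phi_i$ with $A'\vDash\phi_i$ and $B'_i\nvDash\phi_i\rho$. Then $A\vDash\diam{\pi}\bigwedge_i\phi_i$ but $B\nvDash(\diam{\pi}\bigwedge_i\phi_i)\rho$, a contradiction. In the simulation fragment the $\phi_i$ never require negation beyond equalities, so the conjunction stays in the fragment.
\item In the bisimulation case, symmetry of $\mathcal{R}$ follows because the definition is closed under $\neg$.
\end{itemize}
The congruence used for image-finiteness must preserve satisfaction. This follows from a routine lemma showing that satisfaction is invariant under $\alpha$-conversion, reordering of $\nu$, and the unfolding $\bang P \equiv P\cpar\bang P$.

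\emph{Main obstacle.} The delicate step is the output case of completeness, because the alias chosen for the output differs from one $B'_i$ to another. We handle it by grouping the $B'_i$ by their emitted alias $\alpha'_i$. Each distinguishing formula $\phi_i$ is taken with $\fal{\phi_i}\subseteq\dom{A}\cup\{\alpha\}$. We then rewrite the combined formula as $\diam{\co{M}(x)}\bigwedge_i\phi_i\sub{\alpha}{x}$, choosing $x$ fresh, so that the single bound variable $x$ is instantiated by whichever alias each $B'_i$ produces. To make this rigorous we need a substitution lemma: $B'\vDash\psi\sub{x}{\alpha'}$ depends only on the bijection $\rho[\alpha\mapsto\alpha']$ and not on the concrete alias names. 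This renaming-invariance of the semantics is what makes located aliases behave like ordinary names, and establishing it is where we expect most of the technical effort.
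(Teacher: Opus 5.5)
Your proposal is correct and follows the paper's own scheme: Theorem~\ref{thm:i-HM} is obtained from the appendix proofs for $\HPFM$ by forgetting the event relations. That scheme is soundness by induction on formulas carrying the alias bijection $\rho$ (with symmetry handling negation), and completeness by showing that formula preservation indexed by $\rho$ (``iff'' for full $\FM$, one-sided for the simulation fragment) is itself a (bi)simulation. Where you depart from the paper you are more careful. The paper picks a single matching $B'$ via event determinism, which is not enough on its own because the matching location is unconstrained. Your image-finiteness plus finite-conjunction argument, and your explicit alias-renaming lemma for the output case, fill precisely the steps the paper compresses into ``in addition $\rho$ is extended''.
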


The proof of this theorem is classic, presented for the applied $\pi$-calculus in related work~\cite{Horne2021}, and also can be derived from the proofs of the more general Theorems in \autoref{app:HM}.

We will now use these results to check the theorems from \autoref{sec:tech_preliminry_intro}
 that establish attacks on our benchmark problems.
For this evaluation, from now until Section~\ref{sec:protocols},
 we fix a version of the BAC protocol.
 The two threads below represent the projections
 of the prover and verifier in the message sequence chart in \autoref{fig:msc}.
\begin{align}
\MainUK(c,d,ke,km) & \triangleq 
\begin{aligned}[t] 
			& \nu nt.\cout{c}{nt}.d(y). \\
			& 
                          \match{ \snd{y} = \mac{\fst{y}, km} }
			  \\
			& \hspace{0.5em} 
                                         \match{ nt = \fst{\snd{\dec{\fst{y}}{ke}}} }
                                         \\
			& \hspace{1em} \nu kt.\clet{m}{\enc{\pair{nt}{\pair{\fst{\dec{\fst{y}}{ke}}}{kt}}}{ke}} \\
			& \hspace{1.5em} \cout{c}{m, \mac{m, km}}
\end{aligned}
\raisetag{7em} 
\tag{Prover}\label{eq:def-prover}\\
\Reader(c,d,ke,km) & \triangleq
\begin{aligned}[t]
  			& d(nt).\nu nr.\nu kr. \\
			& \clet{m}{\enc{\pair{nr}{\pair{nt}{kr}}}{ke}} \\
			& \hspace{1em} \cout{c}{m, \mac{m, km}}
\end{aligned}
\tag{Verifier}\label{eq:def-verifier}
\end{align}
Process $\Reader$, modelling the control flow of a reader,
receives a challenge and sends a response to the challenge along with a MAC.
Process $\MainUK$, modelling the control flow of an ePassport, 
sends a challenge, receives the response from the verifier,
and responds to the verifier if the message received is well-formed.
In this version, if the response received from the verifier
is not well-formed the prover does not respond at all (rather than responding with an error message).
Notice how in both roles $c$ is used for all outputs and $d$ for all input (this feature of the style effectively means that channels play no significant role other than simplifying verification by ruling out $\tau$-transitions).

\begin{figure}
\[
\includegraphics[width=0.9\linewidth]
	{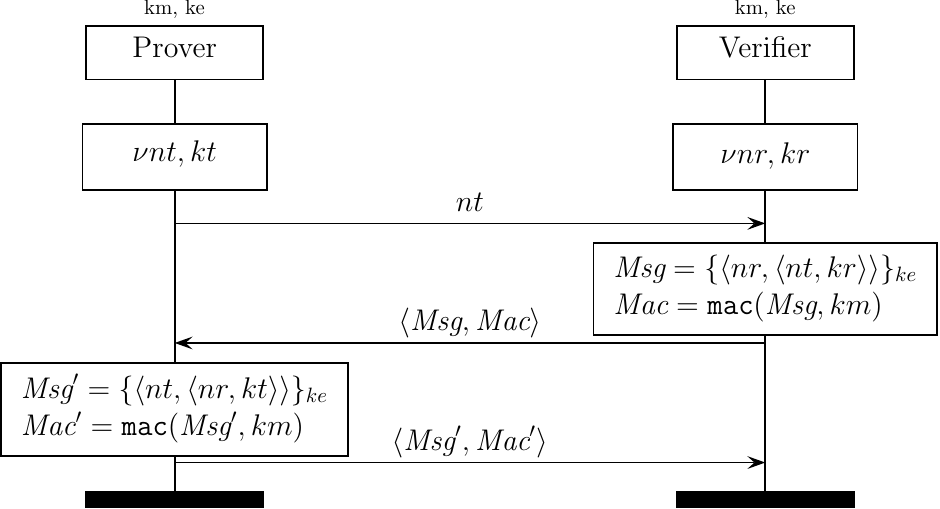}
	\]
\caption{Message sequence chart representing the BAC protocol.}\label{fig:msc}
\end{figure}

We first prove the formulation where a 
constant message $\getchallenge$ 
is sent from the reader to an ePassport
in order to initiate a session.
This corresponds to \autoref{thm:system-get}, as formulated in
\autoref{sec:tech_preliminry_intro}.\footnote{
	All specifications omit the $\id$ active substitution, for the sake of readability.%
}

\systemget*

\begin{proof}
Consider the following formula
\[
\phi_{\ref{thm:system-get}} \triangleq
\begin{array}[t]{l}
\diam{d\,\getchallenge}\diam{\co{c}(g_1)}\diam{\co{c}(g_2)}\Big(
\\
\qquad
\begin{array}[t]{l}
{g_1 = \getchallenge} \wedge {g_2 = \getchallenge}~\wedge
\\
\diam{\co{c}(nt_1)}\diam{d\,nt_1}\diam{d\,nt_1}
\\
\diam{\co{c}(u_1)}\diam{\co{c}(u_2)}\Big(
\\
\qquad
\begin{array}[t]{l}
{nt_1 \neq \getchallenge}
\wedge
{u_1 \neq \getchallenge}
\wedge
{u_2 \neq \getchallenge}
~\wedge
\\
\diam{d\,u_1}\diam{\co{c}(w)}(
w \neq \getchallenge  
  )
~\wedge
\\
\diam{d\,u_2}\diam{\co{c}(w)}(
w \neq \getchallenge  
 )~\Big)~\Big)
\end{array}
\end{array}
\end{array}
\]
Since
$\SystemCSF \vDash \phi_{\ref{thm:system-get}}$ but $\SpecCSF \nvDash \phi_{\ref{thm:system-get}}$, 
and since $\phi_{\ref{thm:system-get}}$ is in the simulation fragment of $\FM$, as negation is only applied to equality, \autoref{thm:i-HM} applies and we get $\SystemCSF \nsimi{i} \SpecCSF$.
\end{proof}

Convincing oneself of the correctness of this proof amounts to verifying that the formula is satisfied by the system but not the specification; this is an exercise in working with our semantic rules for deriving transitions and with the proof rules for our logic.
Intuitively, in this attack the same ePassport is connected to two different readers (on line 3 of the formula). The $\getchallenge$ message allows us to decide whether we are starting (as on line 2) or whether we are in the middle of the computation of the processes (as in the last three lines). The formula fails for $\SpecCSF$ in the last two lines because the two different messages are encrypted with two different keys, but the single ePassport can decrypt only one of them, thus one of the lines will break. This formula holds for $\SystemCSF$ because all this execution is carried inside the same session, with one ePassport sharing the same key with multiple verifiers.

Notice how this proof (as well as all our other proofs based on simulations) uses the modal formula to describe one behaviour that the system has but that the specification does not. Such behaviours can be understood as attacks because of the way the system is set up (compared to the specification) which allows an active\footnote{Notice that the adjective \enquote{active} comes from the power to observe branching points with similarity, as opposed to just trace equivalence.} attacker to distinguish when the same passport is used with two different readers (as in the System).

\begin{remark}
The applied $\pi$-calculus lends itself well to modelling standard Dolev-Yao attackers, who sit on the network intercepting all outputs and manufacturing inputs using those outputs and open knowledge~\cite{dolev83tit}. Thus the channels $c$ and $d$ in this model simply model connections to such a network (in this case a wireless network). Such Dolev-Yao attackers can play actions in orders that go against the expected order of messages in a benign network. The attacker knows already the constant $\getchallenge$, modelled by $\getchallenge$ being a free variable and not a bounded private name, so can inject $\getchallenge$ before it is even sent, as you can see at the beginning of the formula in the proof above. Similarly, once the attacker knows private output $nt_1$, there is nothing stopping the attacker injecting the same message in two different places, even if it was only intended for one of them, as you can see above where the same input message appears twice.
\end{remark}
 
Next, we come back to \autoref{thm:system-end}, as formulated in
\autoref{sec:tech_preliminry_intro}.
In the proof below, the formula
makes use of fresh channels (namely, $p_1$, $r_1$ and $r_2$) to identify
communications from each endpoint, namely one ePassport and two readers, respectively.
Remember, also, that in this instantiation of our problem, the first two variables of $\Reader$ and $\MainUK$ are identical.

\systemend*
\begin{proof}
	\label{proof:systemend}S
Consider the following formula. 
\[
\phi_{\ref{thm:system-end}} \triangleq 
\begin{array}[t]{l}
\diam{\co{p}(p_1)}\diam{\co{r}(r_1)}\diam{\co{r}(r_2)}
\diam{\co{p_1}(nt_1)}\diam{r_1\,nt_1}\diam{r_2\,nt_1}
\\
\diam{\co{r_1}(u_1)}\diam{\co{r_2}(u_2)}\Big(
\quad
\begin{array}[t]{l}
\diam{p_1\,u_1}\diam{\co{p_1}(w)}\ttt
\\
\wedge
\\
\diam{p_1\,u_2}\diam{\co{p_1}(w)}\ttt
~\Big)
\end{array}
\end{array}
\]
Observe that $\SystemUK \vDash \phi_{\ref{thm:system-end}}$ 
but $\SpecUK \nvDash \phi_{\ref{thm:system-end}}$.
Since $\phi_{\ref{thm:system-end}}$ is in the simulation fragment of $\FM$, \autoref{thm:i-HM} applies and we get $\SystemUK \nsimi{i} \SpecUK$.
\end{proof}

The formulas in the above two proofs represent fundamentally
the same strategy. 
The first three diamond modalities initialize the three threads involved, and generate three unique channels. 
The unique channels for each thread are used in the rest of the strategy to tie each actions to a specific thread.
The two conjuncts at the end is again the place where the formula breaks for $\SpecUK$ because different keys are used in the two messages $u_1$ and $u_2$, whereas it holds for $\SystemUK$ because the ePassport and the two readers share the same keys in one session.
The difference compared to \autoref{thm:system-get} is that $\phi_{\ref{thm:system-get}}$ indirectly
determines
 whether an action 
involved a new ePassport or reader session or an existing session,
by making use of $\getchallenge$.
For example, the innermost $w \neq \getchallenge$ indicates that an output could be performed without starting a new session,
which in combination with all other conditions, is sufficient to determine that the ePassport must have successfully authenticated. 
Recall that this is the style of the original formulation of this unlinkability problem~\cite{Arapinis2010},
while the style used in \autoref{thm:system-end} was introduced precisely to avoid such indirect explanations of attacks~\cite{Horne2021}.

Now consider our finite formulation of the same problem
(\autoref{thm:system-finite}
from \autoref{sec:tech_preliminry_intro}),
where neither fresh channels nor a constant message is employed. 

\systemfinite* 
\begin{proof}
Consider the following formula.

\[
\phi_{\ref{thm:system-finite}} \triangleq 
\begin{array}[t]{l}
\diam{\co{c}(nt_1)}\diam{\co{c}(nt_2)}\diam{d\,nt_1}\diam{d\,nt_1} \diam{\co{c}(u_1)}\diam{\co{c}(u_2)}\Big(
	\mkern-3mu
	\begin{array}[t]{l}
\diam{d\,u_1}\diam{\co{c}(w)}\ttt
		\\
		\wedge
		\\
		\diam{d\,u_2}\diam{\co{c}(w)}\ttt
		~\Big)
	\end{array}
\end{array}
\]

Since
$\SystemF \vDash \phi_{\ref{thm:system-finite}}$ but $\SpecF \nvDash \phi_{\ref{thm:system-finite}}$, and since $\phi_{\ref{thm:system-finite}}$ is in the simulation fragment of $\FM$, \autoref{thm:i-HM} applies and we get $\SystemF \nsimi{i} \SpecF$.
\end{proof}

As before, the innermost modality $\diam{\co{c}(w)}$
can only be satisfied if the ePassport successfully authenticates;
\ie both the keys and the nonces have to match.
In this case, we make certain that $\co{c}(w)$ is not an output originating 
from a new session by using (instead of a $\getchallenge$ message) a strategy where all sessions must be used up: that is the purpose of the second modality $\diam{\co{c}(nt_2)}$, which starts also the second ePassport, contrary to the previous two proofs.
The formula fails on $\SpecF$ because, when trying to authenticate, the nonces differ (seen in the modality sequence $\diam{d\,nt_1}\diam{d\,nt_1}$), even if the keys match in the two different sessions. Whereas $\SystemF$ respects this formula because the two conjuncts at the end are essentially the same when instantiating (\ie choosing the branches) with the same keys.

The above strategy assumes the presence of at most two ePassport and reader sessions,
and could be extended to larger bounds on the number of sessions.
However,
notice that the last two actions in our strategy are indistinguishable when performed by a reader compared to an authenticating ePassport.
Therefore, the above strategy does not work for an unbounded number of sessions, as there would always be a new matching output available no matter which strategy formula we use, as these formulas are finite.

\subsection{Attacks requiring the power of bisimilarity}

In the unbounded setting of the BAC protocol, perhaps surprisingly, attacks can be discovered only
if we have at our disposal the full power of bisimilarity.
The attack strategy makes heavy use of the observable difference between
the nonce output by the ePassport and the outputs at later stages,
which are structured as pairs.

\begin{thm}\label{thm:system-min-silent}
$
\SystemMin
\nbisimi{i}
\SpecMin
$, 
where
\begin{align*}
	\SystemMin & \triangleq  \bang\mathopen{\nu ke, km.\bang}\left( \Reader(c,d,ke,km) \cpar \MainUK(c,d,ke,km) \right)
	\\[1.3em]
	\SpecMin & \triangleq  \bang\mathopen{\nu ke, km.}\left( \Reader(c,d,ke,km) \cpar \MainUK(c,d,ke,km) \right)
\end{align*}

\end{thm}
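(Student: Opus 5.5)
By \autoref{thm:system-min-interleaving} we have $\SystemMin \simi{i} \SpecMin$. So any $\FM$ formula separating the two processes must leave the simulation fragment and apply negation (box modalities) to a diamond. The plan is to invoke the first item of \autoref{thm:i-HM}. It suffices to exhibit one $\FM$ formula $\phi$ such that one process satisfies $\phi$ and the other does not; $\SystemMin \nbisimi{i} \SpecMin$ then follows at once.

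\emph{Building the formula.} I would reuse the skeleton of $\phi_{\ref{thm:system-finite}}$. First, $\diam{\co{c}(nt_1)}$ starts an ePassport. Then $\diam{d\,nt_1}\diam{d\,nt_1}$ feeds its nonce to two readers, and $\diam{\co{c}(u_1)}\diam{\co{c}(u_2)}$ collects their responses. In $\SystemMin$ both readers may belong to the same key session as the ePassport. In $\SpecMin$ at most one of them can share the ePassport's keys.

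The difficulty, already noted after the proof of \autoref{thm:system-finite}, is that with unbounded replication an input $d\,u_i$ followed by an output $\co{c}(w)$ can always be matched by a fresh reader. A fresh reader that receives $u_i$ answers with a pair, just as an authenticating ePassport does. I would use two observable facts to remove this ambiguity.
\begin{itemize}
\item Fresh ePassports output nonces, which are not pairs. So the test $w = \pair{\fst{w}}{\snd{w}}$ separates outputs in the first stage from outputs in the later stages.
\item A reader can output only after it has received an input.
\end{itemize}
The formula will therefore take the form
\[
\diam{\co{c}(nt_1)}\diam{d\,nt_1}\diam{d\,nt_1}\diam{\co{c}(u_1)}\diam{\co{c}(u_2)}\Big(\diam{d\,u_1}\,\neg\psi_1 \wedge \diam{d\,u_2}\,\neg\psi_2\Big).
\]
Each $\psi_i$ has the shape $\diam{\co{c}(w)}(\ldots)$ or an input $\diam{d\,\cdot}$ followed by an output. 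It expresses that the state just reached still contains a thread that has received its input but has not yet produced a pair-shaped output. This forces the state in which $d\,u_i$ was consumed by the ePassport, not by a fresh reader. The system player is the one choosing where $d\,u_i$ goes, and he chooses the passport branch in each conjunct. That is exactly where bisimilarity is needed: the system leads at the top level, while the negated subformula makes the specification answer for every state it could have chosen.

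\emph{Verification.} First, check $\SystemMin \vDash \phi$ by a single run: all three threads share keys, and each $d\,u_i$ is routed to the ePassport in its own conjunct. Second, check $\SpecMin \nvDash \phi$ using the failure rules of \autoref{fig:modal-1}. This means enumerating, up to the congruence mentioned in the footnote after \autoref{fig:modal-1}, every way the specification can realise the prefix modalities. In particular one must cover which reader or passport consumed each input, and whether a reader paired with the ePassport produced $u_1$ or $u_2$. For every such branch, one conjunct must fail. The intended reason is the same as for \autoref{thm:system-finite}: a passport input $d\,u_i$ succeeds only when $u_i$ was produced under that passport's keys, and in $\SpecMin$ this can hold for at most one $i$.

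\emph{Main obstacle.} I expect the hard part to be getting $\psi_i$ exactly right, so that in $\SpecMin$ the route ``$d\,u_i$ goes to a fresh reader'' is genuinely excluded while no route is excluded in $\SystemMin$. If a nonce-versus-pair test after one output does not suffice, my fallback is to deepen $\psi_i$. It would feed the output $w$ back as an input and observe whether a further response appears. This exploits the fact that passports respond at most twice while readers respond once.
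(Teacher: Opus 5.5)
\textbf{There is a genuine gap: the distinguishing formula is never given, and the property you ascribe to $\psi_i$ would not separate the processes.} Your framework matches the paper. It applies \autoref{thm:i-HM} to a formula whose prefix is exactly $\diam{\co{c}(nt_1)}\diam{d\,nt_1}\diam{d\,nt_1}\diam{\co{c}(u_1)}\diam{\co{c}(u_2)}$. But without a concrete $\psi_i$, the step $\SpecMin \nvDash \phi$ cannot be checked.

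\textbf{Why your description of $\psi_i$ fails.} Right after $d\,u_i$ is consumed by the original ePassport in $\SystemMin$, that ePassport has received its input and not yet output its pair. So $\neg\psi_i$, read literally, fails on the very branch the system is meant to play.

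\textbf{Why an immediate box does not help.} The real difference between ``the passport consumed $u_i$'' and ``a fresh reader consumed $u_i$'' is that in the second branch the original passport is still waiting for input. You cannot detect this with a negation placed directly after $\diam{d\,u_i}$, such as $\boxm{d\,z}\diam{\co{c}(v)}(\ldots)$. In the fresh-reader branch, that reader's pending pair output is still available. So $\SpecMin$ survives the box even when $z$ is fed to the waiting passport and it blocks.

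\textbf{Your fallback does not help either.} Feeding $w$ back and counting responses gets answered by a fresh reader on both sides.

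\textbf{The missing idea is the order of moves.} The system first plays $\diam{\co{c}(w)}$ with the test $\pair{\fst{w}}{\snd{w}} = w$. This does two things:
\begin{itemize}
\item it rules out the branch where a passport holding the wrong keys consumed $u_i$ and blocked, since then only nonces can be output;
\item it uses up the single pending pair output.
\end{itemize}
Only then does the formula switch player, with $\boxm{d\,z}\diam{\co{c}(v)}\left(\pair{\fst{v}}{\snd{v}} = v\right)$ for a dummy $z$.
\begin{itemize}
\item In $\SystemMin$ every possible consumer of $d\,z$ at that point is a fresh reader, which answers with a pair.
\item In $\SpecMin$, in the fresh-reader branch, the still-waiting passport can swallow $z$ and block. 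After that only fresh-passport nonces can be output.
\end{itemize}

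\textbf{Pair tests on $u_1,u_2$.} The paper also imposes $\pair{\fst{u_j}}{\snd{u_j}} = u_j$ for $j=1,2$ before the conjunction. This forces $u_1$ and $u_2$ to be the responses of two readers that received $nt_1$, so neither $d\,nt_1$ was fed back into the passport. That leaves only the branches above, with at most one reader sharing the passport's keys in $\SpecMin$, which keeps the case analysis for $\SpecMin \nvDash \psi$ finite and clean.
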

\begin{proof}
Consider the following formula.
\begin{align*}
	\psi & \triangleq \diam{\co{c}(nt_1)}\diam{d\,nt_1}\diam{d\,nt_1} \diam{\co{c}(u_1)}\diam{\co{c}(u_2)}\Big( \\
	\MoveEqLeft[-3] \pair{\fst{u_1}}{\snd{u_1}} = u_1 \wedge \pair{\fst{u_2}}{\snd{u_2}} = u_2 ~\wedge \\
	\MoveEqLeft[-3] \diam{d\,u_1}\diam{\co{c}(w)}\big( \pair{\fst{w}}{\snd{w}} = w ~\wedge\\
	\MoveEqLeft[-11] \boxm{d\,z}\diam{\co{c}(v)}\left( \pair{\fst{v}}{\snd{v}} = v \right)~\big)\\
	\MoveEqLeft[-3] \wedge \\
	\MoveEqLeft[-3] \diam{d\,u_2}\diam{\co{c}(w)}\big( \pair{\fst{w}}{\snd{w}} = w ~\wedge \\
	\MoveEqLeft[-11] \boxm{d\,z}\diam{\co{c}(v)}\left( \pair{\fst{v}}{\snd{v}} = v \right)~\Big)
\end{align*}
Since
$\SystemMin \vDash \psi$ but $\SpecMin \nvDash \psi$,
by \autoref{thm:i-HM},
$\SystemMin \nbisimi{i} \SpecMin$.
\end{proof}

We now explain the formula \(\psi\) used in the proof above.
The message equality tests
$\pair{\fst{u_1}}{\snd{u_1}} = u_1$
and
$\pair{\fst{u_2}}{\snd{u_2}} = u_2$
ensure that $u_1$ and $u_2$ are pairs, hence \eg a nonce will fail this test.
Furthermore, these two messages $u_1$ and $u_2$ must be responses to the two inputs
induced by $\diam{d\,nt_1}\diam{d\,nt_1}$,
and must have involved readers: indeed, 
the only other possibility is that one of the input to $\diam{d\,nt_1}$ sent the $nt_1$ received along \(\co{c}\) back to the ePassport, which would block the ePassport,
and make it impossible for both $u_1$ and $u_2$ to be pairs.
So we know that $u_1$ and $u_2$ are responses from readers to the nonce $nt_1$, originating from an ePassport.

Now consider the following subformula.
\[
\diam{d\,u_1}\diam{\co{c}(w)}\big(
\begin{array}[t]{l}
  \pair{\fst{w}}{\snd{w}} = w ~\wedge \\
  \boxm{d\,z}\diam{\co{c}(v)}\left( \pair{\fst{v}}{\snd{v}} = v \right)~\big)
\end{array}
\] 
The diamond modalities
$\diam{d\,u_1}\diam{\co{c}(w)}$
and
$\pair{\fst{w}}{\snd{w}} = w$
indicate that $u_1$, the first response received from a reader, 
is fed as an input into something after which we see an output \(w\) structured as a pair (hence not involving a new ePassport).
Moreover, the test
$\boxm{d\,z}\diam{\co{c}(v)}\left( \pair{\fst{v}}{\snd{v}} = v \right)$
ensures that this sequence of input and output must have involved the 
ePassport that was already running and not a new reader session: this is precisely where \(\SpecMin\) fails.
Notice that both a successfully authenticating ePassport and
reader can produce this sequence.
The difference is that, if the ePassport was used,
then there would only be readers left that can respond to inputs.
But readers always produce a pair as a response to any input,
such as the dummy message $z$ indicated in the formula.
In contrast, if the ePassport was still alive, it would block on
receiving a meaningless message $z$: hence, we can conclude that the ePasseport was not used previously, since that would have prevented the box modality from holding.

The above strategy is surprising and non-trivial.
However, in reality, it expresses the same attack as all other formulas, just in a roundabout way.
By checking whether certain messages are pairs the strategy can determine whether a new session starts and also whether an old session has completed.
Simulation is not enough, since bisimulation is required to check that certain steps are impossible,
upon reaching a state where the ePassport has been used in a successful authentication attempt.
This is a novel observation, since all violations of unlinkability previously reported in literature required at most $i$-similarity 
in order to reveal them,
and never the additional capabilities of $i$-bisimilarity.

\subsection{Interleaving semantics and Feldhofer}\label{sec:meaningless}

The protocol we now consider is a simpler predecessor to the BAC protocol, proposed by Feldhofer et al.~\cite{Feldhofer2004}.
The flow of the Feldhofer protocol is similar to BAC except that no MAC appears.
We also, for now, assume authentication fails silently, as for the variant of the BAC protocol discussed previously.
We focus on the minimal unbounded formulation of unlinkability.
With respect to the other formulations discussed in \autoref{sec:tech_preliminry_intro} (involving a bounded number of sessions, a constant $\getchallenge$ or an endpoint), 
the same attack strategies as expressed in formulas $\phi_{\ref{thm:system-get}}$, $\phi_{\ref{thm:system-end}}$ and 
$\phi_{\ref{thm:system-finite}}$ 
of Theorems~\ref{thm:system-get}, \ref{thm:system-end} and 
\ref{thm:system-finite}, respectively,
also apply to the Feldhofer protocol.
Thus there is no new insight offered by considering Feldhofer with respect to most styles considered in \autoref{sec:tech_preliminry_intro}.
This, as we will explain next, is surprising, considering differences between the unlinkability analysis of 
Feldhofer and BAC for the minimal unbounded style of formulating unlinkability explained in what follows.
We also explain here how the formal evidence we present further reinforces our case for adopting HP semantics in the subsequent section.

\begin{figure}
\[
\includegraphics[width=0.8\linewidth]
	{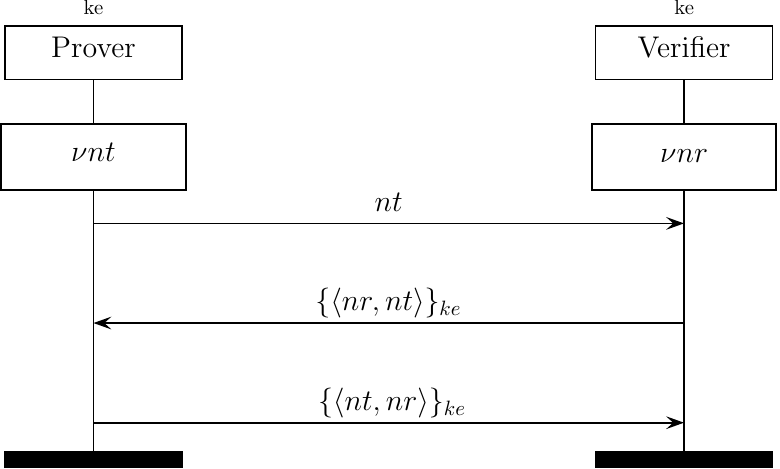}
\]
\caption{Message sequence chart representing the Feldhofer protocol.}\label{fig:msc-feld}
\end{figure}

The prover and verifier of the Feldhofer protocol are defined by the following processes:
\label{feldhoferf}

\begin{align*}
	\MainUK^{F}(c,d,k) & \triangleq
	\begin{array}[t]{l}
		\nu nt.\cout{c}{nt}.d(y). \\
		\match{nt = \snd{\dec{y}{k}}}
		\\
		\hspace{1em} \cout{c}{\enc{\pair{nt}{\fst{\dec{y}{k}}}}{k}}
	\end{array}
&& 
	\Reader^{F}(c,d,k) & \triangleq
	\begin{array}[t]{l}
		d(nt).\nu nr. \\
		\cout{c}{\enc{\pair{nr}{nt}}{k}}
	\end{array}
\end{align*}

Now, since $i$-bisimilarity was sufficient to find attacks on all variants of unlinkability studied
for the BAC protocol,
and since the same attacks apply to Feldhofer,  
we na{\i}vely expect that $i$-bisimilarity is sufficient to find attacks in all styles.
This expectation is in fact incorrect.
We address this concern formally by proving
that, with the power of $i$-bisimilarity, it is impossible to find an attack using the formulation of unlinkability and the Feldhofer protocol under consideration.
\begin{thm}\label{thm:bisim-proof}
$\SystemMin^{F} \bisimi{i} \SpecMin^{F}$, where
\[
\begin{array}{rl}
	\SystemMin^F \triangleq & \bang\mathopen{\nu k.\bang}\left( \Reader^F(c,d,k) \cpar  \MainUK^F(c,d,k) \right)
	\\[1.6em]
	\SpecMin^F \triangleq & \bang\mathopen{\nu k.}\left( \Reader^F(c,d,k) \cpar  \MainUK^F(c,d,k) \right)
\end{array}
\]
\end{thm}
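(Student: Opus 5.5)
We will prove the statement directly from \autoref{def:i-sim}, by building a symmetric relation $\mathcal{R}$ indexed by alias bijections $\rho$ that contains $(\SystemMin^F, \SpecMin^F)$ with $\rho = \id$, and then checking the three clauses in both directions. The relation is described by classifying the reachable states. Every reachable state of $\SystemMin^F$ is a pool of key groups $\nu k$. Each group contains finitely many prover and verifier threads using $k$, each at some stage: a prover may be fresh, waiting for input having output $nt$, finished after a successful match, or blocked after a failed match (deadlocked, so it is simply dropped). A verifier may be fresh or done. The system additionally has the replicated $\bang(\Reader^F \cpar \MainUK^F)$ inside each group. The corresponding state of $\SpecMin^F$ is a pool in which every key group has at most one prover and one verifier. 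We relate a system state to a specification state whenever there is a \enquote{splitting} of the system groups into specification groups that preserves two things. First, the frame: the active substitutions are equal up to $\rho$ and up to a renaming of keys and nonces that respects which outputs share a key. Second, the \emph{pending obligations}. A prover waiting with nonce $nt$ under key $k$ can only succeed on an input $N$ with $\snd{\dec{N}{k}} =_E nt$. Such inputs are recipes built from intercepted verifier outputs $\enc{\pair{nr}{nt}}{k}$ with the same key, or they are constructible only if $k$ is known, which never happens.

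The core argument is a matching strategy. A system verifier that answers a nonce is matched in the specification by a \emph{fresh} session's verifier, since $\bang$ always provides a new key. The resulting ciphertext $\enc{\pair{nr}{nt}}{k'}$ under a fresh $k'$ and fresh $nr$ is statically indistinguishable from the system's ciphertext. Feldhofer has no MAC, and encryptions with unknown keys are opaque, so equality tests cannot link two ciphertexts to a shared key unless a decryption succeeds. A successful decryption becomes observable only through a prover's response. The key insight to establish is the following. In the system, a waiting prover can accept at most one input, and the only accepted inputs are verifier outputs on its own nonce. Those outputs are each produced by a verifier that received exactly that nonce. Therefore, whenever the system prover accepts $u_i$ (a response to $nt$), the specification can arrange that the verifier which produced the corresponding $u_i'$ was the one \emph{paired} with the prover that output $nt'$. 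The difficulty is that the pairing must be decided lazily: when several system verifiers answer the same nonce $nt_1$, the specification has committed to fresh keys for each. To deal with this, we strengthen the relation so that the specification's prover emitting $nt$ is \emph{not} yet tied to a verifier. Instead, we exploit the fact that a specification verifier from the prover's own session can equally answer $nt$. The relation keeps track of an injective \enquote{reservation} for the prover of each pending nonce, and the prover's success in the system is simulated by choosing that reservation.

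The symmetric direction, simulating the specification by the system, is easy, since every specification session is a system session with the inner replication used once. We will rely on the same static-equivalence lemma for frames in which each ciphertext has an independent fresh key. Here lies the main obstacle. We must prove static equivalence of the two frames: the system frame has several ciphertexts under a shared key $k$ together with a prover's final output $\enc{\pair{nt}{nr}}{k}$, while the specification frame uses distinct keys. The subtle case is the system, where after a successful authentication the prover's output $\enc{\pair{nt}{nr}}{k}$ and the other verifiers' ciphertexts $\enc{\pair{nr_j}{nt}}{k}$ share a key. No recipe can exploit this without $k$, but the reservation choice must make the specification produce the same shape of output, namely a prover output under the same key as exactly one earlier ciphertext. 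That is harmless because key equality is unobservable. We will prove this static equivalence by a standard normal-form argument on recipes for the convergent theory of pairs and symmetric encryption: every recipe rewrites to one containing no $\mathsf{dec}$ with an unknown key, so equalities reduce to syntactic equalities of the frames' pair structure, which coincide. The remaining work checks that all transitions, including inputs of arbitrary junk to provers (blocking in both processes) and outputs of new nonces, preserve the invariant. This is routine but lengthy. No $\tau$ transitions arise, since $c \neq d$.
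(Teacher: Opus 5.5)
Your proposal has a genuine gap in its core matching step. In an $i$-bisimulation game the specification must commit to a concrete output, under a concrete key, at the moment a system verifier answers. From then on $\rho$ fixes which specification alias stands for that output, so there is no way to \enquote{decide the pairing lazily}.

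Take the play of $\chi$ in \autoref{thm:Feldhofer-attack}. The system prover outputs $nt_1$, and two system verifiers sharing its key both input $nt_1$ and output $u_1$ and $u_2$. Of the two specification verifiers used to answer, at most one belongs to the session of the specification prover that emitted the nonce matched with $nt_1$. So at least one of the specification ciphertexts corresponding to $u_1$, $u_2$ under $\rho$ is encrypted under a key that prover does not hold. The system, having seen your choice, feeds exactly that response to its own prover and then outputs. If the specification follows with its prover, the guard $nt = \snd{\dec{y}{k}}$ fails, the prover blocks, and the output cannot be matched.

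Hence the claim that the specification \enquote{can arrange that the verifier which produced $u_i'$ was the one paired with the prover} is false, and reserving one response per pending nonce does not repair it. The missing idea is that the specification should not use its prover here at all. It absorbs that input by starting a fresh verifier session, whose output is a ciphertext statically indistinguishable from the prover's final output. This is the same trick the paper uses in the proof of \autoref{thm:sim-proof}, and with it the reservations become unnecessary.

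Your claim that the converse direction is \enquote{easy} also fails for a bisimulation. The relation must be a single symmetric relation, so the specification also leads from the diverged pairs. After the move above, the specification prover is still waiting and can authenticate using the response of its own verifier, while the system prover is spent; the system must then answer with a fresh verifier session. Noting that each specification session is a system session with the inner replication used once only covers the undiverged pairs.

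Likewise, your frame invariant (equality up to a renaming respecting which outputs share a key) contradicts your own strategy of matching same-key system verifiers with fresh-key specification verifiers. What you actually need is static equivalence. Your normal-form argument does provide it: ciphertexts under never-disclosed keys behave like fresh names.

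Taken seriously, this makes all the session bookkeeping unnecessary, which is exactly the paper's route. Every output of either process is indistinguishable from a fresh nonce, any input can be absorbed by a new verifier, and any output can be answered by a new prover's nonce. So both $\SystemMin^{F}$ and $\SpecMin^{F}$ are $i$-bisimilar to $\bang\nu n.\cout{c}{n} \cpar \bang d(x)$, and the theorem follows by transitivity.
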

\begin{proof}
We establish that
\begin{align*}
\SystemMin^{F} \bisimi{i} \bang\nu n.\cout{c}{n} \cpar \bang d(x)
&& \text{ and } &&
\bang\nu n.\cout{c}{n} \cpar \bang d(x)
 \bisimi{i} \SpecMin^{F}
\end{align*}
both hold.
In both cases, when 
$\bang\nu n.\cout{c}{n} \cpar \bang d(x)$
leads, with an output or input action, we start any new RFID tag or reader thread respectively
in the other process.
When $\SystemMin^{F}$ or $\SpecMin^{F}$ leads, either an input or output action is played
that can be matched by starting a new process on the RHS that performs only that input or output.

To be more precise about 
$\bang\nu n.\cout{c}{n} \cpar \bang d(x) \bisimi{i} \SpecMin^{F}$,
we construct a bisimulation relation which on one side covers the execution space of $\SpecMin^{F}$
and the other matches the number of outputs performed with nonces.
Thus a bisimulation relation on which the game described above can be played can be defined as the symmetric closure of the relation below, where the process on the left below is quotiented by congruence $0 \cpar P \equiv P$ and $P \cpar \bang P \equiv \bang P$.
\[
\mathopen{\nu \vec{y}.}\left(\sigma \cpar \bang\nu n.\cout{c}{n} \cpar \bang d(x)\right)
\mathrel{\mathcal{R}^{\rho}}
\mathopen{\nu \vec{x}.}\left( \theta \cpar \left( \left(T_1 \cpar R_1\right) \cpar \ldots  
 \left(T_n \cpar R_n\right) \cpar \SpecMin^F \right)\right) 
\]
The above is defined with respect to two arbitrary partitionings of $\left\{1,\ldots n\right\}$, 
$\phi^0, \phi^1, \phi^2, \phi^3$ and $\psi^0, \psi^1, \psi^2$ representing the various stages of execution of each thread. In the above, $T_i$ and $R_i$ are defined as follows, for any messages $Y_i$, $NT_i$.
\[
T_i = \left\{
\begin{array}{lr}
\MainUK^{F}(c,d,k_i)
 & \mbox{if $i \in \phi^0$}
\\
d(y).\match{nt_i = \snd{\dec{y}{k_i}}}
\cout{c}{\enc{\pair{nt_i}{\fst{\dec{y}{k_i}}}}{k_i}}
 & \mbox{if $i \in \phi^1$}
\\
\match{nt_i = \snd{\dec{Y_i}{k_i}}}
\cout{c}{\enc{\pair{nt_i}{\fst{\dec{Y_i}{k_i}}}}{k_i}}
 & \mbox{if $i \in \phi^2$}
\\
0
 & \mbox{if $i \in \phi^3$}
\end{array}
\right.
\]
\[
R_i = \left\{
\begin{array}{lr}
\Reader^{F}(c,d,k_i)
 & \mbox{if $i \in \psi^0$}
\\
\nu nr.\cout{c}{\enc{\pair{nr_i}{NT_i}}{k_i}}
 & \mbox{if $i \in \psi^1$}
\\
0
 & \mbox{if $i \in \psi^2$}
\end{array}
\right.
\]
Names $\vec{x}$ are such that if $i \in \phi^1 \cup \phi^2 \cup \phi^3$ then $k_i, nt_i \in \vec{x}$,
if $i \in \psi_1 \cup \psi_2$ then $k_i \in \vec{x}$, and if $i \in \psi_2$ then $nr_i \in \vec{x}$
and each name is unique.
Substituition $\theta$ is defined such that
$\alia^0_i\theta =
\enc{\pair{nr_i}{NT_i}}{k_i}$
whenever
$i \in \psi^2$,
$\alia^1_i\theta =
nt_i$
whenever
$i \in \phi^1 \cup \phi^2 \cup \phi^3$,
$\alia^2_i\theta =
\enc{\pair{nt_i}{\fst{\dec{Y_i}{k_i}}}}{k_i}$
whenever
$i \in \phi^3$,
and 
$\alia \theta = \alia$ otherwise.
Substitution $\rho$ is a bijection between aliases.
Substitution $\sigma$ is injective and if $\alia \in \dom{\theta}$ then $\left(\alia\rho\right)\sigma$ is a variable.
Furthermore, $\vec{y} = \ran{\sigma}$, that is, every ciphertext or nonce in $\theta$ corresponds to a unique nonce in $\sigma$.

A similar construction covering the execution space of $\SystemMin^{F}$ defines a bisimulation establishing
$\SystemMin^{F} \bisimi{i} \bang\nu n.\cout{c}{n} \cpar \bang d(x)$.

From
$\SystemMin^{F} \bisimi{i} \bang\nu n.\cout{c}{n} \cpar \bang d(x)$
and
$\bang\nu n.\cout{c}{n} \cpar \bang d(x) \bisimi{i} \SpecMin^{F}$,
the result then follows
 by transitivity of $i$-bisimulation.
\end{proof}
The bisimulation in the above proof overfits the problem by allowing
inputs of information from the future and also covering some states
that cannot be reached, e.g., when there are not enough $0$'s or if a guard fails. However, since the initial state is covered and even the extra processes do not break the game described previously, the relation is a bisimulation witnessing the bisimilarity instance.

The proof suggests that interleaving semantics do not always provide a reasonable model of unlinkability, 
since the model effectively collapses and no information about the protocol is retained in the model.
This is a compelling reason to adopt a tighter semantics
such as HP-bisimilarity, which we introduce in the next section, rather than $i$-similarity.

\section{History-preserving semantics uncover attacks, regardless of style}
\label{sec:HP}

From the previous section, one might conclude that $i$-bisimilarity is sufficient to find attacks on some but not all protocols.
In particular, for a quite reasonable style of modelling unlinkability
protocols, such as Feldhofer, where 
all outputs are of the same form,
where $i$-bisimilarity misses attacks that
are evident for other models of unlinkability considered.
We elaborate on and address this observation in this section
by showing that a still finer, non-interleaving notion of bisimilarity
does not miss attacks regardless of the stylistic choices made
when modelling unlinkability.
This has the advantage that the semantics can work harder for the programmer,
giving them more liberty to choose the style in which they model unlinkability without missing attacks.

Towards this aim,
we introduce history-preserving bisimilarity (abbreviated HP-bisimilarity)
for the applied $\pi$-calculus,
and substantiate formally the above claim that it handles appropriately unlinkability problems.
To formalise attacks, we develop a logical characterisation of
HP-bisimilarity using a refinement of the modal logic classical $\FM$ from \autoref{sec:modal}.

\subsection{History-preserving bisimilarity: tighter semantics for tighter protocols}

History-preserving semantics take into account the partial order of causal dependencies between events.
We make use of a device to pair events that fire 
from the two extended processes compared.
In this work, this is done using a relation over events.
We define some simple auxiliary functions to work with relations and sets of events. 
\begin{defi}[Auxiliary functions]
	Given a relation $\ESS$ over events,
	we write $\dom{\ESS}$ and $\ran{\ESS}$ the sets of events forming the domain and range of $\ESS$, respectively.
	Given an event $e$ and set of events $E$ we write $e \Indy E$ whenever for all $e' \in E$, $e \Indy e'$ holds, and \(e \notIndy E\) if for all \(e' \in E\), \(e \Indy e'\) does not hold\footnote{The careful reader will have noted that the negation of $e \Indy E$ is \emph{not} \(e \notIndy E\): in the former case, it suffices to find a single event in \(E\) that is not independent with \(e\).}. 
\end{defi}

HP semantics preserve independence and also dependence at every step made by both processes.
Technically, this is achieved by partitioning the relation representing concurrently started events $\ESS$ according to the firing event $\event{\pi}{u}$ into $\ESS_1$ containing the events that are independent of the current event (\ie $\event{\pi}{u} \Indy \dom{\ESS_1} $) and $\ESS_2$ containing the events that are not independent (\ie $\event{\pi}{u} \notIndy \dom{\ESS_2}$).
Thus, $\dom{\ESS_2}$ is the minimal set of events that must have terminated before the new event can proceed.
This partitioning must be reflected by the matching transition on the right, thereby preserving both independence and dependence.
Since only the independent events and the new event are retained at the next step, the relation over events always consists of independent events.
The differences between $i$-similarity (\autoref{def:i-sim}) and HP-similarity are
\diff{highlighted} below.

\begin{defi}[HP-similarity]\label{def:hp-sim}
	A relation $\mathcal{R}$ is an \emph{HP-simulation}
	whenever if $A \mathrel{\mathcal{R}^{\rho, \diff{\ESS_1 \cup {\ESS_2}}}} B$,
	such that $\rho \colon \dom{A} \rightarrow \dom{B}$ is a bijection and \diff{$\ESS_1 \cup {\ESS_2}$ is a relation over events}, then:
	\begin{itemize}
		\item If $A \dlts{\pi}{u} A'$ and $\pi$ is not an output, \diff{$\event{ \pi }{ u } \Indy \dom{\ESS_1}$} and 
		\diff{$\event{ \pi }{ u } \notIndy \dom{\ESS_2}$}, then 
		there exist $B'$, $u'$, \st 
		$B \dlts{\pi\rho}{u'} B'$, 
		\diff{$\event{ \pi\rho }{ u' } \Indy \ran{\ESS_1}$},
		$\diff{\event{ \pi\rho }{ u' } \notIndy \ran{\ESS_2}}$,
		and	
		$A' \mathrel{\mathcal{R}^{\rho, \diff{\ESS_1 \cup {\left\{\left( \event{\pi }{ u } , \event{ \pi\rho }{ u' 		}\right)\right\}}}}} B'$.
 		\item
		If $A \dlts{\co{M}(\alpha)}{u} A'$, \diff{$\event{ \co{M}(\alpha) }{ u } \Indy \dom{\ESS_1}$} and 
		\diff{$\event{ \co{M}(\alpha) }{ u } \notIndy \dom{\ESS_2}$}, then there exist $B'$, $u'$, and $\alpha'$ \st 
		$B \dlts{\co{M\rho}(\alpha')}{u'} B'$, \diff{$\event{ \co{M\rho}(\alpha') }{ u' } \Indy \ran{\ESS_1}$},
		$\diff{\event{ \co{M\rho}(\alpha') }{ u' } \notIndy \ran{\ESS_2}}$, and $A' \mathrel{\mathcal{R}^{\rho\left[\alpha \mapsto \alpha'\right], \diff{\ESS_1 \cup {\left\{\left( \event{\co{M}(\alpha) }{ u } , \event{ \co{M\rho}(\alpha') }{ u' }\right)\right\}}}}} B'$
		\item $A \vDash M = N$ iff $B \vDash M\rho = N\rho$.
	\end{itemize}
	We say that an extended process $A$ \emph{is HP-simulated by} $B$, and write $A \simi{HP} B$, whenever
	there exists a HP-simulation $\mathcal{R}$ such that $A \mathrel{\mathcal{R}^{\id, \diff{\emptyset}}} B$.
	If, in addition, $\mathcal{R}$ is symmetric,
	(defined such that $A \mathrel{\mathcal{R}^{\rho, \ESS}} B$ iff $B \mathrel{\mathcal{R}^{\rho^{-1}, \ESS^{-1}}} A$)
	then $A$ and $B$ are HP-bisimilar, written $A \bisimi{HP} B$. 
\end{defi}

Notice how in the continuation of the simulation game the $\ESS_2$ is dropped, and the new event pair $( \event{\pi }{ u } , \event{ \pi\rho }{ u' })$ is added to $\ESS_1$. We are thus always continuing only with the concurrent events (whereas the $\ESS_2$ become just dependencies).
This is different from previous definitions of HP on event structures~\cite{Glabeek1989,Bednarczyk1991} or process algebras~\cite{Aubert2020b}, which generally requires extending mappings on \emph{all} events, including the ones that have been terminated.
This style was first introduced in previous work~\cite{Aubert2022k} specifically for the applied \(\pi\)-calculus and it is conjectured to match existing definitions on \eg CCS, event structures or behaviour structures when mapped to those systems.

\begin{example}\label{eg:small}
We consider some examples.
The extended processes
\begin{align*}
\id \cpar  \nu x, y, z. (\cout{a}{x} . (\cout{b}{y} \cpar \cout{c}{z})) &&& \text{and} &&\id \cpar  \nu x, y, z. (\cout{a}{x} . \cout{b}{y} \cpar \cout{c}{z})
\shortintertext{and infinite extended processes}
\id \cpar  \bang(\mathopen{\nu x.}\cout{a}{x} . \mathopen{\nu x.}\cout{a}{x}) &&& \text{and} && \id \cpar  \bang(\mathopen{\nu x.}\cout{a}{x})
\end{align*}
are related by $i$-similarity but not by HP-similarity.\footnote{The above examples also separate HP-similarity from other notions of similarity, notably ST-similarity as explained in 
detail in a companion paper~\cite[sec.4.3]{Aubert2022k}.
}
Even if the first example is enough to separate HP-similarity from coarser equivalences, 
we find it instructive to also look at an infinite, albeit simple looking, example.

For a still simpler example,
proving that HP-similarity is strictly finer than
$i$-similarity, observe that,
\[
\id \cpar {\cout{a}{a}.\cout{a}{a}} \nsimi{HP} \id \cpar {\cout{a}{a}} \cpar \cout{a}{a}
\]
\noindent holds (nor does the converse), despite 
the two processes being i-bisimilar.
\end{example}

By the above examples, HP semantics are strictly finer than interleaving semantics.
The following lemma also clarifies that HP-bisimilarity is sound with respect to $i$-bisimilarity
 (and hence observational equivalence also).
\begin{lem}\label{lem:HP-i}
	If $A \bisimi{HP} B$ then $A \bisimi{i} B$.
	Similarly, 
	if $A \simi{HP} B$ then $A \simi{i} B$.
	The converse of these implications do not hold in general.
\end{lem}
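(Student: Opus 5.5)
The plan is to build an $i$-simulation by forgetting the event component of an HP-simulation. Given an HP-simulation $\mathcal{R}$, define $\mathcal{R}'$ by
\[
A \mathrel{\mathcal{R}'^{\rho}} B \quad\text{iff}\quad \text{there exists } \ESS \text{ such that } A \mathrel{\mathcal{R}^{\rho,\ESS}} B .
\]
We then check that $\mathcal{R}'$ is an $i$-simulation in the sense of \autoref{def:i-sim}. The static clause carries over verbatim, since it is identical in both definitions.

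For the transition clauses, suppose $A \mathrel{\mathcal{R}^{\rho,\ESS}} B$ and $A \dlts{\pi}{u} A'$. Split $\ESS$ as
\[
\ESS_1 = \{(e,e')\in\ESS : \event{\pi}{u}\Indy e\}, \qquad \ESS_2 = \ESS\setminus\ESS_1 .
\]
By construction $\event{\pi}{u}\Indy\dom{\ESS_1}$ and $\event{\pi}{u}\notIndy\dom{\ESS_2}$. The latter holds because every element of $\dom{\ESS_2}$ is non-independent of the event, which is exactly the ``for all'' reading of $\notIndy$ in the footnote.

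With this partition the hypotheses of the HP clause are met. It supplies a matching transition $B\dlts{\pi\rho}{u'}B'$, or the analogous output transition with a fresh $\alpha'$ and the updated bijection $\rho[\alpha\mapsto\alpha']$. It also places the residuals in $\mathcal{R}$ under some new event relation, so they lie in $\mathcal{R}'$ with the same bijection. We simply discard the extra independence side conditions.

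The only subtle point is that every transition of $A$ must be matchable. This is why the partition must be chosen per transition rather than fixed in advance, and the definition of $\notIndy$ as a universal condition makes the canonical split always valid. The initial pair $A\mathrel{\mathcal{R}^{\id,\emptyset}}B$ yields $A\mathrel{\mathcal{R}'^{\id}}B$, so $A\simi{i}B$.

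For bisimilarity, note that if $\mathcal{R}$ is symmetric in the HP sense, then $\mathcal{R}'$ is symmetric in the $i$ sense. Indeed, $B\mathrel{\mathcal{R}^{\rho^{-1},\ESS^{-1}}}A$ gives $B\mathrel{\mathcal{R}'^{\rho^{-1}}}A$.

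For strictness of the converses we use \autoref{eg:small}. There $\id\cpar\cout{a}{a}.\cout{a}{a}$ and $\id\cpar\cout{a}{a}\cpar\cout{a}{a}$ are $i$-bisimilar but not HP-similar in either direction. The witness argument runs as follows.
\begin{itemize}
\item After the first output, the sequential process carries the pair $\ESS=\{(e_1,e_1')\}$.
\item Its second output, at the same location, is structurally dependent on $e_1$, so it must be placed in $\ESS_2$.
\item The parallel process's only remaining output sits at location $\prefix{1}$ against $\prefix{0}$. It is structurally independent of $e_1'$, and so cannot satisfy $\notIndy\ran{\ESS_2}$.
\end{itemize}
The symmetric case is analogous. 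This single example refutes both converses, since it is an $i$-bisimilar pair, hence also $i$-similar, that is not HP-similar.

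I expect the main obstacle, such as it is, to be this failure argument. One must check that no alternative alias choice rescues the match, which holds trivially because only one transition is available. One must also check that $\Indy$ for output events correctly reflects location independence here, since $a$ is free so no link dependency arises.
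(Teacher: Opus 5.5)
Your proposal is correct and follows the same route as the paper. An HP-(bi)simulation becomes an $i$-(bi)simulation by forgetting the event relation, and strictness comes from the examples in \autoref{eg:small}. Your per-transition canonical partition of the event relation, justified by the universal reading of non-independence, makes explicit the step the paper compresses into ``the definition of $i$-(bi)similarity relaxes the conditions''. Your check on the sequential versus parallel double output is sound in both directions.
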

\begin{proof}
  Any two processes related by an HP-(bi)similation are related by an
  $i$-(bi)simulation since the definition of i-(bi)similarity relaxes
  the conditions of HP-(bi)similarity. By this observation, it follows
  immediately by definition, that any HP-(bi)simulation defines an
  $i$-(bi)simulation, by dropping the relation over events, and thus
  HP-(bi)similarity is contained in i-(bi)similarity.
	Strictness follows from any of the distinguishing examples above.
\end{proof}

\subsection{A modal logic characterising HP-bisimilarity}%
\label{ssec:modal-logic}

A novel contribution of this paper is the logical characterization of HP-bisimilarity by a simple adaptation of the classical $\FM$ logic~\cite{Horne2021}.
While this is the first such modal logic tailored to the applied $\pi$-calculus,
several modal logics have been proposed in the literature for capturing HP semantics, \eg~\cite{Nielsen1994,Baldan2014a,Phillips2014}.
These logics are defined on concurrency models different than ours,
yet strongly related. For instance, \cite{Phillips2014} interprets their Event Identifier Logic on (stable) configuration structures, which are
 more expressive than (prime) event structures on which the logic of \cite{Baldan2014a,Baldan2020ACMmodelChecking} is interpreted. The logics of \cite{Nielsen1994}, which are defined in the setting of bisimulations from open maps \cite{Joyal1996b}, are interpreted over transition systems with independence, which are the most close to our labelled asynchronous transition systems \cite{hildebrandt1996comparing}.

Studying the precise relationships of the logic from this section with other logics such as the ones above is not justified by our more practical goal here, \ie 
we work on a specific syntax (that of the applied $\pi$-calculus), with a specific semantics (namely, directly on the SOS rules from \autoref{sec:SOS}, which eventually generate a LATS~\cite{Aubert2022e}), with our specific way of defining (bi)similarities meant to be implemented in security tools.
For us, the logic is meant as a tool for describing attacks, whereas the above works investigate more generally logics for concurrency models.
Our approach here is novel in the sense that the syntax of classical $\FM$ barely changes (we simply move from actions to events) and instead we enrich the semantics of the diamond modality.
This is in keeping with the philosophy of the paper: that the languages
and tools should stay the same for the programmer and the semantics should work harder for us.

The novelty, however, is not in our annotation of the modality with events, as all the above logics do this in one form or another. For example, both logics of \cite{Phillips2014} and \cite{Baldan2020ACMmodelChecking} have modalities that execute events, but also bind events so they can later be referred back to. The novelty here is in the simplicity of our extension of a logic that already exists for the applied $\pi$-calculus. And moreover, this simplicity is in the syntax, \ie for the user (the modeling and verification expert), whereas the semantics becomes more complex to match the definition of the HP-bisimilarity.

\begin{defi}
The syntax of history-preserving $\FM$ ($\HPFM$) is given by the grammar on the left below 
and common abbreviations to the right,
where $\pi$ is either of the form $\tau$, $M\,N$ or $\co{M}(x)$, where $x$ is a variable rather than an alias.
\begin{multicols}{2}
	\noindent
	\begin{align*}
		\phi, \psi \Coloneqq &&& \ttt \tag*{top} \\
		\mid &&& M = N &&& \tag*{equality} \\
		\mid &&& \phi \wedge \psi \tag*{conjunction} \\
		\mid &&& \neg \phi \tag*{negation} \\
		\mid &&& \ediam{\pi}{u}\phi & \tag*{diamond} 
	\end{align*}
	\begin{align*}
		\fff &\triangleq \neg \ttt\\
		M \neq N &\triangleq \neg M = N\\
		\phi \vee \psi &\triangleq \neg\left(\neg \phi \wedge \neg \psi \right) \\
		\phi \yields \psi &\triangleq \neg \phi \vee \psi\\
		\eboxm{\pi}{u}\phi &\triangleq \neg \ediam{\pi}{u}\neg \phi
	\end{align*}
\end{multicols}
\noindent Formulas of the type $\ediam{\co{M}(x)}{u}\phi$ and
$\eboxm{\co{M}(x)}{u}\phi$
bind occurrence of $x$ in $\phi$.
The \emph{simulation} fragment of $\HPFM$ is that where negation is applied only to equality.
\end{defi}

We draw attention to the novel features of this logic.
The semantics of our modal logic, defined in \autoref{fig:modal-2}, has the satisfaction relation \(\vDash^{\ESS}\) parametrized by a relation over events \(\ESS\).
This relation allows us to track causal relations between the event in the modalities \(\ediam{\pi}{u}\) and \(\eboxm{\pi}{u}\) and other events.
In particular, we are interested in the causal dependencies between the event \(\event{\pi}{u}\) inside the modality and a subset of \(\ran{\ESS}\) of events that are concurrent with it.

\begin{remark}
In \autoref{fig:modal-2} you will see that a location $u$ appears in the events labelling transitions while a different location $u'$ appears in the modality. This is intentional since permitting a different location in the formula ensures that we are not simply syntactically pairing events in the formula with events in the transition system. Instead, we are pairing up events that preserve the same causal relationships with concurrent events.

This is essential for $\HPFM$ to characterise HP-bisimilarity. For example,
we expect that still a simple example like \autoref{ex:commutative} should satisfy the same formulas when we move to HP semantics.
In $\HPFM$, modalities range over events so we must give locations in the formula, which may happen to match the locations of events of
one process, as follows.
\begin{align*}
\id \cpar {\cout{a}{c}} \cpar {\cout{b}{h(c)}}
& \vDash^{\emptyset}
\ediam{\lout{a}{x}}{\prefix{0}\lambda}\ediam{\lout{b}{y}}{\prefix{1}\lambda}(h(x) = y)
\end{align*}
However, the same formula must also be satisfied by the process where the locations differ,
as in the following.
\begin{align*}
\id \cpar {\cout{b}{h(c)}} \cpar {\cout{a}{c}}
&\vDash^{\emptyset}
\ediam{\lout{a}{x}}{\prefix{0}\lambda}\ediam{\lout{b}{y}}{\prefix{1}\lambda}(h(x) = y)
\end{align*}
Notice, in the above, that the events in the formula and the events of the process are labelled with different locations.
However, both events in the process are independent as are both events in the formula.
Thus the labelling is correct with respect to the definition of satisfaction in \autoref{fig:modal-2}.

This observation is essential to understand the location labels in events also for all interesting examples of unlinkability problems, such as those that appear later in this section. This is because the structure of locations differs between the specification  and system, due to the differences in how replication is nested. Hence locations in a formula cannot possibly refer directly to events relevant to both processes.
\end{remark}

	\begin{figure}
	{
		\newcommand{\tablespace}{\\[3em]} 
		\newcommand{\hypospace}{\hspace{1.5em}} 
		\newcommand{\columnspace}{1.5em} 
		\begin{tabular}{c@{\hskip \columnspace} c}
			\begin{prooftree}
				\justifies
				A \vDash^{\ESS} \ttt 
			\end{prooftree}
			& 
			\begin{prooftree} 
				A\sub{x}{y} \vDash^{\ESS} \phi
				\hypospace
				\isfresh{y}{\nu x.A, \phi}
				\justifies
				\nu x.A \vDash^{\ESS} \phi
			\end{prooftree}
			\tablespace
			\begin{prooftree}
				M\theta \mathrel{=_E} N\theta
				\justifies
				\theta \cpar P \vDash^{\ESS} M = N
			\end{prooftree}
			& 
			\begin{prooftree}
				A \vDash^{\ESS} \phi
				\qquad
				A \vDash^{\ESS} \psi
				\justifies
				A \vDash^{\ESS} \phi \wedge \psi
			\end{prooftree}
			\tablespace
			\multicolumn{2}{c}{
				\begin{prooftree}
					B \mathrel{\vDash^{\ESS_1 \cup {\left\{( \event{ \co{M}(\alpha) }{ u } , \event{ \co{M}(\alpha) }{ u' })\right\}}}} \phi\sub{x}{\alpha}
					\hypospace
					A \dlts{\co{M}(\alpha)}{u} B 
					\hypospace
					\isfresh{\alpha}{\phi}
					\hypospace
					\begin{array}[b]{l}
						\event{ \co{M}(\alpha) }{ u } \Indy \dom{\ESS_1}
						\\
						\event{ \co{M}(\alpha) }{ u' } \Indy \ran{\ESS_1}
						\\
						\event{ \co{M}(\alpha) }{ u } \notIndy \dom{\ESS_2}
						\\
						\event{ \co{M}(\alpha) }{ u' } \notIndy \ran{\ESS_2}
					\end{array}
					\justifies
					A \mathrel{\vDash}^{\ESS_1 \cup {\ESS_2}} \ediam{\co{M}(x)}{u'}\phi
				\end{prooftree}
			}
			\tablespace
			\multicolumn{2}{c}{
				\begin{prooftree}
					B \mathrel{\vDash^{\ESS_1 \cup {\left\{( \event{ \pi }{ u } , \event{ \pi }{ u' })\right\}}}}
					\phi
					\hypospace
					A \dlts{\pi}{u} B
					\text{ with }\pi\text{ not output}
					\hypospace
					\begin{array}[b]{l}
						\event{ \pi }{ u } \Indy \dom{\ESS_1}
						\\
						\event{ \pi }{ u' } \Indy \ran{\ESS_1}
						\\
						\event{ \pi }{ u } \notIndy \dom{\ESS_2}
						\\
						\event{ \pi }{ u' } \notIndy \ran{\ESS_2}
					\end{array}
					\justifies
					A \mathrel{\vDash}^{\ESS_1 \cup {\ESS_2}} \ediam{\pi}{u'}\phi
				\end{prooftree}
			}
			\\[4.4em]
			\hline
			&\hfill \textsc{negation}
			\\[1.4em]
			\multicolumn{2}{c}{
				\begin{tabular}{c@{\hskip \columnspace} c@{\hskip \columnspace} c@{\hskip \columnspace} c@{\hskip \columnspace} c@{\hskip \columnspace}}
					\begin{prooftree}
						A \nvDash^{\ESS} \phi
						\justifies
						A \vDash^{\ESS} \neg\phi
					\end{prooftree}
					& 
					\begin{prooftree}
						A \vDash^{\ESS}  \phi
						\justifies
						A \nvDash^{\ESS}  \neg\phi
					\end{prooftree}
					& 
					\begin{prooftree}
						M\theta \mathrel{\neq_E} N\theta
						\justifies
						\theta \cpar P \nvDash^{\ESS} M = N
					\end{prooftree}	
					& 
					\begin{prooftree}
						A \nvDash^{\ESS} \phi
						\justifies
						A \nvDash^{\ESS} \phi \wedge \psi
					\end{prooftree}
					&
					\begin{prooftree}
						A \nvDash^{\ESS} \psi
						\justifies
						A \nvDash^{\ESS} \phi \wedge \psi
					\end{prooftree}
				\end{tabular}
			}
			\tablespace
			\multicolumn{2}{c}{
				\begin{prooftree}
					\left\{
					B \mathrel{\nvDash^{\ESS_1 \cup {\left\{( \event{ \co{M}(\alpha) }{ u } , \event{ \co{M}(\alpha) }{ u' })\right\}}}} \phi\sub{x}{\alpha}
					\colon
					\begin{array}{l}
						A \dlts{\co{M}(\alpha)}{u} B
						\hypospace 
						\isfresh{\alpha}{\phi}
						\\
						\event{ \co{M}(\alpha) }{ u } \Indy \dom{\ESS_1}
						\\
						\event{ \co{M}(\alpha) }{ u } \notIndy \dom{\ESS_2}
					\end{array}
					\right\}
					\hypospace
					{
						\begin{array}[b]{l}
							\event{ \co{M}(x) }{ u' } \Indy \ran{\ESS_1}
							\\
							\event{ \co{M}(x) }{ u' } \notIndy \ran{\ESS_2}
					\end{array}}
					\justifies
					A \mathrel{\nvDash}^{\ESS_1 \cup {\ESS_2}} \ediam{\co{M}(x)}{u'}\phi
				\end{prooftree}
			}	
			\tablespace\tablespace
			\multicolumn{2}{c}{
				\begin{prooftree}
					\left\{
					B \mathrel{\nvDash^{\ESS_1 \cup {\left\{( \event{ \pi }{ u } , \event{ \pi }{ u' })\right\}}}}
					\phi
					\colon
					\begin{array}{l}
						A \dlts{\pi}{u} B,
						\\
						\event{ \pi }{ u } \Indy \dom{\ESS_1}
						\\
						\event{ \pi }{ u } \notIndy \dom{\ESS_2}
					\end{array}
					\right\}
					\hypospace
					{\begin{array}{l}
							\pi\text{ not output}
							\\
							\event{ \pi }{ u' } \Indy \ran{\ESS_1}
							\\
							\event{ \pi }{ u' } \notIndy \ran{\ESS_2}
					\end{array}}
					\justifies
					A \mathrel{\nvDash}^{{\ESS_1} \cup {\ESS_2}} \ediam{\pi}{u'}\phi
				\end{prooftree}
			}
		\end{tabular}
}
		\caption{Rules for history-preserving modal logic classical $\HPFM$, where modalities are labelled with events.
	}
\label{fig:modal-2}
\end{figure}

The semantics is designed such that we obtain a 
Hennessy-Milner property, much like \autoref{thm:i-HM},
that proves that our new modal logic can be used to characterise
HP-bisimilarity.
\begin{thm}\label{thm:HP-HM}
	The following holds for extended processes $A$, $B$:
		$A \bisimi{HP} B$ iff, we have that, for all $\HPFM$ formulae $\phi$, $A \vDash_{HP}^{\emptyset} \phi$ iff $B \vDash_{HP}^{\emptyset} \phi$.
\end{thm}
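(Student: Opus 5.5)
I will prove the Hennessy--Milner property for HP-bisimilarity by the classical two-direction argument used for \autoref{thm:i-HM}. I will generalise the statement from the empty relation over events to an arbitrary index $(\rho, \ESS)$, so that the induction hypotheses are strong enough.

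\emph{Soundness.} For every HP-bisimulation $\mathcal{R}$, every triple with $A \mathrel{\mathcal{R}^{\rho,\ESS}} B$, and every formula $\phi$, I will show that $A \vDash^{\ESS} \phi$ iff $B \vDash^{\ESS'} \phi\rho$. Here $\ESS'$ is the relation obtained by replacing each domain event of $\ESS$ with its partner in $\ran{\ESS}$. In other words, the satisfaction index on each side records the pairing between the process's own events and the formula's events, and the bisimulation pairing lets us transport one pairing to the other. The proof is by induction on $\phi$.
\begin{itemize}
\item Equalities follow from the static clause of \autoref{def:hp-sim}.
\item Conjunction and negation are immediate, using the symmetry of $\mathcal{R}$ for negation.
\item For $\ediam{\pi}{u'}\psi$, consider the witnessing transition $A \dlts{\pi}{u} A'$ together with the partition $\ESS_1 \cup \ESS_2$ fixed by the independence conditions with respect to the formula's event. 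Apply the clause of \autoref{def:hp-sim} to this same partition, read on the process side. This yields a matching transition $B \dlts{\pi\rho}{u''} B'$ that preserves independence with $\ran{\ESS_1}$ and dependence with $\ran{\ESS_2}$. Since the conditions on the formula's location $u'$ are untouched, the premises of the diamond rule for $B$ hold, and the induction hypothesis applies to $A'$ and $B'$ with the extended indices.
\item Bound outputs are handled via $\alpha$-renaming of $x$, using the freshness side conditions and the update $\rho[\alpha\mapsto\alpha']$.
\end{itemize}

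\emph{Completeness.} I will define $A \mathrel{\mathcal{R}^{\rho,\ESS}} B$ to hold when both processes satisfy the same formulas under the corresponding indices, and show that $\mathcal{R}$ is an HP-bisimulation; it is symmetric by construction. Suppose $A \dlts{\pi}{u} A'$ is admissible for the partition $\ESS_1\cup\ESS_2$, but no admissible matching transition of $B$ reaches a state related to $A'$. By image-finiteness (the footnote after \autoref{fig:modal-1}, together with event determinism), only finitely many admissible candidates $B_1, \ldots, B_n$ exist, up to the congruence. For each $B_j$ I pick a distinguishing formula $\psi_j$, negating it if necessary so that $A'$ satisfies it. The formula $\ediam{\pi}{u}\bigwedge_j \psi_j$ then holds for $A$ but not for $B$. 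Here I choose the formula's location to be $u$ itself, so the formula-side independence conditions coincide with those of $A$'s transition, and hence the partition into $\ESS_1$ and $\ESS_2$ is forced identically on both sides. The static clause follows directly from equality formulas.

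\emph{Main obstacle.} The main obstacle is the bookkeeping in the negated diamond rule. There the partition $\ESS_1\cup\ESS_2$ is existentially chosen, and the formula-side conditions on $u'$ must be matched against the process-side conditions on $u$. I must check that the choice of partition is determined by the formula's event alone, so that $\vDash$ and $\nvDash$ are genuinely complementary and negation behaves classically. I would first prove a lemma establishing this uniqueness of the partition and the complementarity of the two relations, by induction on formulas, before running the two directions above. A secondary concern is that events in $\ESS$ carry aliases that are renamed by $\rho$; I would handle this by stating the invariant up to applying $\rho$ to the action labels.
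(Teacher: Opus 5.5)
Your plan has the same architecture as the paper's appendix proof. Soundness goes by induction on formulas under an invariant generalised to arbitrary indices. Completeness shows that formula-equivalence under corresponding indices is an HP-bisimulation, with static equivalence read off from equality formulas. Your complementarity lemma makes explicit what the paper's negation case uses silently.

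One thing must be corrected in soundness. As literally written, your invariant reuses the bisimulation index $\ESS$ as the satisfaction index of $A$, and this is not preserved by the diamond step. Write $e_A, e_B$ for the matched events. The continuation index of $A$ receives $(e_A, \event{\pi}{u'})$, while the bisimulation index receives $(e_A, e_B)$. The formula event cannot be exchanged for $e_B$, because later modalities of $\phi$ are tested for independence against it. You need what your ``in other words'' describes, which is exactly the paper's invariant: arbitrary $\ESS_L, \ESS_R$ with $\dom{\ESS_L}=\dom{\ESS}$, $\dom{\ESS_R}=\ran{\ESS}$ and $\ESS\circ\ESS_R=\ESS_L\circ\rho$. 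The formula-side conditions for $B$ are then not ``untouched''; they are transported from those for $A$ along this equation and $\rho$.

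In completeness you genuinely depart from the paper, and for the better. The paper takes one matching transition $B \dlts{\pi\rho}{v} B'$ and invokes event determinism to call $B'$ unique. But event determinism only fixes $B'$ for a fixed location $v$, whereas $B$ may have several admissible $\pi\rho$-transitions, for instance from different unfoldings of a replication as in $\SpecMin$. A single distinguishing formula then does not yield $B \nvDash \ediam{\pi\rho}{u}\phi\rho$, and your finite conjunction is the standard repair. Two obligations remain.

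\begin{enumerate}
\item Unfolding $\bang P$ moves locations. So ``finitely many up to the congruence'' covers all admissible $v$ only once you show that indexed satisfaction is invariant under the congruence together with the induced relabelling of locations, including the location of the new $B$-event recorded in the index.
\item All conjuncts $\psi_j$ must hold of $A'$ under one common index that pairs $e_A$ with $\event{\pi}{u}$.
\begin{itemize}
\item If $\mathcal{R}$ quantifies over all admissible indices, non-relatedness of $A'$ and $B_j$ only gives a disagreement under some index depending on $j$. Your remark that choosing location $u$ makes the formula-side conditions coincide with $A$'s holds only when the formula events of the index are $A$'s own events.
\item Fixing that canonical index resolves this; it is also what the paper's pairing of $\event{\pi}{u}$ with itself presupposes. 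But then $\mathcal{R}$ is no longer symmetric by construction.
\item The moves led by $B$ therefore need a separate argument, e.g.\ a lemma allowing the formula-side events of an index, together with the locations in the formula, to be renamed without affecting satisfaction.
\end{itemize}
\end{enumerate}
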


A proof of a theorem from which the above follows immediately appears in \autoref{app:proofs}, where the proof involves composing the relations between events in bisimulations and satisfaction relations when moving between the perspective of different processes with respect to the same formula.
In the definition of the rules of $\HPFM$, the relation between events in the process and formula are essential, since two bisimilar processes may label events differently yet be HP-bisimilar; and hence must satisfy the same formulas. Thus, in such cases the relationship between the events in a formula and the process may differ from process to process and so should be recorded. What is important is the preservation of the independence of events between the formula and process rather than their locations.

We revisit some example processes that are distinguished by HP-bisimilarity,
and provide a modal logic formula distinguishing each property.

\begin{example}\label{eg:hm-hp}
We can prove the inequalities
of \autoref{eg:small} 
via the observation that, for the finite example,
\begin{align*}
\id \cpar  (\nu x, y, z. (\cout{a}{x} . (\cout{b}{y} \cpar \cout{c}{z})))
& \vDash^{\emptyset}
\ediam{\lout{a}{x} }{ \prefix{0}[]}
\ediam{\lout{c}{z} }{ \prefix{01}[]}\ttt\\
\shortintertext{but}
\id \cpar (\nu x, y, z. (\cout{a}{x} . \cout{b}{y} \cpar \cout{c}{z}))
& \nvDash^{\emptyset}
\ediam{\lout{a}{x} }{ \prefix{0}[]}
\ediam{\lout{c}{z} }{ \prefix{01}[]}\ttt\text{,}
\shortintertext{and, for the infinite example,}
\id \cpar (\bang(\mathopen{\nu x.}\cout{a}{x} . \mathopen{\nu x.}\cout{a}{x}))
& \vDash^{\emptyset}
\ediam{\lout{a}{x} }{ \prefix{0}[]}\ediam{\lout{a}{x} }{ \prefix{0}[]}\ttt
\shortintertext{but}
\id \cpar (\bang(\mathopen{\nu x.}\cout{a}{x}))
& \nvDash^{\emptyset}
\ediam{\lout{a}{x} }{ \prefix{0}[]}\ediam{\lout{a}{x} }{ \prefix{0}[]}\ttt\text{.}
\end{align*}
We then conclude by appealing to \autoref{thm:HP-HM}.
\end{example}

\begin{example}[Preserving independence]\label{eg:hm-hp-2}
Further to the previous examples 
consider the following, adapted here from classic work~\cite{Glabbeek1997,aceto1994adding,GORRIERI1995272,Vogler1996} to the $\pi$-calculus.
There is an 
interleaving simulation relating the following processes.
\begin{multline*}
\id \cpar (\mathopen{\nu c, d.}\left( \left( {\cout{d}{d}} \cpar {\nu n.\cout{a}{n}.\cin{d}{z}.\cin{n}{x}} \right) \cpar \left( {\cout{c}{c}} \cpar {\cin{c}{y}}\right)\right)) \\
\simi{i}
\id \cpar (\mathopen{\nu e, f.}\left( \left( {\cout{f}{f}} \cpar {\nu n.\cout{a}{n}.\cin{f}{z}} \right) \cpar \left( {\cout{e}{e}} \cpar  {\cin{e}{y}.\cin{n}{x}} \right)\right))
\end{multline*}
In contrast, the above inequality does not hold with respect to HP-similarity.
Observe that formula
$
\phi \triangleq
\ediam{\tau}{\left(\prefix{10}[], \prefix{11}[]\right)}
\ediam{\lout{a}{x}}{\prefix{01}[]}
\ediam{\tau}{\left(\prefix{00}[], \prefix{01}[]\right)}
\ediam{x\,M}{\prefix{01}[]}
\ttt
$
is such that
\begin{align*}
\id \cpar ( \mathopen{\nu c, d.}\left( 
  \left( {\cout{d}{d}} \cpar {\nu n.\cout{a}{n}.\cin{d}{z}.\cin{n}{x}} \right) \cpar \left( {\cout{c}{c}} \cpar {\cin{c}{y}} \right)
\right))
& \vDash^{\emptyset}
 \phi
\shortintertext{holds, but, in contrast,}
\id \cpar (\mathopen{\nu e, f.}\left(
  \left( {\cout{f}{f}} \cpar {\nu n.\cout{a}{n}.\cin{f}{z}} \right) \cpar \left( {\cout{e}{e}} \cpar  {\cin{e}{y}.\cin{n}{x}} \right)
\right))
&\nvDash^{\emptyset}
\phi\text{.}
\end{align*}
To help see why this strategy is distinguishing,
consider the path suggested by the formula through the graphical representation
of the state space below.

\noindent\includegraphics[width=\linewidth]
	{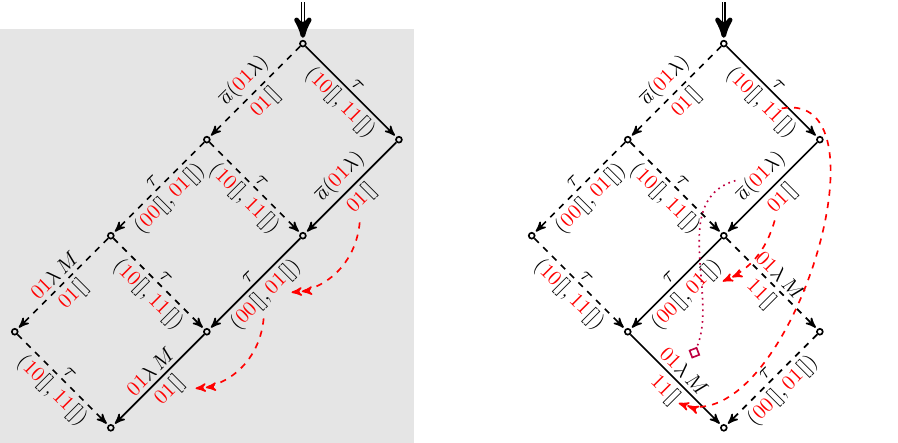}

The above represents the state spaces of the two processes, with the event transitions captured by the formula $\phi$ represented by solid lines with arrow tips (\tikz[baseline=-2pt]{\draw [transition, thick, >=stealth', shorten >=1pt](0, 0) --++(.5, 0);}), while the other possible transitions are dashed (\tikz[baseline=-2pt]{\draw [transition, dashed, thick, >=stealth', shorten >=1pt](0, 0) --++(.5, 0);}).
Relevant dependencies are also represented: dotted lines with square ends for causal dependencies (\tikz[baseline=-2pt]{\draw [causal](0, 0) --++(.5, 0);}), and dashed lines with double arrowheads (\tikz[baseline=-2pt]{\draw [structural](0, 0) --++(.5, 0);}) for the structural dependencies.
Dependencies that can be obtained by transitivity are also omitted, for readability, and the active player (in this case, the left-hand formula) is shaded.

On the left of the diagram above, observe that the rightmost $\tau$-transition (synchronising on channel $c$)
is independent from all other transitions,
while all other events in that diagram are causally dependent on each other.
In contrast, on the right above, both $\tau$-transitions are dependent on 
one other event,
and independent of the others. 
Since the formula reflects the independence structure on the left but not the right,
the formula is distinguishing.
Remember it is the independence structure of the locations not names of the locations that are important,
and the fact that the locations in the formula match the locations of the process on the left is just to aid readability,
and, furthermore, the difference in the location label of the input transitions is not in itself the reason that the strategy is distinguishing.
\end{example}

\begin{example}[Preserving dependencies]\label{eg:hm-hp-3}
For another classic example that we adapt for $\pi$-calculi consider the following.
\begin{equation*}
	\id \cpar (\mathopen{\nu a,b.}\left(
	\left(
	{\cout{a}{a}}
	\cpar 
	\left(
	{\cin{a}{x}}
	+
	{\cin{b}{x}}
	\right)
	\right)
	\cpar
	{\cout{c}{c}.\cout{b}{b}}
	\right))
	\nsimi{HP}
	\id \cpar (
	\mathopen{\nu a.}\left(
	\left(
	{\cout{a}{a}}
	\cpar 
	{\cin{a}{x}}
	\right)
	\cpar
	{\cout{c}{c}}
	\right)
	)
\end{equation*}
To see that they are unrelated by HP-similarity,
consider the following formula.
\[
\ediam{\co{c}(\prefix{1}\lambda)}{\prefix{1}[]}
\ediam{\tau}{(\prefix{01}[1], \prefix{1}[])}\ttt
\]
This is satisfied by the process on the left,
since it can 
perform the following transition.
\begin{align*}
   & \id
	\cpar (
	\mathopen{\nu a,b.}\left(
	\left(
	{\cout{a}{a}}
	\cpar \left(
	{\cin{a}{x}}
	+
	{\cin{b}{x}}
	\right)
	\right)
	\cpar
	{\cout{c}{c}.\cout{b}{b}}
	\right))
	\\
	& \hspace{3em} \dlts{\co{c}(\prefix{1}\lambda)}{\prefix{1}[]}
	\mathopen{\nu a,b.}\left(
	{\sub{\prefix{1}\lambda}{c}}
	\cpar
	\left(
	{\cout{a}{a}}
	\cpar \left(
	{\cin{a}{x}}
	+
	{\cin{b}{x}}
	\right)
	\right)
	\cpar
	{{\cout{b}{b}}}
	\right)
\end{align*}
Notice now that the process 
$\mathopen{\nu a,b.}\left(
{\sub{\prefix{1}\lambda}{c}}
\cpar
\left(
{\cout{a}{a}}
\cpar \left(
\cin{a}{x}
+
\cin{b}{x}
\right)
\right)
\cpar
{\cout{b}{b}}
\right)$
can perform a transition labelled with
$\event{\tau}{(\prefix{01}[1], \prefix{1}[])}$ (synchronising on channel $b$),
which is not independent from
$\event{\co{c}(\prefix{1}\lambda) }{ \prefix{1}[] }$.
Yet, although the other process can perform a $\tau$-transition
at the respective point,
it cannot match that dependency constraint:

\noindent
\includegraphics[width=\linewidth]{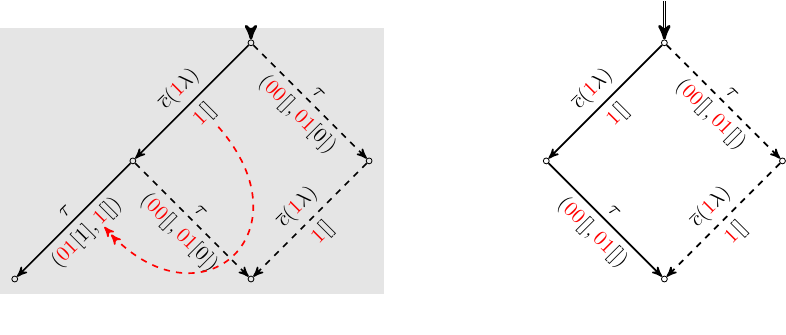}

In fact, the matching of dependency constraints is necessary for this
distinguishing strategy, and hence preserving independence only is insufficient.
\end{example}

Notice that the above formulas employ only diamond modalities and
hence these examples are all interpreted as 
describing a pomset trace.
There are, of course, examples that employ other features of the logic,
when branching time is relevant for example, as we illustrate using
the Feldhofer protocol from \autoref{fig:msc-feld}.

\subsection{Analysis of Feldhofer with silent authentication failure}

We are now able to illustrate formally 
how
HP-bisimilarity 
reliably detects attacks on
 unlinkability,
as mentioned in the technical preliminaries (\autoref{sec:tech_preliminry_intro}).
Furthermore,
we illustrate that our methodology
allows us to use $\HPFM$
to provide
 descriptions of attacks when
unlinkability does not hold,
for any formulation of unlinkability we have considered.
Thus we have a semantics that 
 is robust with respect
to both the formulation of unlinkability and the choice of protocol. 
\begin{thm}\label{thm:Feldhofer-attack}
$
\SystemMin^{F}
\nbisimi{HP}
\SpecMin^{F}
$, 
where, as previously,
\[
\begin{array}{rl}
\SystemMin^F \triangleq& \bang\mathopen{\nu k.\bang}\left( \Reader^F(c,d,k) \cpar \MainUK^F(c,d,k) \right)
\\\\
\SpecMin^F \triangleq& \bang\mathopen{\nu k.}\left( \Reader^F(c,d,k) \cpar \MainUK^F(c,d,k) \right)
\end{array}
\]
with \(\Reader^F\) and \(\MainUK^F\) as defined on page~\pageref{feldhoferf}.
\end{thm}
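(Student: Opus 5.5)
The plan is to use the Hennessy--Milner property for $\HPFM$ (\autoref{thm:HP-HM}). It suffices to exhibit a single $\HPFM$ formula $\phi_F$ with $\SystemMin^F \vDash^{\emptyset} \phi_F$ and $\SpecMin^F \nvDash^{\emptyset} \phi_F$.

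The attack should be the same one underlying $\phi_{\ref{thm:system-finite}}$. One tag emits a nonce $nt_1$, and two readers receive $nt_1$ and respond with $u_1,u_2$. The tag then accepts either response and produces an output.

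In the interleaving setting, the final $\diam{d\,u_i}\diam{\co{c}(w)}$ could always be answered by a fresh reader. That is why \autoref{thm:bisim-proof} holds. The event annotations remove this escape, because they pin down the \emph{dependency} of the last output on the earlier events. Concretely, I would take
\[
\ediam{\co{c}(nt_1)}{\ell_1}\,
\ediam{d\,nt_1}{\ell_2}\,
\ediam{d\,nt_1}{\ell_3}\,
\ediam{\co{c}(u_1)}{\ell_2}\,
\ediam{\co{c}(u_2)}{\ell_3}
\Big(
\ediam{d\,u_1}{\ell_1}\ediam{\co{c}(w)}{\ell_1}\ttt
\;\wedge\;
\ediam{d\,u_2}{\ell_1}\ediam{\co{c}(w)}{\ell_1}\ttt
\Big).
\]
Here $\ell_1,\ell_2,\ell_3$ are pairwise structurally independent locations, for instance $\prefix{0}[]$, $\prefix{10}[]$ and $\prefix{11}[]$.

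The point of this choice is as follows. The input $d\,u_i$ at $\ell_1$ must depend on the earlier output $\co{c}(nt_1)$ at $\ell_1$ (structural dependence, same thread) and on $\co{c}(u_i)$ (link dependence, since the alias $u_i$ occurs in the label). It must be independent of the other reader's events. A fresh reader session would be independent of the $nt_1$ output, so the tracked relation $\ESS$ forbids matching with a new session.

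First I would verify $\SystemMin^F \vDash^{\emptyset}\phi_F$. Unfold the outer $\bang$ once and the inner $\bang$ three times within the same key $k$, giving one tag and two readers. Fire the transitions in the order of the formula and track $\ESS$ at each step. The partition into $\ESS_1$ (independent) and $\ESS_2$ (dependent) must match the formula's locations: $\ell_2$ and $\ell_3$ are independent of $\ell_1$, while the last inputs depend on the $\ell_1$ output and on the relevant reader output. Both conjuncts succeed, because both readers share $k$, so the tag's guard $nt_1=\snd{\dec{u_i}{k}}$ passes in either branch.

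Second, the main obstacle is showing $\SpecMin^F\nvDash^{\emptyset}\phi_F$. This requires the negation rules of \autoref{fig:modal-2}, which range over all transitions compatible with the independence/dependence constraints. I would argue as follows.

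The first output $nt_1$ comes from some tag $T$ with key $k_T$ (a reader cannot output before an input).

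The two inputs of $nt_1$ must be independent of each other and of $T$'s output event (they are at locations independent of $\ell_1$ in the formula, and no alias is used). Hence they go to two distinct reader instances or to other tags. For each, the subsequent output $u_i$ must depend on that input (same location in the formula), which forces a reader that then outputs.

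In the specification each key serves exactly one reader and one tag. So at most one of the two readers shares $k_T$. Tags receiving $nt_1$ would block, since their own nonce differs.

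For the conjunct whose $u_i$ was produced under a key $k'\neq k_T$, the input $d\,u_i$ must be \emph{dependent} on $T$'s nonce output, per $\ESS_2$. The only events dependent on that output are $T$'s own continuation, and that continuation's guard fails because decryption under $k_T$ fails. Fresh sessions are excluded, since they would be independent of $\ell_1$.

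The careful bookkeeping here is checking that every compatible transition respects the $\ESS$ side conditions in the failure rule, and that the bijection of locations cannot be chosen to evade this. Having both facts, \autoref{thm:HP-HM} yields $\SystemMin^{F}\nbisimi{HP}\SpecMin^{F}$.
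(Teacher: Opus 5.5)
\paragraph{Gap: the proposed formula does not distinguish the processes.}
Your formula has no negation, so it lies in the simulation fragment of $\HPFM$. Up to renaming $\ell_1,\ell_2,\ell_3$, it is exactly the formula the paper depicts in \autoref{fig:niceold} and explicitly states is satisfied by \emph{both} processes. More fundamentally, \autoref{thm:sim-proof}, which the paper notes lifts to Feldhofer, gives $\SystemMin^{F} \simi{HP} \SpecMin^{F}$. So no negation-free formula satisfied by the system can fail on the specification.

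The step that breaks is your claim that the final input $d\,u_i$ must depend on the tag's nonce output via $\ESS_2$, which would exclude fresh sessions. The relation $\ESS$ keeps only the most recent concurrent events. The input $\ediam{d\,nt_1}{\ell_2}$ \emph{does} use the alias $nt_1$, contrary to your remark that no alias is used. It is therefore link-dependent on $\co{c}(nt_1)$, and that pair goes into $\ESS_2$ and is discarded already at the second step. When $\ediam{d\,u_i}{\ell_1}$ is evaluated, $\ESS$ holds only the two reader outputs. The only constraint is then ``dependent on $\co{c}(u_i)$, independent of the other reader's output''. A fresh reader session with an arbitrary key meets it, through link dependence on the alias $u_i$. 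Its response is structurally dependent on its input, so it also matches $\ediam{\co{c}(w)}{\ell_1}$.

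This is not an artefact of truncated histories. With full histories, the chain $\co{c}(nt_1) < d\,nt_1 < \co{c}(u_i) < d\,u_i$ already places the dummy reader's input after the tag's output by transitivity, so the two pomsets coincide.

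\paragraph{Missing idea: a change of player.}
The paper's $\chi$ uses your prefix, with $\locp,\locra,\locrb$. It appends $\eboxm{d\,z}{\locq}\ediam{\co{c}(v)}{\locq}\ttt$ after the final output in each conjunct.
\begin{itemize}
\item In the system the tag has genuinely finished. Any input independent of the current events therefore lands in a reader, which always answers, so the box holds.
\item In the specification, at most one of the two readers holds the tag's key, as you correctly observe. In the conjunct whose reader holds a different key, the tag itself blocks on $u_i$, so the specification must answer with a dummy reader.
\item That dummy reader leaves the original tag still waiting for input. Feeding it the junk $z$ at an independent location makes it block silently, so the box fails.
\end{itemize}
The alternative $\chi'$ switches player earlier, using $\eboxm{d\,u_i}{\locq}\ediam{\co{c}(w)}{\locq}\ttt$. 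This quantifies over all eligible receivers of $u_i$, including the original tag, which blocks when the reader that produced $u_i$ holds another key.

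Your bookkeeping about tags blocking and key uniqueness survives. However, the ``fresh sessions are excluded'' step must be replaced by this box argument, which genuinely needs bisimilarity rather than similarity.
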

\begin{proof}
Consider the following event-labelled
$\FM$ formula involving independent locations $\locp$, $\locra$, $\locrb$, and $\locq$.\footnote{
Notice $\locp$, $\locra$, $\locrb$, and $\locq$
are structurally independent locations
that need not match the names of locations given by the structural operational semantics
to the events.
The use of the relation $\ESS$ in Def.~\ref{def:hp-sim}
is such that the choice of the name of the location is not important, but the structure, independence and causal dependencies induced are.
}
\[
\chi \triangleq
\begin{array}[t]{l}
\ediam{\co{c}(nt_1)}{\locp} \ediam{d\,nt_1}{\locra}\ediam{d\,nt_1}{\locrb}\ediam{\co{c}(u_1)}{\locra}\ediam{\co{c}(u_2)}{\locrb}\Big(
	\begin{array}[t]{l}
	\ediam{d\,u_1}{\locp}\ediam{\co{c}(w)}{\locp}\eboxm{d\,z}{\locq}\ediam{\co{c}(v)}{\locq}\ttt
		\\
		\wedge
		\\
	\ediam{d\,u_2}{\locp}\ediam{\co{c}(w)}{\locp}\eboxm{d\,z}{\locq}\ediam{\co{c}(v)}{\locq}\ttt
		~\Big)
	\end{array}
\end{array}
\]

Since
$\SystemMin^F \vDash^{\emptyset} \chi$
but
$\SpecMin^F \nvDash^{\emptyset} \chi$,
by \autoref{thm:HP-HM},
$\SystemMin^F
\nbisimi{HP}
\SpecMin^F$.
\end{proof}
Observe that the formula is shorter
than $\psi$ in the proof of \autoref{thm:system-min-silent},
 and, furthermore, this strategy would also serve as a distinguishing strategy for the BAC protocol
 with respect to HP-bisimilarity.
This is because we do not need to make use of the fact that some messages are pairs,
and instead we rely on causal dependencies between inputs that cause outputs in the same location.
In common with \autoref{thm:system-min-silent}, 
the presence of the box modality shows a non-trivial usage of alternation of player
coming from the fact that HP-bisimilarity is a symmetric relation.
It is not immediately obvious why the box modality is necessary here, and hence we illustrate the strategy in detail.

\begin{figure}
\includegraphics[width=\linewidth-.5em]{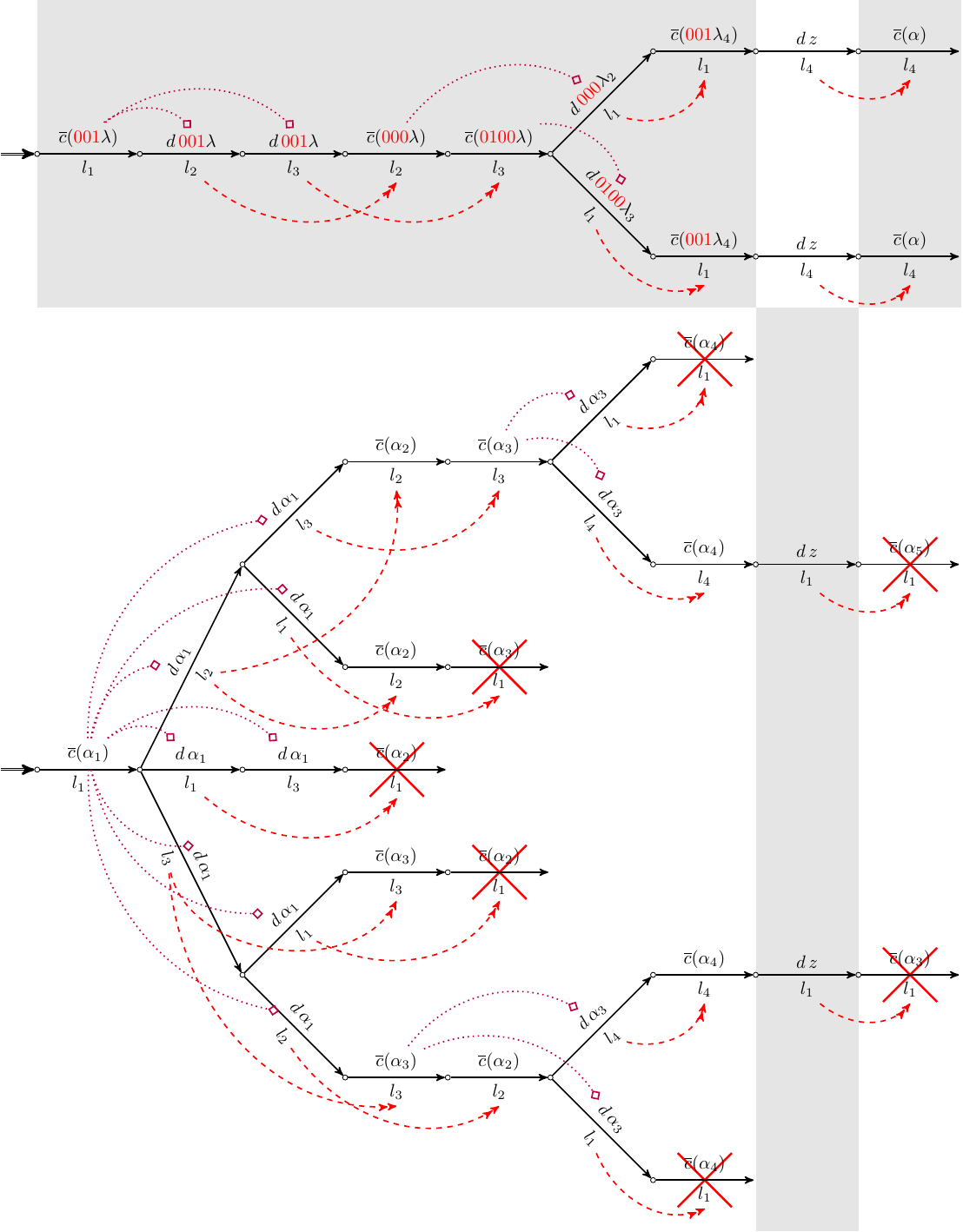}

\begin{tabular}{l c l l}
Role &	Location & Upper-diagram & Lower-diagram\\
\hline 
Prover &	\locA & $\prefix{001}[]$ & $\prefix{1^{i}01}[]$ \\
Verifier &	\locB  & $\prefix{000}[]$  (same keys) & $\prefix{1^{j}00}[]$ \\
Verifier &	\locC  & $\prefix{0100}[]$ (same keys)   & $\prefix{1^{k}00}[]$ (different keys) \\
Verifier &	\locD & $\prefix{1^{m}01^{n}00}[]$  & $\prefix{1^{h}00}[]$ \\
& $m \neq 0$ or $n > 1$,
&
$i,j \neq k$
and
$h \neq j,k$
\end{tabular}

\caption{
Graphical representation of the distinguishing strategy
described by 
$\chi$ from proof of (\autoref{thm:Feldhofer-attack}).
The top represents transitions of $\SystemMin^F$,
while the bottom represents transitions of $\SpecMin^F$. The shaded player is leading in the game.
}\label{fig:nice}
	\end{figure}

A graphical representation of the distinguishing strategy described by $\chi$ is presented in 
\autoref{fig:nice}.
This diagram uses the same graphical convention as the ones from Examples~\ref{eg:hm-hp-2} and \ref{eg:hm-hp-3}, but does not represent the (infinitely many) possible transitions that are not explored by the formula, for clarity.
The strategy begins with the system leading, in the top of the diagram, while the specification, at the bottom, has to follow events of the system.
The initial event-labelled transitions
performed by the system can be matched by the
tree of possible event transitions of the idealised specification, in the bottom of the diagram.
Notice that, in three of the branches, already after four or five events the specification cannot keep up
with the system, which corresponds to failed branches where the specification erroneously chooses to pair 
an event stemming from a verifier in the system with an event stemming from the prover in the specification.
Transitions that fail to be matched are emphasised in the diagram by explicitly crossing them out in red.
Importantly, since HP-bisimilarity ensures that the structural causality is respected,
the outputs that are crossed out are expected to be in the same location as the preceding input,
and this causal dependency is what cannot be matched by the specification.

There are two branches of the specification that can match the system for more than five events.
In each of these branches, either the first or second verifier, in location $\prefix{000}[]$ or $\prefix{0100}[]$ respectively, would fail to authenticate the prover,
since that verifier would possess the keys of a different prover.
At that moment, the system can choose either of the verifiers
in order to authenticate successfully,
as represented by the two branches of the state space in the top diagram.
Therefore, depending on which branch the specification has taken, in the bottom diagram, the system can play the branch
in the top diagram that the specification cannot keep up with indefinitely.

The strategy is non-trivial at the point where the system branches.
There are two ways (symmetric in both branches) for the specification to respond to the strategy of the system.
\begin{itemize}
\item 
In the shorter of the branches indicated at that point in \autoref{fig:nice},
the specification plays along literally, by feeding the input back into the original prover,
in $\ell_1$ in the diagram, at which point the prover blocks since it cannot authenticate, as indicated by the struck-out output transition, also in $\ell_1$, at that point.
\item
Since this minimal model of unlinkability does not distinguish between actions with provers and verifiers, and the causal dependencies rather than locations are recorded,
there is a longer strategy that specification can play, demanding features of bisimilarity.
The specification can mimic the communication pattern of a successfully authenticating prover by feeding the input into any read that will always respond.
This leads us to a state where the prover is still waiting for an input in the specification, but there is no active prover session on the side of the system.
A trick is then applied to distinguish these two states, described next.
\end{itemize}
The trick for distinguishing a state in which a prover awaits an input from one where there is no active prover is as follows.
\textit{We use the additional power of bisimilarity over similarity} -- that is, that the underlying bisimulation relation is symmetric allowing us to change who is leading in the strategy. The change of leading player in the strategy is indicated in \autoref{fig:nice} by the change of background shading.
At the point, a message, say $z$, is fed into the prover still active on the side of the system, which can be matched by the specification only by feeding $z$ into a verifier.
Any message in place of $z$ is appropriate, as long as it is not a prior output from a verifier loaded with the keys of the prover.
Notice in this penultimate state the prover receiving the input on the side of the specification fails to authenticate upon receiving $z$ and blocks silently, 
while the verifier on side of the system can still perform an output event in the given location.
Therefore, we can distinguish these states by changing the player again and allowing the system to lead with the final event that the system cannot match.
Notice that again structural causality, as preserved by HP-bisimilarity, is essential to link the final output with the penultimate input.

\paragraph{Reflection}
One may argue that some of the branches of this strategy are due to information missing in the model.
\Eg the location or the fact that a tag or verifier is involved would allow the strategy to be simplified significantly.
However, recall that a key objective of this work is to ensure that attacks are not missed even if
the programmer
misses some information when modelling unlinkability.
Thus the programmer need not have the insight to include such information, and yet HP-bisimilarity will indicate 
that the protocol is suspect; whether or not the programmer knows immediately how to interpret the attack,
which typically takes some additional effort.

The strategy presented is not unique of course. There are typically infinitely many strategies when one exists.
An alternative strategy that would prove the above theorem is as follows.
\begin{multline*}
	\chi' \triangleq
	\ediam{\co{c}(nt_1)}{\locp} \ediam{d\,nt_1}{\locra}\ediam{d\,nt_1}{\locrb}\ediam{\co{c}(u_1)}{\locra}\ediam{\co{c}(u_2)}{\locrb} 
	\Big( 
	\begin{array}[t]{l}
	\eboxm{d\,u_1}{\locq}\ediam{\co{c}(w)}{\locq}\ttt
	\\
	\wedge
	\\
	\eboxm{d\,u_2}{\locq}\ediam{\co{c}(w)}{\locq}\ttt
~\Big)
	\end{array}
\end{multline*}

The strategy above, that we illustrate in \autoref{fig:nice-2}, still features box modalities and hence requires the power of bisimilarity in order to be realised.
It is slightly shorter, since the strategy switches players earlier than the previous strategy.
This early switch forces the specification to take the branch where the prover fails to 
authenticate the verifier earlier.\footnote{For readers familiar with the linear-time/branching-time spectrum, who ask whether something coarser than bisimilarity would suffice, notice that in both strategies two changes of play are required, and hence failure similarity would not suffice for either of these strategies.}

\begin{figure}
\includegraphics[width=\linewidth]{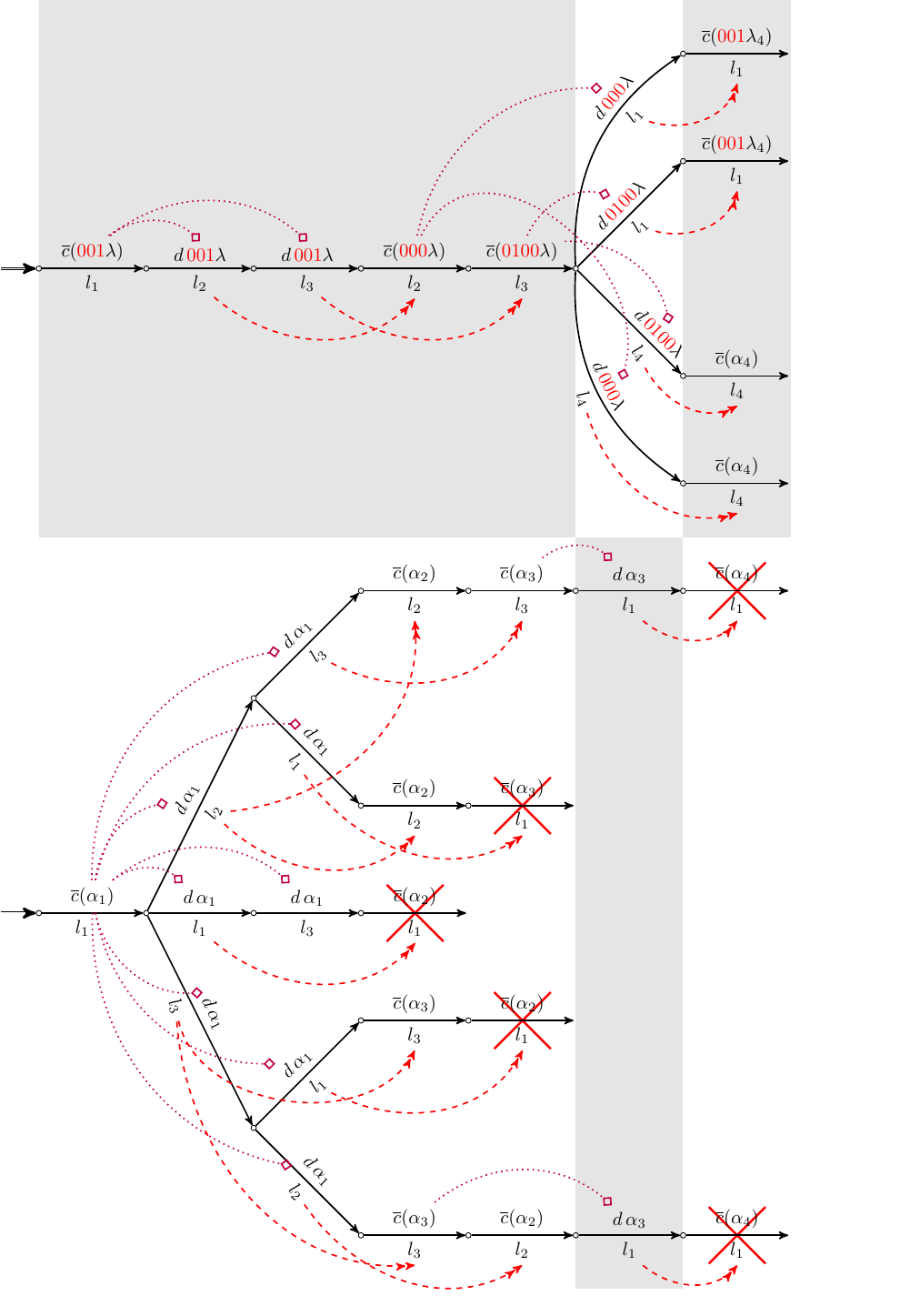}

\caption{
Graphical representation of an
alternative distinguishing strategy,
$\chi'$,
with an earlier change of player.
}\label{fig:nice-2}
\end{figure}

\subsection{HP-similarity misses attacks on unlinkability}

Having demonstrated that
HP-bisimilarity can discover attacks on Feldhofer where $i$-bisimilarity fails,
and having explained that a non-trivial feature of bisimilarity is employed,
a natural question is whether we can prove that HP-similarity really would not suffice.
This can be proven formally, as we do next. 
Recall the style we consider is where we have
 no constant message at the beginning of the 
protocol, we consider the unbounded setting, and we do not have
fresh channels for each new session of the protocol.

Our formal reference for Feldhofer is \autoref{thm:Feldhofer-attack}, showing that there is an attack 
on unlinkability with respect to HP-bisimilarity.
In addition, for BAC,
recall that in \autoref{thm:system-min-silent} we establish 
that 
$\SystemMin
\nbisimi{i}
\SpecMin$
from which it follows immediately, by \autoref{lem:HP-i} that
$\SystemMin
\nbisimi{HP}
\SpecMin$.
Thus for either protocol HP-bisimilarity can discover an attack.
In the theorem below, we show that, in contrast,
 HP-similarity cannot discover an attack on the same formulation of unlinkability
 for either protocol.
We consider BAC, since BAC gives more observables for the attacker to play with,
and hence the same proof lifts immediately to the Feldhofer protocol.
\begin{restatable}{thm}{systemHPtwo}\label{thm:sim-proof}
$
\SystemMin
\simi{HP}
\SpecMin
$
\end{restatable}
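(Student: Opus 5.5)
The plan is to exhibit an explicit HP-simulation $\mathcal{R}$ with $\SystemMin \mathrel{\mathcal{R}^{\id,\emptyset}} \SpecMin$. It refines the interleaving strategy behind \autoref{thm:system-min-interleaving}: every system thread is shadowed by a thread of the specification. Readers and provers of the system that do not succeed in authenticating are mapped to a reader or prover in a \emph{fresh} key-session of the specification. A system prover that is about to authenticate successfully has the rest of its run taken over by a freshly started reader of the specification. Since this is a simulation and not a bisimulation, the specification never has to lead, so the "dummy input" trick of $\chi$ in \autoref{thm:Feldhofer-attack} is not available to the attacker.

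\textbf{States of $\mathcal{R}$.} As in the proof of \autoref{thm:bisim-proof}, I would describe the reachable states of both sides by finite families of partially executed threads. Each prover is at stage $0$–$3$ and each reader at stage $0$–$2$. The relation records:
\begin{itemize}
\item a map from system threads to specification threads, each with its own location;
\item the alias bijection $\rho$;
\item the event relation $\ESS$, which pairs each pending system event with its image.
\end{itemize}
The invariants are as follows. The two frames are statically equivalent under $\rho$: every output on either side is a fresh nonce or a pair $\pair{m}{\mac{m,km}}$ whose $m$ is a ciphertext under a key that the attacker never learns. This gives the third clause of \autoref{def:hp-sim}, by the usual argument that such frames only satisfy the trivial equalities together with the "is a pair" tests. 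The events in $\ran{\ESS}$ lie in pairwise distinct locations exactly when their preimages in $\dom{\ESS}$ do.

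\textbf{Matching moves.}
\begin{itemize}
\item A new nonce output by a system prover is matched by a new prover of a fresh specification key-session.
\item A reader input and its subsequent output are matched by a fresh specification reader. Readers always respond, and the response is statically indistinguishable because the keys are fresh.
\item For a system prover input $d\,N$, the decision is made at that moment. The label, and hence whether $N$ passes both the MAC and the nonce guards for that prover's keys and nonce, is already determined. If $N$ does not pass, the input is matched by the corresponding specification prover, and both sides then block silently. If $N$ passes, the input and the final output are matched by a \emph{fresh reader} of the specification, whose response is a pair of the same shape.
\end{itemize}

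\textbf{Independence conditions.} Link dependencies are preserved automatically, because the matching label is $\pi\rho$ and contains the same aliases. For structural dependencies, an input and its subsequent output are always in a common location on both sides.

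\textbf{Main obstacle.} The delicate point is the switch of a system prover to a fresh specification reader. The system input is structurally dependent on the prover's nonce output whenever that event is still in $\ESS$, whereas the fresh reader's location is independent of everything. I would resolve this with the following lemma.
\begin{quote}
If a prover input $d\,N$ passes the guards, then the prover's nonce output is no longer in $\dom{\ESS}$.
\end{quote}
The argument uses that $nt$ is private. A valid $N$ must contain, via a recipe, a reader response $u$ to $nt$. That reader had to perform an input whose label mentions the nonce's alias, and that input is therefore link-dependent on the nonce output. By \autoref{def:hp-sim}, firing a dependent event drops the nonce event into $\ESS_2$, and hence out of the relation. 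Once this holds, the system input is dependent exactly on the events whose aliases occur in $N$. The specification reader input, carrying the same recipe $N\rho$, has precisely the images of those events as dependencies, so both the $\Indy$ and $\notIndy$ conditions match. I expect this lemma and the bookkeeping of $\ESS$ across the takeover to be the bulk of the work.

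\textbf{Feldhofer.} The same relation, with the MAC checks removed, proves the Feldhofer case.
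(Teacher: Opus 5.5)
Your proposal is correct and is essentially the paper's own proof. The paper likewise shadows each system session by a specification session with fresh keys (pair $2f(k)$), and hands the last input and output of a successfully authenticating prover to a spare fresh reader (session $2f(k)+1$). It relies on exactly your lemma, encoded as the ``is fresh'' side-conditions on $\ESS$: the reader input that consumes the nonce alias has already evicted the prover's nonce output from $\ESS$, so the prover's input has only link dependencies, which $N\rho$ reproduces on the fresh reader. The only difference is bookkeeping: you allocate a fresh specification key-session per system thread, whereas the paper keeps each system reader/prover pair together in specification pair $2f(k)$ and reserves $2f(k)+1$ as the spare reader.
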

The proof appears in \autoref{app:proofs},
since it involves the construction of a large relation 
covering the state space of all behaviours of $\SystemMin$
and how each state may be related a state of $\SpecMin$.
The idea of the proof however is straightforward.
Match actions of almost any session in 
$\SystemMin$
with any session 
in $\SpecMin$, regardless of the keys,
with one exception that we explain.
The exception is at the point when a prover, say $\MainUK_i$
, in $\SystemMin$ receives an input
that will result in $\MainUK_i$ 
 being successfully authenticated.
At that point in an execution, the duplicator matches that input and the resulting
output from $\MainUK_i$ 
by using  
instead
an input and output from a
fresh verifier
session in $\SpecMin$, say $\Reader_{2i + 1}$. 
Since the input and output by a new session of $\Reader_{2i + 1}$ 
 and the final input and output of a successfully authenticating session of $\MainUK_i$ 
  are indistinguishable to an attacker,
the duplicator wins the game.
Furthermore, the constraints on independence and dependence imposed for HP-simularity
 are preserved at that moment when the execution of $\SpecMin$ deviates
from simply following the session in $\SystemMin$.
This is because the first input event of $\Reader_{2i + 1}$ 
 simulating
successfully authenticating $\MainUK_i$ 
is related by link causality to output events 
in the same way as for the corresponding events in  $\MainUK_i$. 

Appealing to \autoref{lem:HP-i}, we obtain as a corollary that 
$
\SystemMin
\simi{i}
\SpecMin
$, which is exactly \autoref{thm:system-min-interleaving} from \autoref{sec:tech_preliminry_intro}.
Thus, an immediate consequence is that we prove that 
there is a formulation of unlinkability of the BAC protocol 
for which simulation is not enough to find attacks, but bisimulation is,
and that moving to HP-semantics does not help us with respect to similarity.
Furthermore, the same applied to Feldhofer.

\begin{figure}
\noindent\includegraphics[width=\linewidth]{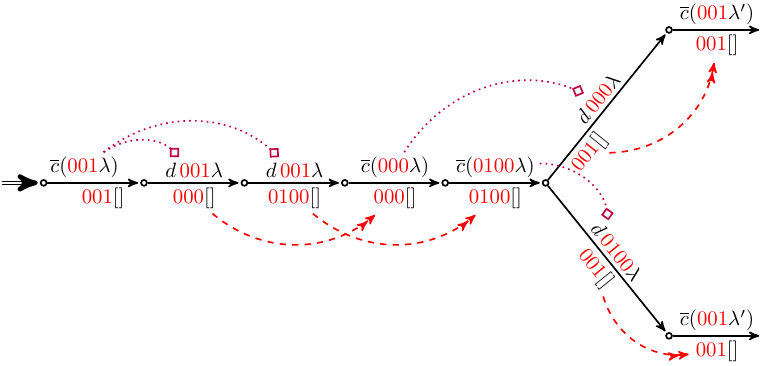}

Legend:
$\prefix{001}$ ePassport,
$\prefix{000}$ reader (same keys),
$\prefix{0100}$ reader (same keys).

\caption{
Graphical representation of a strategy
that is not distinguishing. 
}\label{fig:niceold}
	\end{figure}

To help understand why HP-similarity is insufficient,
consider the $\HPFM$ formula below,
which is graphically represented in \autoref{fig:niceold}.
\begin{multline*}
  \ediam{\co{c}(nt_1)}{\locp} \ediam{d\,nt_1}{\locra}\ediam{d\,nt_1}{\locrb}\ediam{\co{c}(u_1)}{\locra}\ediam{\co{c}(u_2)}{\locrb}\Big( 
	\begin{array}[t]{l}
	\ediam{d\,u_1}{\locp}\ediam{\co{c}(w)}{\locp}
	\ttt
	\\
	\wedge
	\\
	\ediam{d\,u_2}{\locp}\ediam{\co{c}(w)}{\locp}
	\ttt
	~\Big)
	\end{array}
\end{multline*}
The above formula can be satisfied by both the system and specification.
For the specification, a dummy reader session can be used to mimic the last two messages of a successfully authenticating ePassport
in both branches.

\section{Further results for protocol variants with error messages}
\label{sec:protocols}

We summarise here further results obtained during the investigation
of this work that further tighten the analysis by considering variants of protocols.
Thus far we considered a version of the BAC and Feldhofer protocols where the prover fails to authenticate silently,
without sending an error message.
We now address the obvious question of how robust these observations are with respect to varying the protocol.
We thus address here the more specific question of how the presence of an error message gives rise to different strategies.

To express this problem, we extend the non-interleaving structural operational semantics for the applied $\pi$-calculus from Sec.~\ref{sec:SOS} with \texttt{if}-\texttt{then}-\texttt{else}. Until now, we have used only the traditional match $\match{M=N}$ borrowed from the $\pi$-calculus, that behaves as $\texttt{if}\,M=N\,\texttt{then}$ without an \texttt{else}-branch. 
Rules presented in \autoref{fig:else}, in contrast to non-deterministic choice, do not affect the location. This is because, whether one branch is taken or not is determined by whether a guard is enabled or not, and hence we need not remember which branch was taken when permuting transitions.
Therefore, the branches of \texttt{if}-\texttt{then}-\texttt{else} need not be guarded in the same way they are for non-deterministic choices.
\begin{figure}
\[
	\begin{prooftree}
			M \mathrel{=_E} N
			\qquad
				P
				\dlts{\pi}{
					u
				}
				{A}
			\justifies
				\texttt{if}\,
				M = N
				\,\texttt{then}\,
				P
				\,\texttt{else}\,
				Q
				\dlts{\pi}{
					u
				}
				{A}
			\using
			\mbox{\textsc{Then}}
		\end{prooftree}
\qquad
	\begin{prooftree}
			M \mathrel{\neq_E} N
			\qquad
			Q
				\dlts{\pi}{
					u
				}
				{A}
			\justifies
				\texttt{if}\,
				M = N
				\,\texttt{then}\,
				P
				\,\texttt{else}\,
				Q
				\dlts{\pi}{
					u
				}
				{A}
			\using
			\mbox{\textsc{Else}}
		\end{prooftree}
\]
\caption{Rules for \texttt{if-then-else}.}\label{fig:else}		
\end{figure}

We consider Feldhofer only since, other than the error message,
the messages sent are uniform, forcing the attacker to exploit the error message.
Since, as explained in \autoref{sec:meaningless}, unlinkability became meaningless with respect to $i$-bisimilarity
for the silent version of Feldhofer, without an error message, one might believe it remains meaningless even if an explicit error message is introduced.
This turns out not to be the case, for rather surprising reasons.
The following theorem establishes that,
 with respect to $i$-bisimilarity,
unlinkability is violated for this formulation of Feldhofer
in the unbounded setting, and without either fresh channels or a constant message at the beginning.
\begin{thm}\label{thm:system-min-error}
$
\SystemMin^{\error}
\nbisimi{i}
\SpecMin^{\error}
$, 
where
\[
\begin{array}{rl}
\SystemMin^{\error} \triangleq& \bang\mathopen{\nu k.\bang}\left( \Reader^F(c,d,k) \cpar \MainUK^{\error}(c,d,k) \right)
\\\\
\SpecMin^{\error} \triangleq& \bang\mathopen{\nu k.}\left( \Reader^F(c,d,k) \cpar \MainUK^{\error}(c,d,k) \right)
\\\\
\MainUK^{\error}(c,d,k) \triangleq&
\begin{array}[t]{l}
 \nu nt.\cout{c}{nt}.d(y). \\
 \ifo nt = \snd{\dec{y}{k}}\,\theno  \\
    \hspace{1em} \cout{c}{\enc{\pair{nt}{\fst{\dec{y}{k}}}}{k}}
 \\
 \elseo \cout{c}{\error}
        \end{array}
\end{array}
\]
\end{thm}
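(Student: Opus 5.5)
By the first part of \autoref{thm:i-HM}, it suffices to exhibit a formula of classical $\FM$ that $\SystemMin^{\error}$ satisfies and $\SpecMin^{\error}$ does not. I would adapt the strategy of $\psi$ from the proof of \autoref{thm:system-min-silent}. That strategy used pairs to tell a successfully authenticating prover apart from a fresh session. Here the constant $\error$ plays that role: a prover that is already waiting for its second message can be detected, because feeding it a junk message $z$ (a fresh free name) makes it output $\error$. No reader, and no prover that has not yet sent its nonce, can output $\error$ immediately after an input. The candidate formula is
\[
\begin{array}[t]{l}
\diam{\co{c}(nt_1)}\diam{d\,nt_1}\diam{d\,nt_1}\diam{\co{c}(u_1)}\diam{\co{c}(u_2)}\Big(
nt_1 \neq \error \wedge u_1 \neq \error \wedge u_2 \neq \error \wedge
\\
\qquad \diam{d\,u_1}\diam{\co{c}(w)}\big( w \neq \error \wedge \neg\diam{d\,z}\diam{\co{c}(v)}(v = \error)\big)
\\
\qquad \wedge\ \diam{d\,u_2}\diam{\co{c}(w)}\big( w \neq \error \wedge \neg\diam{d\,z}\diam{\co{c}(v)}(v = \error)\big)\Big)
\end{array}
\]
The formula lies outside the simulation fragment, which is consistent with the need for bisimilarity.

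First I would verify $\SystemMin^{\error} \vDash \chi$. Start one key $k$ and one prover with this key, which outputs $nt_1$. Start two readers with the same key $k$. Feed each of them $nt_1$; they output $u_1$ and $u_2$, neither equal to $\error$. In each conjunct, feed $u_i$ to the prover. It authenticates, since the key and the nonce both match, and outputs a ciphertext $w \neq \error$. At this point no prover in the state is awaiting its second input. Every other prover session either has not been started or still has to output its nonce before it can receive anything. Hence no transition $d\,z$ followed by an output of $\error$ exists, and the negated diamond holds. This direction is a direct forward derivation.

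The main work is showing $\SpecMin^{\error} \nvDash \chi$. This requires exploring every way the specification can resolve the diamonds. I would carry out this case analysis using the failure rules of \autoref{fig:modal-1}. The output $nt_1$ must be a nonce from some prover $\MainUK_1$ with key $k_1$. Each $d\,nt_1$ input goes to one of three places: to $\MainUK_1$ itself, to a reader, or, in principle, to any other prover that has already output. If $\MainUK_1$ consumes $nt_1$, it outputs $\error$; this output is then either observed as $u_i$, which fails $u_i \neq \error$, or it is never observed, in which case $\MainUK_1$ is no longer waiting. I would check separately that the latter subcase still fails further down the formula. The outputs $u_1$ and $u_2$ may come from readers, possibly with different keys, or from freshly started provers outputting nonces.

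The crucial observation is the following. In the specification, each key belongs to a single reader/prover pair, so at most one of $u_1,u_2$ is a ciphertext under $k_1$ built on $nt_1$. Consider a conjunct $i$ in which $u_i$ is not such a ciphertext. Feeding $u_i$ to $\MainUK_1$ yields $\error$, which fails $w \neq \error$. Feeding $u_i$ to anything else, such as a reader or a fresh prover, leaves $\MainUK_1$ waiting, so $\diam{d\,z}\diam{\co{c}(v)}(v=\error)$ holds. The same holds when $u_i$ is the nonce of a fresh prover $\MainUK'$: that prover is then waiting and will output $\error$ on input $z$. Either way, the conjunct fails.

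I expect this exhaustive case analysis to be the main obstacle. The difficulty is the bookkeeping of all the sessions spawned by replication, together with the cases where an input is absorbed by an unexpected process. The proof should be organised by which process receives each input and which process emits each output. In every branch, the argument ends either at an equality test involving $\error$ or at a still-waiting prover, which is exactly what the negated modality detects.
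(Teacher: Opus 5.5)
Your proof is correct and follows the paper's route: you exhibit a classical $\FM$ formula outside the simulation fragment and apply \autoref{thm:i-HM}. In that formula the constant $\error$, combined with a box after a junk input $z$, detects provers that are still waiting for, or have just received, their second input. Your formula is the paper's $\psi_{\ref{thm:system-min-error}}$ without its first conjunct $\diam{d\,z}\diam{\co{c}(e)}\big(e = \error \wedge \boxm{d\,z}\boxm{\co{c}(v)}(v \neq \error)\big)$. That conjunct is indeed redundant: the boxes in your two remaining conjuncts already fail whenever a second prover is waiting or some prover has a pending $\error$ output. You should name that pending-$\error$ state explicitly in your case analysis, because feeding a bad $u_i$ to $\MainUK_1$ need not make $w$ itself equal to $\error$; the pending output is instead caught by the negated diamond.
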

\begin{proof}
Consider the following $\FM$ formula.
\[
\psi_{\ref{thm:system-min-error}}
\triangleq
\begin{array}[t]{l}
\diam{\co{c}(nt_1)}\diam{d\,nt_1}\diam{d\,nt_1}
\\
\diam{\co{c}(u_1)}\diam{\co{c}(u_2)}\Big(
\quad
\begin{array}[t]{l}
u_1 \neq \error
\wedge
u_2 \neq \error~\wedge
\\
\diam{d\,z}\diam{\co{c}(e)}\big(
   \begin{array}[t]{l}
   e = \error~\wedge 
   \\
   \boxm{d\,z}\boxm{\co{c}(v)}\left( v \neq \error \right)~\big)
   \end{array}
\\
\wedge
\\
\diam{d\,u_1}\diam{\co{c}(w)}\big(
\begin{array}[t]{l}
  w \neq \error~\wedge \\
  \boxm{d\,z}\boxm{\co{c}(v)}\left( v \neq \error \right)~\big)
\end{array}
\\
\wedge
\\
\diam{d\,u_2}\diam{\co{c}(w)}\big(
\begin{array}[t]{l}
  w \neq \error~\wedge \\
  \boxm{d\,z}\boxm{\co{c}(v)}\left( v \neq \error \right)~\big)~\Big)
\end{array}
\end{array}
\end{array}
\]
Since
$\SystemMin^{\error} \vDash \psi_{\ref{thm:system-min-error}}$ but $\SpecMin^{\error} \nvDash \psi_{\ref{thm:system-min-error}}$,
by \autoref{thm:i-HM},
$\SystemMin^{\error} \nbisimi{i} \SpecMin^{\error}$.
\end{proof}
We explain some features of the formula $\psi_{\ref{thm:system-min-error}}$ above.
The tests $u_1 \neq \error$ and $u_2 \neq \error$ are sufficient
to ensure that neither of the outputs $u_1$ and $u_2$ are coming from the ePassport
 that previously sent $nt_1$,
and hence both $d\,n_1$ and $d\,n_1$ must have been inputs starting two new reader sessions.
This does not, by itself, ensure that $u_1$ and $u_2$ were both set by the 
two readers started by the previous inputs,
since either may instead indicate the initiation of a new ePassports.
However, the first branch of the conjunction, specifically the subformula
\[
\diam{d\,z}\diam{\co{c}(e)}\big(
   e = \error~\wedge 
   \boxm{d\,z}\boxm{\co{c}(v)}\left( v \neq \error \right)~\big)
\]
 rules this out
by checking that at that moment there is exactly one ePassport currently active, \ie this one that sent $nt_1$.
The trick to achieve this is to show that that there is an ePassport that can be induced to send an error 
(as represented by $\diam{d\,z}\diam{\co{c}(e)}$ and $e = \error$)
and, furthermore, having done so no other ePassport is at a stage where it can do the same
(as represented by $\boxm{d\,z}\boxm{\co{c}(v)}\left( v \neq \error \right)$).
This is an elaborate way of achieving the effect of the strategies $\phi_{\ref{thm:system-get}}$, $\phi_{\ref{thm:system-end}}$ and $\phi_{\ref{thm:system-finite}}$, 
that ensure by that point that exactly
two readers have responded to a challenge from the same ePassport.
The other two branches of the conjunction correspond to the branches in the other strategies.
Again, however, this is done in an indirect manner, 
where we ensure that the last message $w$ really was the final message of the ePassport,
indicating successful authentication, and not the start of a new session,
by checking that at that point there is not ePassport that is already started
and therefore ready to receive an input that will induce an error. 

The above distinguishing strategy
also works when analysing the BAC protocol when constant error messages are present~\cite{Arapinis2010,Hirschi2019,Horne2021},
 since the strategy does not exploit any feature specific to Feldhofer.
Hence no formulation of the BAC protocol is unlinkable 
with respect to any formulation
of unlinkability derived from the literature if $i$-bisimilarity is used as the notion of equivalence.

One might ask whether the above, rather complex formula may be simplified with respect to HP-bisimilarity.
There is the following event-labelled $\FM$ formula, where some parts of the formula are trimmed away thanks to knowing when an output is caused by a particular input.
\[
\chi_{\ref{thm:system-min-error}}
\triangleq
\begin{array}[t]{l}
\ediam{\co{c}(nt_1)}{\prefix{001}[]}
\ediam{d\,nt_1}{\prefix{000}[]}
\ediam{d\,nt_1}{\prefix{0100}[]}
\\
\ediam{\co{c}(u_1)}{\prefix{000}[]}
\ediam{\co{c}(u_2)}{\prefix{0100}[]}
\Big(
\quad
\begin{array}[t]{l}
{u_1 \neq \error}
\wedge
{u_2 \neq \error}~\wedge
\\
\wedge
\\
\ediam{d\,u_1}{\prefix{001}[]}
\ediam{\co{c}(w)}{\prefix{001}[]}
\big(
\begin{array}[t]{l}
  w \neq \error~\wedge \\
  \eboxm{d\,z}{\prefix{1000}[]}\eboxm{\co{c}(v)}{\prefix{1000}[]}\left( v \neq \error \right)~\big)
\end{array}
\\
\wedge
\\
\ediam{d\,u_2}{\prefix{001}[]}
\ediam{\co{c}(w)}{\prefix{001}[]}
\big(
\begin{array}[t]{l}
  w \neq \error~\wedge \\
  \eboxm{d\,z}{\prefix{1000}[]}\eboxm{\co{c}(v)}{\prefix{1000}[]}\left( v \neq \error \right)~\big)~\Big)
\end{array}
\end{array}
\end{array}
\]

In contrast to the above observations,
a proof strategy similar to that which was used to establish Theorem~\ref{thm:sim-proof}
can be used to prove that HP-similarity cannot discover attacks on this minimal
formulation of unlinkability applied to BAC and Feldhofer with a cleartext error message.

\section{Discussion}
\label{sec:discussion}
In this section, we present the insights gained from our research on
how the choice of semantics influences the ability to detect privacy
and security attacks. We begin with a summary of the main findings and
then turn to future work making a comparison of the verification strength of ST semantics
versus located semantics, as well as a discussion of tooling aspects.

\subsection{Main findings}
The mantra of this work is: the language stays the same;
the semantics work harder for the programmer, thereby making privacy problems more robust against small stylistic differences.
By keeping the language the same, the same usual applied $\pi$-calculus syntax, used in many protocol analysis tools,
can be employed.
The modal logics introduced for describing attacks differ little from those which have already been established 
for privacy analysis in the literature.
We have, however, proposed an
enriched semantics for the applied $\pi$-calculus to ensure that histories are preserved
(\autoref{def:hp-sim}, \autoref{fig:modal-2}, \autoref{thm:HP-HM}).

The existence of security and privacy problems sensitive to branching-time
has been hypothesised for some time~\cite{Ryan2001,doi:10.3233/JCS-1994/1995-3103}.
Only recently has unlinkability been shown to be sensitive to branching-time semantics~\cite{Filimonov2019},
and the current paper is the first to prove that there exist
unlinkability problems 
for which even similarity does not suffice (\autoref{thm:sim-proof}),
and the distinguishing power of bisimilarity is required.

The surprising insight regarding unlinkability this work draws attention to is how
seemingly irrelevant variations in models of unlinkability impact the effectiveness of a model of unlinkability in terms of discovering attacks.
We demarcate boundaries between problems as follows.
\begin{description}
\item[{\color{DarkGreen} Feldhofer / BAC (get / ch / $2$)}] 
Using $i$-simulation is enough to discover attacks
on formulations of unlinkability for
styles with implicit location information, \eg for both the BAC and Feldhofer protocols,
that: (a) start with a constant cleartext (\autoref{thm:system-get}),
(b) create fresh endpoint channels for each session (\autoref{thm:system-end})
or (c) involve a bounded number of sessions (\autoref{thm:system-finite}).

\item[{\color{blue} BAC (min)}]
We consider formulations of unlinkability of the BAC protocol with an unbounded number of sessions,
with neither fresh endpoint channels nor a constant message signalling the beginning of each session.
For that formulation,
HP-similarity cannot discover an attack (\autoref{thm:sim-proof}),
yet $i$-bisimilarity can discover an attack (\autoref{thm:system-min-silent}).
This observation holds whether a prover (ePassport) fails authentication silently or with an error message (\autoref{thm:system-min-error}).

\item[{\color{red} Feldhofer (min)}]
We consider the formulation of the Feldhofer protocol in the same style as above,
and specifically where the prover fails authentication silently.
For that formulation 
again, HP-similarity cannot discover an attack (\autoref{thm:sim-proof}),
but, in addition, neither can $i$-bisimilarity (\autoref{thm:bisim-proof}).
Of the equivalences considered, only HP-bisimilarity can discover attacks (\autoref{thm:min-HP}).
\end{description} 
The three problems described in the three items above correspond to the three dotted lines in the diagram from \autoref{fig:schema_summary}.
Each line graphically separates semantics 
that can be used to find attacks, above the line, from those
semantics that cannot find attacks, below the line: the last line, corresponding to the trace semantics, cannot find any attack.

\begin{figure}[t]
	{
		\centering
		\includegraphics[width=\linewidth]{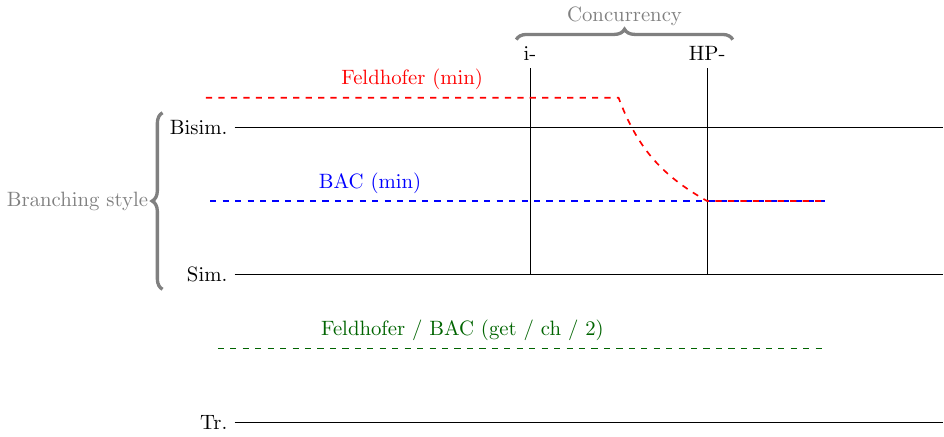}

	}
	\caption{The three main unlinkability problems vs.\ semantics useful to find attacks (if above the dotted lines).}
	\label{fig:schema_summary}
\end{figure}

Notice that, of the semantics explored,
only HP-bisimilarity, in the top right of the diagram,
finds attacks for all formulations of unlinkability and for all protocols considered.
Also notice, above, that selecting 
similarity misses attacks for all protocol styles from \autoref{table:stylistic-diff} (the ones with get message, with endpoints, or bounded)
unless a model of unlinkability with implicit location information is provided.
Both of these observations are new, and were not obvious to the authors
before conducting this investigation.

\subsection{ST semantics miss attacks; located semantics find attacks}

Between interleaving and history-preserving semantics, there is a spectrum of 
non-interleaving semantics.
It is natural to ask whether a less discriminating non-interleaving semantics than HP-bisimilarity can also
discover attacks on the Feldhofer protocol.
To this end, we developed, in a companion paper~\cite{Aubert2022k}, ST-bisimilarity (start-terminate bisimilarity)
for the applied $\pi$-calculus, which preserves the timing of events.
We were surprised to discover that ST-bisimilarity
 was not able to prove an analogous result to \autoref{thm:Feldhofer-attack}.
Indeed, using ST-bisimilarity, we can strengthen \autoref{thm:bisim-proof},
such that $\SystemMin^{F} \bisimi{ST} \SpecMin^{F}$,
thereby ruling out ST-bisimilarity (and hence timing observations alone)
as a method for detecting attacks.

While ST semantics are insufficient for finding all attacks,
another branch of non-interleaving semantics, where locations are directly observable on labels is however effective for discovering attacks~\cite{Castellani2001,Sangiorgi1996}.
Located semantics are related to \enquote{session} equivalences under development in the DeepSec tool~\cite{Cheval2019}.
There is related work on \enquote{causal} 
bisimilarity for the $\pi$-calculus~\cite{Boreale1998},
which is strictly 
finer than HP-bisimilarity.
This is because causal bisimilarity observes the difference between structural and link causality.
For example,
although
$
\id \cpar (\mathopen{\nu n.}\left(
{\cout{a}{n}}
\cpar {\cin{n}{x}}
\right))
\bisimi{HP}
\id \cpar (
\mathopen{\nu n.}\left(
\cout{a}{n}.
{\cin{n}{x}}
\right))
$
holds, these processes are distinguished by causal bisimilarity, because \textcquote[p.~387]{Boreale1998}{there is both a subject and an object dependency between the actions \textins{in the former}, whereas in \textins{the latter} there is only an object dependency}.
Note, a suitable located or causal similarity can discover attacks on all formulations of unlinkability considered.
That is, unlike HP-bisimilarity, the full power of bisimilarity would not be required --
an observation that would be interesting to elaborate on in future work.

A formal analysis of the claim that pomset traces cannot discover attacks on any of formulations of unlinkability in this paper is also postponed to future work.
While it is relatively clear what the definition of pomset traces 
should be, it would need to be lifted to the applied $\pi$-calculus.
This is related to old work on \enquote{execution DAGs} (or Mazurkiewicz traces) recording all input-output dependencies~\cite{Clarke2000,Cremers2004,Fokkink2010,Basin2010}.

\subsection{Applicability of HP-bisimilarity to Tamarin and other tooling problems}

Although we fix the applied $\pi$-calculus for this study,
we argue that HP-bisimilarity would also be a natural choice of equivalence 
for protocol verification tools such as Tamarin~\cite{Meier2013}.
Currently, Tamarin does not handle any notion of bisimilarity, implementing a notion of equivalence checking 
where only differences in messages are accounted for.
Our argument that HP-bisimilarity could provide a suitable notion of behavioural equivalence for Tamarin is based on the following observations,
where we map concepts in this work to Tamarin.

Firstly, we explain how structural dependencies arise in Tamarin.
State transitions in Tamarin are achieved by rewriting a set of multisets. Each element of a multiset forms a natural notion of location that can be used in an adaption of \autoref{def:location-and-events}.
Partially ordered location names~\cite{Castellani1995,Sangiorgi1996} can be assigned to elements of multisets,
where the partial order captures structural dependencies.
Two locations are structurally dependent when one appears on the left and another appears on the right of 
an instance of a multiset rewrite.
Indeed, such structural dependencies are already explicit in Tamarin, as \enquote{origin} arrows.

Secondly, we explain that link dependencies can be defined between inputs and outputs similarly to \autoref{def:independence}.
An input is dependent on an output if the input message is constructed using exactly that output instance.
Indeed, input messages for this purpose may be represented similarly to the applied $\pi$-calculus in \autoref{figure:active},
by providing a message representing a recipe for constructing the input from outputs and public information.
Again, this information is available already in Tamarin, since it can be extracted via a correspondence between messages and constructs such as $KU$ and $KD$ that manipulate attacker knowledge.

The definition of HP-bisimilarity itself would then require only minor adjustments
for it to be reused for the equivalence checking of models defined using Tamarin.
Based on such a definition,
Tamarin could be extended in future work
to allow for two protocol models defined in terms of multi-set rewrite rules and restrictions
to be compared using HP-bisimilarity.

Observe, furthermore, that Tamarin does not feature channels as primitive.
This means that Tamarin lends itself naturally to the \enquote{minimal} style and interpretation in \autoref{table:stylistic-diff}
where channels play no active role.
This \enquote{minimal} style is exactly where HP-bisimilarity has impact,
as demonstrated by our leading examples Theorems~\ref{thm:system-min-interleaving} and \ref{thm:min-HP}. 
Indeed, the attack strategies on the Feldhofer protocol in \autoref{thm:min-HP}
would be valid also in a faithful translation of the unlinkability problem to two Tamarin models
representing each process.

Regarding tooling for the applied $\pi$-calculus, 
$i$-bisimilarity checking has recently been added to an experimental branch of ProVerif~\cite{Cheval23}
and DeepSec~\cite{Cheval24}.
Future work could explore whether those implementations can be adapted to HP-bisimilarity.
An implementation would enable experimental evaluation of the impact of HP-bisimilarity on verification in comparison to $i$-bisimilarity.
While bisimilarity is undecidable in general~\cite{DBLP:journals/tcs/Jancar95,DBLP:conf/ac/Esparza96}, a complexity analysis is possible for 
bounded processes with a particular class of message theory. 
For the message theory running throughout this paper, related work found $i$-bisimilarity to be coNEXP-complete~\cite{Cheval24}.
Future work could leverage results on the high complexity of deciding HP-bisimilarity for bounded Petri nets~\cite{DBLP:conf/stacs/Vogler91,DBLP:conf/stacs/MontanariP97,DBLP:journals/lmcs/CescoG23} for determining the complexity class of HP-bisimilarity for the applied $\pi$-calculus.

The development of tools would assist with larger unlinkability problems.
Further to those mentioned already, examples of unlinkability problems include the 
client unlinkability of TLS 1.3 with Encrypted Client Hello~\cite{Bhargavan22}.
In that problem, multiple connections of a client using a valid certificate for an honest server 
cannot be distinguished from connections by different clients also with such valid certificates.
Related work investigates the unlinkability of payment protocols~\cite{Bursuc2023}, making use of what we call the \enquote{endpoint} style in Table~\ref{table:stylistic-diff}.
Better tool support would enable us to investigate further the impact of stylistic and semantic choices on such rich examples.
Many privacy properties of protocols are amenable to the methods in this paper,
not limited to the anonymity of Noise~\cite{Girol2020}, peer deniability of Signal~\cite{Signal2017}, and issuer unlinkability of W3C Verifiable Credentials~\cite{Braun2024}.

\section{Conclusion}
\label{sec:conclusion}

This study has covered key stylistic differences between models of what should be the same ePassport unlinkability problem.
The study finds that different formulations of unlinkability are sensitive to different semantics.
In particular, for attacks on unlinkability problems discovered using branching-time equivalences,
some utilise the power of switching players in the bisimulation game.
Furthermore, some attacks also required the power of non-interleaving semantics as we have investigated by developing and applying a theory of HP-bisimilarity for the applied $\pi$-calculus.
This shows that the power of the attacker in a threat model for privacy problems is connected in a subtle way to the semantics of processes used to model a privacy problem in a symbolic Dolev-Yao style.
The class of models that are most impacted are in fact a natural class of models where nonessential information about channels and non-cryptographic material is stripped away, suggesting that insight gained connecting non-interleaving semantics with privacy problems is reusable across other models of cryptographic protocols that do not feature channels as primitive.

\subsection*{Acknowledgements}
We are grateful to the anonymous reviewers for their detailed reading of the semantics and suggestions connecting this work with classic work on noninterference.
This research did not receive any specific grant from funding agencies in the public, commercial, or not-for-profit sectors, except C.\ Aubert, from the \href{https://www.nsf.gov}{National Science Foundation} under Grant No.: \href{https://www.nsf.gov/awardsearch/showAward?AWD_ID=2242786}{2242786}.

\bibliographystyle{elsarticle-num.bst}
\bibliography{bib}

@inproceedings{Cheval23,
	title        = {Indistinguishability Beyond Diff-Equivalence in ProVerif},
	author       = {Vincent Cheval and Itsaka Rakotonirina},
	year         = 2023,
	booktitle    = {36th {IEEE} Computer Security Foundations Symposium, {CSF} 2023, Dubrovnik, Croatia, July 10-14, 2023},
	publisher    = {{IEEE}},
	pages        = {184--199},
	doi          = {10.1109/CSF57540.2023.00036},
	bibsource    = {dblp computer science bibliography, https://dblp.org}
}

@article{Cheval24,
	title        = {DeepSec: Deciding Equivalence Properties for Security Protocols -- Improved theory and practice},
	author       = {Vincent Cheval and Steve Kremer and Itsaka Rakotonirina},
	year         = 2024,
	month        = {Mar},
	journal      = {TheoretiCS},
	volume       = {Volume 3},
	doi          = {10.46298/theoretics.24.4},
	issn         = {2751-4838},
	eid          = 4
}

@inproceedings{Signal2017,
	title        = {A Formal Security Analysis of the Signal Messaging Protocol},
	author       = {Cohn-Gordon, Katriel and Cremers, Cas and Dowling, Benjamin and Garratt, Luke and Stebila, Douglas},
	year         = 2017,
	booktitle    = {2017 IEEE European Symposium on Security and Privacy (EuroS\&P)},
	volume       = {},
	number       = {},
	pages        = {451--466},
	doi          = {10.1109/EuroSP.2017.27}
}

@inproceedings{Braun2024,
	title        = {{SSI}, from Specifications to Protocol? {F}ormally Verify Security!},
	author       = {Christoph H.{-}J. Braun and Ross Horne and Tobias K{\"{a}}fer and Sjouke Mauw},
	year         = 2024,
	booktitle    = {Proceedings of the {ACM} on Web Conference 2024, {WWW} 2024, Singapore, May 13-17, 2024},
	publisher    = {{ACM}},
	pages        = {1620--1631},
	doi          = {10.1145/3589334.3645426},
	editor       = {Tat{-}Seng Chua and Chong{-}Wah Ngo and Ravi Kumar and Hady W. Lauw and Roy Ka{-}Wei Lee},
	bibsource    = {dblp computer science bibliography, https://dblp.org}
}

@inproceedings{Girol2020,
	title        = {A spectral analysis of {Noise}: A comprehensive, automated, formal analysis of {Diffie-Hellman} protocols},
	author       = {Girol, Guillaume and Hirschi, Lucca and Sasse, Ralf and Jackson, Dennis and Cremers, Cas and Basin, David},
	year         = 2020,
	booktitle    = {29th USENIX Security Symposium (USENIX Security 20)},
	pages        = {1857--1874}
}

@inproceedings{Bursuc2023,
	title        = {Provably Unlinkable Smart Card-based Payments},
	author       = {Bursuc, Sergiu and Horne, Ross and Mauw, Sjouke and Yurkov, Semen},
	year         = 2023,
	booktitle    = {Proceedings of the 2023 ACM SIGSAC Conference on Computer and Communications Security},
	location     = {Copenhagen, Denmark},
	publisher    = {Association for Computing Machinery},
	address      = {New York, NY, USA},
	series       = {CCS '23},
	pages        = {1392–1406},
	doi          = {10.1145/3576915.3623109},
	isbn         = 9798400700507,
	numpages     = 15
}

@inproceedings{Bhargavan22,
	title        = {A Symbolic Analysis of Privacy for TLS 1.3 with Encrypted Client Hello},
	author       = {Bhargavan, Karthikeyan and Cheval, Vincent and Wood, Christopher},
	year         = 2022,
	booktitle    = {Proceedings of the 2022 ACM SIGSAC Conference on Computer and Communications Security},
	location     = {Los Angeles, CA, USA},
	publisher    = {Association for Computing Machinery},
	address      = {New York, NY, USA},
	series       = {CCS '22},
	pages        = {365–379},
	doi          = {10.1145/3548606.3559360},
	isbn         = 9781450394505,
	numpages     = 15
}

@incollection{Castellani2001,
	title        = {Process Algebras with Localities},
	author       = {Ilaria Castellani},
	year         = 2001,
	booktitle    = {Handbook of Process Algebra},
	publisher    = {Elsevier Science},
	address      = {Amsterdam},
	pages        = {945--1045},
	doi          = {10.1016/B978-044482830-9/50033-3},
	isbn         = {978-0-444-82830-9},
	editor       = {J.A. Bergstra and A. Ponse and S.A. Smolka}
}

@inproceedings{Feldhofer2004,
	title        = {Strong Authentication for {RFID} Systems Using the {AES} Algorithm},
	author       = {Martin Feldhofer and Sandra Dominikus and Johannes Wolkerstorfer},
	year         = 2004,
	booktitle    = {Cryptographic Hardware and Embedded Systems - CHES 2004},
	publisher    = {Springer},
	pages        = {357--370},
	doi          = {10.1007/978-3-540-28632-5_26},
	editor       = {Marc Joye and Jean-Jacques Quisquater}
}

@misc{CC,
	title        = {Common Criteria for Information Technology Security Evaluation, Part 2: Security functional components. Version 3.1, Revision 5},
	year         = 2017,
	number       = {CCMB-2017-04-002},
	url          = {https://www.commoncriteriaportal.org/files/ccfiles/CCPART2V3.1R5.pdf},
	urldate      = {2023-03-25}
}

@inproceedings{Joyal1993,
	title        = {Bisimulation and open maps},
	author       = {Andr{\'{e}} Joyal and Mogens Nielsen and Glynn Winskel},
	year         = 1993,
	booktitle    = {Proceedings of the Eighth Annual Symposium on Logic in Computer Science {(LICS} '93), Montreal, Canada, June 19-23, 1993},
	publisher    = {{IEEE} Computer Society},
	pages        = {418--427},
	doi          = {10.1109/LICS.1993.287566},
	bibsource    = {dblp computer science bibliography, https://dblp.org}
}

@inproceedings{Hirschi2016,
	title        = {A method for verifying privacy-type properties: the unbounded case},
	author       = {Hirschi, Lucca and Baelde, David and Delaune, St{\'e}phanie},
	year         = 2016,
	booktitle    = {Security and Privacy (S\&P), 2016 IEEE Symposium on},
	pages        = {564--581},
	doi          = {10.1109/SP.2016.40},
	organization = {IEEE}
}

@article{Hirschi2019,
	title        = {A method for unbounded verification of privacy-type properties},
	author       = {Lucca Hirschi and David Baelde and St{\'e}phanie Delaune},
	year         = 2019,
	journal      = {Journal of Computer Security},
	publisher    = {IOS Press},
	volume       = 27,
	number       = 3,
	pages        = {277--342},
	doi          = {10.3233/JCS-171070}
}

@inproceedings{Filimonov2019,
	title        = {Breaking Unlinkability of the {ICAO} 9303 Standard for e-{P}assports Using Bisimilarity},
	author       = {Ihor Filimonov and Ross Horne and Sjouke Mauw and Zach Smith},
	year         = 2019,
	booktitle    = {Computer Security - {ESORICS} 2019 - 24th European Symposium on Research in Computer Security, Luxembourg, September 23-27, 2019, Proceedings, Part {I}},
	publisher    = {Springer},
	series       = {LNCS},
	volume       = 11735,
	pages        = {577--594},
	doi          = {10.1007/978-3-030-29959-0_28},
	isbn         = {978-3-030-29958-3},
	editor       = {Kazue Sako and Steve A. Schneider and Peter Y. A. Ryan}
}

@techreport{ICAO,
	title        = {Machine Readable Travel Documents. Part 11: Security mechanisms for {MRTDs}},
	year         = 2015,
	number       = {Doc 9303. Seventh Edition},
	organization = {International Civil Aviation Organization (ICAO)},
	institution  = {International Civil Aviation Organization (ICAO)}
}

@article{Castellani1995,
	title        = {Observing distribution in processes: static and dynamic localities},
	author       = {Castellani, Ilaria},
	year         = 1995,
	journal      = {Int. J. Found. Comput. Sci.},
	publisher    = {World Scientific},
	volume       = 6,
	number       = {04},
	pages        = {353--393},
	doi          = {10.1142/S0129054195000196}
}

@article{Sangiorgi1996,
	title        = {Locality and interleaving semantics in calculi for mobile processes},
	author       = {Davide Sangiorgi},
	year         = 1996,
	journal      = {Theor.\ Comput.\ Sci.},
	volume       = 155,
	number       = 1,
	pages        = {39--83},
	doi          = {10.1016/0304-3975(95)00020-8},
	issn         = {0304-3975}
}

@article{Glabbeek1997,
	title        = {The Difference between Splitting in $n$ and $n+1$},
	author       = {van Glabbeek, Rob J. and Frits W. Vaandrager},
	year         = 1997,
	journal      = {Inf. Comput.},
	volume       = 136,
	number       = 2,
	pages        = {109--142},
	doi          = {10.1006/inco.1997.2634}
}

@article{Vogler1996,
	title        = {The Limit of Split$_n$-Language Equivalence},
	author       = {Walter Vogler},
	year         = 1996,
	journal      = {Inf. Comput.},
	volume       = 127,
	number       = 1,
	pages        = {41--61},
	doi          = {10.1006/inco.1996.0048},
	issn         = {0890-5401}
}

@article{Basin2010,
	title        = {Constraint differentiation: Search-space reduction for the constraint-based analysis of security protocols},
	author       = {M{\"o}dersheim, Sebastian and Vigan{\`o}, Luca and Basin, David},
	year         = 2010,
	journal      = {J. Comput. Secur.},
	publisher    = {IOS Press},
	volume       = 18,
	number       = 4,
	pages        = {575--618},
	doi          = {10.3233/JCS-2009-0351}
}

@inproceedings{Aubert2022k,
	title        = {Bisimulations Respecting Duration and Causality for the Non-interleaving Applied $\pi$-Calculus},
	author       = {Aubert, Clément and Horne, Ross and Johansen, Christian},
	year         = 2022,
	month        = sep,
	booktitle    = {Proceedings Combined 29th International Workshop on Expressiveness in Concurrency and 19th Workshop on Structural Operational Semantics, Warsaw, Poland, 12th September 2022},
	publisher    = {Open Publishing Association},
	series       = {Electronic Proceedings in Theoretical Computer Science},
	volume       = 368,
	pages        = {3--22},
	doi          = {10.4204/EPTCS.368.1},
	editor       = {Castiglioni, Valentina and Mezzina, Claudio A.}
}

@article{Ryan2001,
	title        = {Process algebra and non-interference},
	author       = {Ryan, Peter Y. A. and Schneider, Steve A.},
	year         = 2001,
	journal      = {Journal of Computer Security},
	volume       = 9,
	number       = {1-2},
	pages        = {75--103},
	doi          = {10.3233/JCS-2001-91-204}
}

@inproceedings{Aubert2022e,
	title        = {Diamonds for Security: A Non-Interleaving Operational Semantics for the Applied Pi-Calculus},
	author       = {Aubert, Cl{\'e}ment and Horne, Ross and Johansen, Christian},
	year         = 2022,
	booktitle    = {33rd International Conference on Concurrency Theory},
	publisher    = {Schloss Dagstuhl--Leibniz-Zentrum f{\"u}r Informatik},
	series       = {Leibniz International Proceedings in Informatics},
	volume       = 243,
	pages        = {30:1--30:26},
	doi          = {10.4230/LIPIcs.CONCUR.2022.30},
	editor       = {Bartek Klin and S\l{}awomir Lasota and Anca Muscholl}
}

@inproceedings{Aubert2020b,
	title        = {How Reversibility Can Solve Traditional Questions: The Example of Hereditary History-Preserving Bisimulation},
	author       = {Aubert, Clément and Cristescu, Ioana},
	year         = 2020,
	booktitle    = {31st International Conference on Concurrency Theory, {CONCUR} 2020, September 1--4, 2020, Vienna, Austria},
	publisher    = {Schloss Dagstuhl},
	series       = {LIPIcs},
	volume       = 2017,
	pages        = {13:1--13:24},
	doi          = {10.4230/LIPIcs.CONCUR.2020.13},
	editor       = {Konnov, Igor and Kov\'{a}cs, Laura}
}

@article{Abadi2018,
	title        = {The Applied Pi Calculus: Mobile Values, New Names, and Secure Communication},
	author       = {Abadi, Mart{\'\i}n and Blanchet, Bruno and Fournet, C{\'e}dric},
	year         = 2018,
	journal      = {J.\ ACM},
	volume       = 65,
	number       = 1,
	pages        = {1:1--1:41},
	doi          = {10.1145/3127586}
}

@inproceedings{hildebrandt1996comparing,
	title        = {Comparing Transition Systems with Independence and Asynchronous Transition Systems},
	author       = {Thomas T. Hildebrandt and Vladimiro Sassone},
	year         = 1996,
	booktitle    = {{CONCUR} '96, Concurrency Theory, 7th International Conference, Pisa, Italy, August 26-29, 1996, Proceedings},
	publisher    = {Springer},
	series       = {LNCS},
	volume       = 1119,
	pages        = {84--97},
	doi          = {10.1007/3-540-61604-7_49},
	editor       = {Ugo Montanari and Vladimiro Sassone}
}

@article{Johansen2016,
	title        = {{ST}-structures},
	author       = {Christian Johansen},
	year         = 2016,
	journal      = {J. Log. Algebraic Methods Program.},
	volume       = 85,
	number       = 6,
	pages        = {1201--1233},
	doi          = {10.1016/j.jlamp.2015.10.009}
}

@inproceedings{DBLP:conf/fosad/FocardiG00,
	title        = {Classification of Security Properties (Part {I:} Information Flow)},
	author       = {Riccardo Focardi and Roberto Gorrieri},
	year         = 2000,
	booktitle    = {Foundations of Security Analysis and Design, Tutorial Lectures [revised versions of lectures given during the {IFIP} {WG} 1.7 International School on Foundations of Security Analysis and Design, {FOSAD} 2000, Bertinoro, Italy, September 2000]},
	publisher    = {Springer},
	series       = {LNCS},
	volume       = 2171,
	pages        = {331--396},
	doi          = {10.1007/3-540-45608-2_6},
	editor       = {Riccardo Focardi and Roberto Gorrieri},
	bibsource    = {dblp computer science bibliography, https://dblp.org}
}

@article{doi:10.3233/JCS-1994/1995-3103,
	title        = {A Classification of Security Properties for Process Algebras1},
	author       = {Riccardo Focardi and Roberto Gorrieri},
	year         = 1995,
	journal      = {Journal of Computer Security},
	volume       = 3,
	number       = 1,
	pages        = {5--33},
	doi          = {10.3233/JCS-1994/1995-3103}
}

@inproceedings{DBLP:conf/fosad/FocardiGM02,
	title        = {Classification of Security Properties - Part {II:} Network Security},
	author       = {Riccardo Focardi and Roberto Gorrieri and Fabio Martinelli},
	year         = 2002,
	booktitle    = {Foundations of Security Analysis and Design II, {FOSAD} 2001/2002 Tutorial Lectures},
	publisher    = {Springer},
	series       = {LNCS},
	volume       = 2946,
	pages        = {139--185},
	doi          = {10.1007/978-3-540-24631-2_4},
	editor       = {Riccardo Focardi and Roberto Gorrieri}
}

@article{DBLP:journals/tcs/Jancar95,
	title        = {Undecidability of Bisimilarity for Petri Nets and Some Related Problems},
	author       = {Petr Jancar},
	year         = 1995,
	journal      = {Theor. Comput. Sci.},
	volume       = 148,
	number       = 2,
	pages        = {281--301},
	doi          = {10.1016/0304-3975(95)00037-W}
}

@inproceedings{DBLP:conf/ac/Esparza96,
	title        = {Decidability and Complexity of Petri Net Problems - An Introduction},
	author       = {Javier Esparza},
	year         = 1996,
	booktitle    = {Lectures on Petri Nets {I:} Basic Models, Advances in Petri Nets, the volumes are based on the Advanced Course on Petri Nets, held in Dagstuhl, September 1996},
	publisher    = {Springer},
	series       = {LNCS},
	volume       = 1491,
	pages        = {374--428},
	doi          = {10.1007/3-540-65306-6_20},
	editor       = {Wolfgang Reisig and Grzegorz Rozenberg}
}

@inproceedings{DBLP:conf/stacs/Vogler91,
	title        = {Bisimulation and Action Refinement},
	author       = {Walter Vogler},
	year         = 1991,
	booktitle    = {{STACS} 91, 8th Annual Symposium on Theoretical Aspects of Computer Science, Hamburg, Germany, February 14-16, 1991, Proceedings},
	publisher    = {Springer},
	series       = {LNCS},
	volume       = 480,
	pages        = {309--321},
	doi          = {10.1007/BFB0020808},
	editor       = {Christian Choffrut and Matthias Jantzen}
}

@inproceedings{DBLP:conf/stacs/MontanariP97,
	title        = {Minimal Transition Systems for History-Preserving Bisimulation},
	author       = {Ugo Montanari and Marco Pistore},
	year         = 1997,
	booktitle    = {{STACS} 97, 14th Annual Symposium on Theoretical Aspects of Computer Science, L{\"{u}}beck, Germany, February 27 - March 1, 1997, Proceedings},
	publisher    = {Springer},
	series       = {LNCS},
	volume       = 1200,
	pages        = {413--425},
	doi          = {10.1007/BFB0023477},
	editor       = {R{\"{u}}diger Reischuk and Michel Morvan}
}

@article{DBLP:journals/lmcs/CescoG23,
	title        = {Decidability of Two Truly Concurrent Equivalences for Finite Bounded Petri Nets},
	author       = {Arnaldo Cesco and Roberto Gorrieri},
	year         = 2023,
	journal      = {Log. Methods Comput. Sci.},
	volume       = 19,
	number       = 4,
	doi          = {10.46298/LMCS-19(4:37)2023}
}

@inproceedings{Glabeek1989,
	title        = {Equivalence Notions for Concurrent Systems and Refinement of Actions (Extended Abstract)},
	author       = {van Glabbeek, Rob J. and Goltz, Ursula},
	year         = 1989,
	booktitle    = {MFCS},
	publisher    = {Springer},
	series       = {LNCS},
	volume       = 379,
	pages        = {237--248},
	doi          = {10.1007/3-540-51486-4_71},
	isbn         = {3-540-51486-4},
	editor       = {Kreczmar, Antoni and Mirkowska, Grazyna}
}

@article{Horne2021,
	title        = {Discovering e{P}assport Vulnerabilities using Bisimilarity},
	author       = {Ross Horne and Sjouke Mauw},
	year         = 2021,
	journal      = {Log.\ Meth.\ Comput.\ Sci.},
	volume       = 17,
	number       = 2,
	pages        = 24,
	doi          = {10.23638/LMCS-17(2:24)2021}
}

@article{Blanchet2016,
	title        = {Modeling and Verifying Security Protocols with the Applied Pi Calculus and ProVerif},
	author       = {Bruno Blanchet},
	year         = 2016,
	journal      = {Foundations and Trends in Privacy and Security},
	volume       = 1,
	number       = {1-2},
	pages        = {1--135},
	doi          = {10.1561/3300000004},
	issn         = {2474-1558}
}

@article{Phillips2014,
	title        = {Event Identifier Logic},
	author       = {Phillips, Iain and Ulidowski, Irek},
	year         = 2014,
	journal      = {Math. Struct. Comput. Sci.},
	volume       = 24,
	number       = 2,
	doi          = {10.1017/S0960129513000510}
}

@article{Glabbeek2001,
	title        = {Refinement of actions and equivalence notions for concurrent systems},
	author       = {van Glabbeek, Rob J. and Goltz, Ursula},
	year         = 2001,
	journal      = {Acta Inform.},
	volume       = 37,
	number       = {4/5},
	pages        = {229--327},
	doi          = {10.1007/s002360000041}
}

@inproceedings{Glabbeek1993,
	title        = {The Linear Time - Branching Time Spectrum {II}},
	author       = {van Glabbeek, Rob J.},
	year         = 1993,
	booktitle    = {{CONCUR} '93},
	publisher    = {Springer},
	series       = {LNCS},
	volume       = 715,
	pages        = {66--81},
	doi          = {10.1007/3-540-57208-2_6},
	isbn         = {3-540-57208-2},
	editor       = {Best, Eike}
}

@inproceedings{Mukund1992,
	title        = {{CCS, Location and Asynchronous Transition Systems}},
	author       = {Madhavan Mukund and Nielsen, Mogens},
	year         = 1992,
	booktitle    = {Foundations of Software Technology and Theoretical Computer Science, 12th Conference, New Delhi, India, December 18-20, 1992, Proceedings},
	publisher    = {Springer},
	series       = {LNCS},
	volume       = 652,
	pages        = {328--341},
	doi          = {10.1007/3-540-56287-7_116},
	isbn         = {3-540-56287-7},
	editor       = {R. K. Shyamasundar}
}

@article{Boreale1998,
	title        = {A fully abstract semantics for causality in the $\pi$-calculus},
	author       = {Boreale, Michele and Sangiorgi, Davide},
	year         = 1998,
	journal      = {Acta Inform.},
	publisher    = {Springer},
	volume       = 35,
	number       = 5,
	pages        = {353--400},
	doi          = {10.1007/s002360050124}
}

@inproceedings{Cheval2019,
	title        = {Exploiting Symmetries When Proving Equivalence Properties for Security Protocols},
	author       = {Vincent Cheval and Steve Kremer and Itsaka Rakotonirina},
	year         = 2019,
	booktitle    = {Proceedings of the 2019 {ACM} {SIGSAC} Conference on Computer and Communications Security, {CCS} 2019, London, UK, November 11-15, 2019},
	publisher    = {{ACM}},
	pages        = {905--922},
	doi          = {10.1145/3319535.3354260},
	isbn         = {978-1-4503-6747-9},
	editor       = {Lorenzo Cavallaro and Johannes Kinder and XiaoFeng Wang and Jonathan Katz}
}

@inproceedings{Cremers2004,
	title        = {Checking Secrecy by Means of Partial Order Reduction},
	author       = {Cas J. F. Cremers and Sjouke Mauw},
	year         = 2004,
	booktitle    = {System Analysis and Modeling, 4th International {SDL} and {MSC} Workshop, {SAM} 2004, Ottawa, Canada, June 1-4, 2004, Revised Selected Papers},
	publisher    = {Springer},
	series       = {LNCS},
	volume       = 3319,
	pages        = {171--188},
	doi          = {10.1007/978-3-540-31810-1_12},
	isbn         = {3-540-24561-8},
	editor       = {Daniel Amyot and Alan W. Williams}
}

@article{Clarke2000,
	title        = {Efficient verification of security protocols using partial-order reductions},
	author       = {Edmund M. Clarke and Somesh Jha and Wilfredo R. Marrero},
	year         = 2003,
	journal      = {Int. J. Softw. Tools Technol. Transf.},
	volume       = 4,
	number       = 2,
	pages        = {173--188},
	doi          = {10.1007/s10009-002-0103-4},
	url         = {https://doi.org/10.1007/s10009-002-0103-4},
	bibsource    = {dblp computer science bibliography, https://dblp.org}
}

@inproceedings{Fokkink2010,
	title        = {Partial order reduction for branching security protocols},
	author       = {Fokkink, Wan and Dashti, Mohammad Torabi and Wijs, Anton},
	year         = 2010,
	booktitle    = {2010 10th International Conference on Application of Concurrency to System Design},
	pages        = {191--200},
	doi          = {10.1109/ACSD.2010.19},
	organization = {IEEE}
}

@article{Rabinovich1988,
	title        = {Behavior Structures and Nets},
	author       = {Rabinovich, Alexander and Trakhtenbrot, Boris Avraamovich},
	year         = 1988,
	journal      = {Fund.\ Inform.},
	volume       = 11,
	number       = 4,
	pages        = {357--404},
	doi          = {10.3233/FI-1988-11404}
}

@article{Joyal1996b,
	title        = {Bisimulation from Open Maps},
	author       = {Joyal, André and Nielsen, Mogens and Winskel, Glynn},
	year         = 1996,
	journal      = {Inf.\ Comput.},
	volume       = 127,
	number       = 2,
	pages        = {164--185},
	doi          = {10.1006/inco.1996.0057}
}

@article{GORRIERI1995272,
	title        = {Split and {ST} Bisimulation Semantics},
	author       = {Gorrieri, Roberto and Laneve, Cosimo},
	year         = 1995,
	journal      = {Inf. Comput.},
	volume       = 118,
	number       = 2,
	pages        = {272--288},
	doi          = {10.1006/inco.1995.1066},
	issn         = {0890-5401}
}

@techreport{Bednarczyk1991,
	title        = {Hereditary History Preserving Bisimulations or What is the Power of the Future Perfect in Program Logics},
	author       = {Bednarczyk, Marek A.},
	year         = 1991,
	institution  = {Instytut Podstaw Informatyki PAN filia w Gdańsku},
	url          = {http://www.ipipan.gda.pl/~marek/papers/historie.ps.gz}
}

@inproceedings{Nielsen1994,
	title        = {Bisimulation for Models in Concurrency},
	author       = {Nielsen, Mogens and Clausen, Christian},
	year         = 1994,
	booktitle    = {{CONCUR} '94},
	publisher    = {Springer},
	series       = {LNCS},
	volume       = 836,
	pages        = {385--400},
	doi          = {10.1007/BFb0015021},
	isbn         = {3-540-58329-7},
	editor       = {Jonsson, Bengt and Parrow, Joachim}
}

@article{Baldan2014a,
	title        = {A Logic for True Concurrency},
	author       = {Baldan, Paolo and Crafa, Silvia},
	year         = 2014,
	journal      = {J.\ ACM},
	volume       = 61,
	number       = 4,
	pages        = 24,
	doi          = {10.1145/2629638}
}

@article{Baldan2020ACMmodelChecking,
	title        = {Model Checking a Logic for True Concurrency},
	author       = {Baldan, Paolo and Padoan, Tommaso},
	year         = 2020,
	month        = {oct},
	journal      = {ACM Trans. Comput. Logic},
	publisher    = {Association for Computing Machinery},
	address      = {New York, NY, USA},
	volume       = 21,
	number       = 4,
	doi          = {10.1145/3412853},
	issn         = {1529-3785},
	issue_date   = {October 2020},
	articleno    = 34,
	numpages     = 49
}

@article{Kremer2016,
	title        = {Automated analysis of security protocols with global state},
	author       = {Steve Kremer and Robert K{\"{u}}nnemann},
	year         = 2016,
	journal      = {JCS},
	volume       = 24,
	number       = 5,
	pages        = {583--616},
	doi          = {10.3233/JCS-160556}
}

@incollection{glabbeek90concur,
	title        = {The linear time-branching time spectrum {I}. {T}he semantics of concrete, sequential processes},
	author       = {van Glabbeek, Rob J.},
	year         = 2001,
	booktitle    = {Handbook of process algebra},
	publisher    = {Elsevier},
	pages        = {3--99},
	doi          = {10.1016/b978-044482830-9/50019-9},
	editor       = {J. A. Bergstra and A. Ponse and S. A. Smolka}
}

@article{Chadha2016,
	title        = {Automated Verification of Equivalence Properties of Cryptographic Protocols},
	author       = {Chadha, Rohit and Cheval, Vincent and Ciob\^{a}c\u{a}, {\c{S}}tefan and Kremer, Steve},
	year         = 2016,
	month        = {sep},
	journal      = {ACM Trans.\ Comput.\ Log.},
	publisher    = {Association for Computing Machinery},
	address      = {New York, NY, USA},
	volume       = 17,
	number       = 4,
	pages        = {23:1--23:32},
	doi          = {10.1145/2926715},
	issn         = {1529-3785}
}

@inproceedings{Cheval2018,
	title        = {{DEEPSEC:} Deciding Equivalence Properties in Security Protocols Theory and Practice},
	author       = {Vincent Cheval and Steve Kremer and Itsaka Rakotonirina},
	year         = 2018,
	booktitle    = {2018 {IEEE} Symposium on Security and Privacy},
	publisher    = {{IEEE} Computer Society},
	pages        = {529--546},
	doi          = {10.1109/SP.2018.00033}
}

@inproceedings{Meier2013,
	title        = {The {TAMARIN} prover for the symbolic analysis of security protocols},
	author       = {Meier, Simon and Schmidt, Benedikt and Cremers, Cas and Basin, David},
	year         = 2013,
	booktitle    = {International Conference on Computer Aided Verification},
	publisher    = {Springer},
	series       = {LNCS},
	volume       = 8044,
	pages        = {696--701},
	doi          = {10.1007/978-3-642-39799-8_48},
	editor       = {Sharygina, N. and Veith, H.}
}

@inproceedings{Tiu2016,
	title        = {SPEC: An Equivalence Checker for Security Protocols},
	author       = {Tiu, Alwen and Nguyen, Nam and Horne, Ross},
	year         = 2016,
	booktitle    = {14th Asian Symposium on Programming Languages and Systems (APLAS'16)},
	publisher    = {Springer},
	series       = {LNCS},
	volume       = 10017,
	pages        = {87--95},
	doi          = {10.1007/978-3-319-47958-3_5},
	isbn         = {978-3-319-47958-3},
	editor       = {Igarashi, Atsushi}
}

@inproceedings{Cortier2017,
	title        = {{SAT-Equiv}: An Efficient Tool for Equivalence Properties},
	author       = {V{\'e}ronique Cortier and Antoine Dallon and St{\'e}phanie Delaune},
	year         = 2017,
	month        = {Aug},
	booktitle    = {2017 IEEE 30th Computer Security Foundations Symposium (CSF)},
	volume       = {},
	number       = {},
	pages        = {481--494},
	doi          = {10.1109/CSF.2017.15},
	issn         = {}
}

@article{Ross,
	title        = {A Bisimilarity Congruence for the Applied pi-Calculus Sufficiently Coarse to Verify Privacy Properties},
	author       = {Ross Horne},
	year         = 2018,
	journal      = {Arxiv},
	volume       = {arXiv:1811.02536},
	pages        = {1--31},
	url          = {https://arxiv.org/abs/1811.02536}
}

@inproceedings{Arapinis2010,
	title        = {Analysing Unlinkability and Anonymity Using the Applied Pi Calculus},
	author       = {Arapinis, Myrto and Chothia, Tom and Ritter, Eike and Ryan, Mark},
	year         = 2010,
	booktitle    = {23rd IEEE Computer Security Foundations Symposium (CSF'10)},
	pages        = {107--121},
	doi          = {10.1109/CSF.2010.15},
	issn         = {1063-6900}
}

@article{dolev83tit,
	title        = {On the security of public-key protocols},
	author       = {Danny Dolev and Andrew Yao},
	year         = 1983,
	journal      = {IEEE Transactions on Information Theory},
	volume       = 2,
	number       = 29,
	doi          = {10.1109/TIT.1983.1056650}
}

@article{aceto1994adding,
	title        = {Adding action refinement to a finite process algebra},
	author       = {Aceto, Luca and Hennessy, Matthew},
	year         = 1994,
	journal      = {Inform. and Comput.},
	publisher    = {Elsevier},
	volume       = 115,
	number       = 2,
	pages        = {179--247},
	doi          = {10.1006/inco.1994.1096}
}

@book{international1996machine,
	title        = {Machine Readable Travel Documents (Doc. 9303)},
	author       = {International Civil Aviation Organization},
	year         = 2021,
	publisher    = {International Civil Aviation Organization},
	series       = {Doc},
	url          = {https://www.icao.int/publications/doc-series/doc-9303}
}

@inproceedings{10.1145/2810103.2813662,
	title        = {Automated Symbolic Proofs of Observational Equivalence},
	author       = {Basin, David and Dreier, Jannik and Sasse, Ralf},
	year         = 2015,
	booktitle    = {Proceedings of the 22nd ACM SIGSAC Conference on Computer and Communications Security},
	location     = {Denver, Colorado, USA},
	publisher    = {Association for Computing Machinery},
	address      = {New York, NY, USA},
	series       = {CCS '15},
	pages        = {1144–1155},
	doi          = {10.1145/2810103.2813662},
	isbn         = 9781450338325,
	numpages     = 12
}

@inproceedings{10.1007/978-3-319-11851-2_11,
	title        = {A Formal Definition of Protocol Indistinguishability and Its Verification Using Maude-NPA},
	author       = {Santiago, Sonia and Escobar, Santiago and Meadows, Catherine and Meseguer, Jos{\'e}},
	year         = 2014,
	booktitle    = {Security and Trust Management},
	publisher    = {Springer International Publishing},
	address      = {Cham},
	pages        = {162--177},
	isbn         = {978-3-319-11851-2},
	editor       = {Mauw, Sjouke and Jensen, Christian Damsgaard}
}

@inproceedings{DBLP:conf/apn/Gorrieri20,
	title        = {Interleaving vs True Concurrency: Some Instructive Security Examples},
	author       = {Roberto Gorrieri},
	year         = 2020,
	booktitle    = {Application and Theory of Petri Nets and Concurrency - 41st International Conference, {PETRI} {NETS} 2020, Paris, France, June 24-25, 2020, Proceedings},
	publisher    = {Springer},
	series       = {LNCS},
	volume       = 12152,
	pages        = {131--152},
	doi          = {10.1007/978-3-030-51831-8_7},
	editor       = {Ryszard Janicki and Natalia Sidorova and Thomas Chatain}
}

@article{10.1145/2825026,
	title        = {A Survey of Security and Privacy Issues in ePassport Protocols},
	author       = {Avoine, Gildas and Beaujeant, Antonin and Hernandez-Castro, Julio and Demay, Louis and Teuwen, Philippe},
	year         = 2016,
	month        = feb,
	journal      = {ACM Comput. Surv.},
	publisher    = {Association for Computing Machinery},
	address      = {New York, NY, USA},
	volume       = 48,
	number       = 3,
	doi          = {10.1145/2825026},
	issn         = {0360-0300},
	issue_date   = {February 2016},
	articleno    = 47,
	numpages     = 37
}

\appendix

\section{Proofs}%
\label{app:proofs}

This appendix provides:
(1), proofs of the Hennessy-Milner duality between HP-bisimilarity and its characteristic modal logic;
and (2), a proof that HP-similarity cannot find attacks on our minimal formulation of unlinkability.

\subsection{Hennessy-Milner property for classical HP \texorpdfstring{$\FM$}{FM}}\label{app:HM}

We provide here proofs of the Hennessy-Milner results --
Theorems~\ref{thm:i-HM} and \ref{thm:HP-HM}.
The proof of the former is an immediate consequence of the theorems presented in this appendix.
The proof of the latter follows from the same proofs, where information regarding the relations on events is forgotten.

Since we consider the applied $\pi$-calculus, 
compared to a standard proof for a classical Milner-Parrow-Walker logic
for the $\pi$-calculus,
the use of static equivalence simplifies the analysis, since there are no special cases for bound actions.
The additional complexity in the proof
for $\HPFM$
comes from ensuring that
 conditions relating $\ESS$, $\ESS_L$, $\ESS_R$ and $\rho$ are preserved.
In particular, we should ensure throughout that the following diagram commutes:
\[
\xymatrix{
 A
 \ar^{\rho, \ESS}[d]
 & \vDash^{\ESS_L} &
 \phi
 \ar^{\rho}[d]
\\
B & \vDash^{\ESS_R} & \psi
}
\]
In what follows, 
composition of relations is as standard.
The composition of a relation with a bijection over aliases
 ${\ESS} \circ {\rho}$ is such that
$(e, e') \in \ESS$ iff $(e, e'\rho) \in {({\ESS \circ \rho})}$.

\begin{thm}[Soundness of HP-$\FM$]
If
$A \mathrel{\mathcal{R}^{\rho,\ESS}} B$
such that $\mathcal{R}$ is an HP-bisimulation,
then,
for all $\FM$ formulae $\phi$
and for all  $\ESS_L$ and $\ESS_R$ such that 
$\dom{\ESS_L} = \dom{\ESS}$ 
 and
$\dom{\ESS_R} = \ran{\ESS}$
and
 $\ESS \circ \ESS_R = {\ESS_L} \circ \rho$,
we have $A \vDash^{\ESS_L} \phi$ if and only if $B \vDash^{\ESS_R} \phi\rho$.
\end{thm}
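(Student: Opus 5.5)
The proof goes by structural induction on $\phi$, proving both directions at once and for all triples $(\rho,\ESS,\ESS_L,\ESS_R)$ satisfying the hypotheses. Since $\mathcal{R}$ is symmetric (with $\rho^{-1}$ and $\ESS^{-1}$), the \enquote{if} direction follows from the \enquote{only if} direction applied to $B \mathrel{\mathcal{R}^{\rho^{-1},\ESS^{-1}}} A$. For this we check that the side conditions transpose: taking $\ESS_R$ and $\ESS_L$ in swapped roles, $\ESS^{-1}\circ\ESS_L = \ESS_R\circ\rho^{-1}$ follows from $\ESS\circ\ESS_R = \ESS_L\circ\rho$. Because negation flips satisfaction into failure, the induction must carry both $\vDash$ and $\nvDash$, using the rules below the line in \autoref{fig:modal-2}. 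So I state the inductive claim as \enquote{$A \vDash^{\ESS_L}\phi \Rightarrow B\vDash^{\ESS_R}\phi\rho$ and $A\nvDash^{\ESS_L}\phi \Rightarrow B\nvDash^{\ESS_R}\phi\rho$}.

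The base cases are direct. $\ttt$ is trivial. For equality $M=N$, the third clause of \autoref{def:hp-sim} gives $A\vDash M=N$ iff $B\vDash M\rho=N\rho$; the relation $\ESS$ plays no role there, and name restrictions are handled by the $\nu$-rule via $\alpha$-conversion. Conjunction and negation follow immediately from the induction hypothesis for the same $\ESS_L,\ESS_R$.

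The real work is the diamond case $\ediam{\pi}{u'}\phi$, say for an output $\pi=\co{M}(x)$; the non-output case is the same without the alias update. Suppose $A\vDash^{\ESS_L}\ediam{\co{M}(x)}{u'}\phi$. The rule then supplies a splitting $\ESS_L=\ESS_{L1}\cup\ESS_{L2}$, a transition $A\dlts{\co{M}(\alpha)}{u}A'$, and independence conditions on $\dom{\ESS_{L1}}$, $\ran{\ESS_{L1}}$ and dependence conditions on $\dom{\ESS_{L2}}$, $\ran{\ESS_{L2}}$. Since $\dom{\ESS_L}=\dom{\ESS}$, this splitting of $\dom{\ESS}$ induces a splitting $\ESS=\ESS_1\cup\ESS_2$. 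We feed the transition to the HP-bisimulation clause and obtain $B\dlts{\co{M\rho}(\alpha')}{u''}B'$ with $A'\mathrel{\mathcal{R}^{\rho[\alpha\mapsto\alpha'],\ESS_1\cup\{(e,e'')\}}}B'$. Next we split $\ESS_R$ accordingly, so that $\ESS_{R1}$ is the part whose domain is $\ran{\ESS_1}$. The independence and dependence of the formula event $\event{\co{M\rho}(x)}{u'}$ with respect to $\ran{\ESS_{R1}}$ and $\ran{\ESS_{R2}}$ are read off from the $A$ side using the commuting equation $\ESS\circ\ESS_R=\ESS_L\circ\rho$. We must also check that applying $\rho$ does not alter independence; link-freshness is preserved because $\rho$ is a bijection on aliases. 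The updated relations $\ESS_{L1}\cup\{(e,\event{\co{M}(\alpha)}{u'})\}$ and $\ESS_{R1}\cup\{(e'',\event{\co{M\rho}(\alpha')}{u'})\}$ then satisfy the domain conditions and the commuting equation for $\rho[\alpha\mapsto\alpha']$. Since $(\phi\sub{x}{\alpha})\rho[\alpha\mapsto\alpha'] = (\phi\rho)\sub{x}{\alpha'}$, using freshness of $\alpha$ and $\alpha'$, the induction hypothesis applies.

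For failure of the diamond, I must show that every transition of $B$ meeting the (in)dependence conditions fails. Given such a transition of $B$, we use symmetry of $\mathcal{R}$ to obtain a matching transition of $A$ that meets the conditions on $\dom{\ESS_L}$. By assumption it fails, and the induction hypothesis transfers the failure back. The main obstacle is precisely this bookkeeping: showing that the partition of $\ESS$ induced from $\ESS_L$ is the one demanded by the bisimulation clause, and that the commutation $\ESS\circ\ESS_R=\ESS_L\circ\rho$ together with both domain equalities is re-established after each step. I would first isolate this as a small lemma: if the commuting square holds and an event pair is added on each side consistently, then it still holds, and independence sets are transported along the square. The main case analysis would then only cite that lemma.
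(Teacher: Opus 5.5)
Your proposal is correct and follows the paper's proof. Both argue by induction on $\phi$. In the diamond case, both induce the split of $\ESS$ from $\dom{\ESS_L}$, apply the HP-bisimulation clause, split $\ESS_R$ along $\ran{\ESS_1}$, and re-establish the commuting square before invoking the induction hypothesis; the converse comes from symmetry of $\mathcal{R}$.

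The main difference is how negation is handled. You carry $\nvDash$ through the induction and treat failure of the diamond by symmetry. The paper proves only the forward implication and dispatches negation by contradiction, applying the induction hypothesis to $B \mathrel{\mathcal{R}^{\rho^{-1},\ESS^{-1}}} A$.

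One point in your write-up is overstated. The transposed square $\ESS^{-1}\circ\ESS_L = \ESS_R\circ\rho^{-1}$ does not follow from $\ESS\circ\ESS_R = \ESS_L\circ\rho$ in general. It is guaranteed when $\ESS$ or $\ESS_L$ is functional, which holds for all relations built up from $\emptyset$ because independence is irreflexive. The paper relies on this transposition silently as well.
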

\begin{proof}
Assume 
$A \mathrel{\mathcal{R}^{\rho,\ESS}} B$
such that $\mathcal{R}$ is an HP-bisimulation.
We proceed by induction on the structure of $\FM$ formulae $\phi$.
In what follows, assume that
$\ESS_L$ and $\ESS_R$ are such that $\dom{\ESS_L} = \dom{\ESS}$,
$\dom{\ESS_R} = \ran{\ESS}$ and 
$\ESS \circ \ESS_R = {\ESS_L} \circ \rho$.

\textit{Case for diamond modalities.}
Consider when
$A \vDash^{\ESS_L} \ediam{\pi}{u} \phi$,
where $\pi$ is not an output action
and
$\ESS_L = \ESS^1_L \cup \ESS^2_L$
such that
 $\event{\pi}{u} \Indy \dom{\ESS^1_L}$
 and
 $\event{\pi}{u} \notIndy \dom{\ESS^2_L}$.
Hence, by definition of the diamond modality,
for some $A'$ and $u'$ we have,
 $A \dlts{\pi}{u'} A'$ 
such that
 $\event{\pi}{u'} \Indy \dom{\ESS^1_L}$
 and
 $\event{\pi}{u'} \notIndy \dom{\ESS^2_L}$
 and, furthermore,
$A' \vDash^{\ESS^1_L \cup \left\{ \left( \event{\pi}{u'}  ,  \event{\pi}{u} \right) \right\}} \phi$.

Observe that since  
$\dom{\ESS_L} = \dom{\ESS}$,
we have that
$\ESS = \ESS^1 \cup \ESS^2$
such that
$\dom{\ESS^1} = \dom{\ESS_L^1}$
and
$\dom{\ESS^2} = \dom{\ESS_L^2}$.
Hence, since
 $\event{\pi}{u'} \Indy \dom{\ESS^1_L}$
 and
 $\event{\pi}{u'} \notIndy \dom{\ESS^2_L}$,
 we have
 that
 $\event{\pi}{u'} \Indy \dom{\ESS^1}$
 and
 $\event{\pi}{u'} \notIndy \dom{\ESS^2}$.
Now, since $\mathrel{R}$ is an HP-bisimulation and
$A \mathrel{\mathcal{R}^{\rho,\ESS}} B$
there exists $B'$ and location $v$
such that
$B \dlts{\pi\rho}{v} B'$
and
 $\event{\pi\rho}{v} \Indy \ran{\ESS^1}$
 and
 $\event{\pi\rho}{v} \notIndy \ran{\ESS^2}$
 and, furthermore,
$A' \mathrel{\mathcal{R}^{\rho,\ESS^1 \cup \left\{ \left(\event{\pi}{u'}, \event{\pi\rho}{v}\right) \right\}}} B'$.

Now, since $\dom{\ESS_R} = \ran{\ESS}$
and
$\ESS = \ESS_1 \cup \ESS_2$
such that
 $\event{\pi\rho}{v} \Indy \ran{\ESS^1}$
 and
 $\event{\pi\rho}{v} \notIndy \ran{\ESS^2}$
 and also
 $\ESS \circ \ESS_R = {\ESS_L} \circ \rho$,
we have that
$\ESS_R = \ESS^1_R \cup \ESS^2_R$
such that
 $\event{\pi\rho}{v} \Indy \dom{\ESS_R^1}$
 and
 $\event{\pi\rho}{v} \notIndy \dom{\ESS_R^2}$
 and 
 $\dom{\ESS^1_L} = \dom{\ESS^1}$ 
 and
$\dom{\ESS^1_R} = \ran{\ESS^1}$
and
 also
$\ESS^1 \circ \ESS^1_R = {\ESS^1_L} \circ \rho$.
Now, by the induction hypothesis,
since 
\begin{itemize}
	\item $A' \mathrel{\mathcal{R}^{\rho,\ESS^1 \cup \left\{ \left(\event{\pi}{u'}, \event{\pi\rho}{v}\right) \right\}}} B'$, 
	\item $\dom{\ESS^1_L  \cup \left\{ \left(\event{\pi\rho}{v}, \event{\pi\rho}{u}\right) \right\}}  = \dom{\ESS^1  \cup \left\{ \left(\event{\pi}{u'}, \event{\pi\rho}{v}\right) \right\}}$,
	\item $\dom{
		\ESS^1_R  \cup 
		\left\{
		\left(\event{\pi\rho}{v}, \event{\pi\rho}{u}\right)
		\right\}}
	=
	\ran{
		\ESS^1  \cup
		\left\{
		\left(\event{\pi}{u'}, \event{\pi\rho}{v}\right)
		\right\}}$,
	\item $\ESS^1 \circ \ESS^1_R = {\ESS^1_L} \circ \rho$ and 
	\item $A' \vDash^{\ESS^1_L \cup \left\{ \left( \event{\pi}{u'}  ,  \event{\pi}{u} \right) \right\}} \phi$,
\end{itemize}
we have that
$B' \vDash^{\ESS^1_R \cup \left\{ \left( \event{\pi\rho}{v}, \event{\pi\rho}{u} \right) \right\}} \phi\rho$ holds.
Therefore, by the definition of diamond modalities,
since 
$B \dlts{\pi\rho}{v} B'$
and
$\event{\pi\rho}{v} \Indy \dom{\ESS_R^1}$
and
$\event{\pi\rho}{v} \notIndy \dom{\ESS_R^2}$
and
$\event{\pi\rho}{u} \Indy \dom{\ESS_R^1}$
and
$\event{\pi\rho}{u} \notIndy \dom{\ESS_R^2}$,
we can conclude
that 
$B' \vDash^{\ESS_R} \ediam{\pi\rho}{u}\phi\rho$ holds.
Thus $B' \vDash^{\ESS_R} \left(\ediam{\pi}{u}\phi\right)\mathclose{\rho}$ holds, as required.

The case for output diamond modalities is similar to the above but, in addition $\rho$ is extended.
The remaining cases are standard.

\textit{Base case for true.}
$B \vDash^{\ESS_R} \ttt$ holds for any $B$ and $\ESS_R$.
Hence this case is immediate.
 
\textit{Base case for equality.}
Consider when 
$A \vDash^{\ESS_L} M = N$.
Observe that $\ESS_L$ does not impact equality,
that is, for any $\ESS'$, 
$A \vDash^{\ESS'} M = N$ iff $A \vDash^{\empty} M = N$.
Now, 
since $A \mathrel{\mathcal{R}^{\rho, \ESS}} B$ and 
$\mathcal{R}$ is an HP-bisimulation,
and 
$A \vDash^{\empty} M = N$,
by static equivalence, 
$B \vDash^{\empty} M\rho = N\rho$,
and hence by the above observation,
$B \vDash^{\ESS_R} M\rho = N\rho$, as required.

\textit{Case of conjunction.}
Consider when $A \vDash^{\ESS_L} \phi \wedge \psi$
hence $A \vDash^{\ESS_L} \phi$ and $A \vDash^{\ESS_L} \psi$.
So, since $A \mathrel{\mathcal{R}^{\rho, \ESS}} B$,
  by the induction hypothesis $B \vDash^{\ESS_R} \phi\rho$ and $B \vDash^{\ESS_R} \psi\rho$
and hence $B \vDash^{\ESS_R} \phi\rho \wedge \psi\rho$, as required.

\textit{Case of negation.}
Consider when $A \vDash^{\ESS_L} \neg \phi$,
hence $A \nvDash^{\ESS_L} \phi$.
Now, assume for contradiction that
$B \vDash^{\ESS_R} \phi\rho$ holds.
By symmetry, 
$B \mathrel{\mathcal{R}^{\rho^{-1}, \ESS^{-1}}} A$ and 
hence, by the induction hypothesis, 
$A \vDash^{\ESS_L} \phi\rho\rho^{-1}$ would hold,
contradicting the assumption that $A \vDash^{\ESS_L} \phi$ holds.
Thus, $B \nvDash^{\ESS_R} \phi\rho$ and hence $B \vDash^{\ESS_R} \neg\phi\rho$, as required.

Therefore, by induction on the structure of $\phi$, for all formulae $\phi$, and for all $A$, $B$ such that $A \mathrel{\rho, \mathcal{S}} B$, we have $A \vDash^{\ESS_L} \phi$ iff $B \vDash^{\ESS_R} \phi\rho$,
such that 
$\dom{\ESS_L} = \dom{\ESS}$ 
 and
$\dom{\ESS_R} = \ran{\ESS}$
and
 $\ESS \circ \ESS_R = {\ESS_L} \circ \rho$.

The converse implication then follows by symmetry of a bisimulation.
\end{proof}

\begin{thm}[Expressive completeness of HP-$\FM$]
If
for all $\FM$ formulae $\phi$
we have $A \vDash^{\emptyset} \phi$ if and only if $B \vDash^{\emptyset} \phi$,
then
$A \bisimi{HP} B$.
\end{thm}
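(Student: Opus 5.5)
The plan is the classical Hennessy--Milner completeness argument, lifted to states that carry the event relation. I would define a candidate relation $\mathcal{R}$ by $A \mathrel{\mathcal{R}^{\rho,\ESS}} B$ whenever $\rho\colon\dom{A}\to\dom{B}$ is a bijection, $\ESS$ is a relation over pairwise independent events, and for every $\HPFM$ formula $\phi$ we have $A \vDash^{\ESS} \phi$ iff $B \vDash^{\ESS^{-1}\circ\ESS}\phi\rho$. In words, the formula may refer to the right-hand events through the left-hand names recorded in $\ESS$, so the parameter for $B$ is $\ran{\ESS}$ paired with the same formula-side events as on the left. By the hypothesis, $A \mathrel{\mathcal{R}^{\id,\emptyset}} B$ holds. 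By construction $\mathcal{R}$ is symmetric, since the defining condition is an ``iff'' and inverting $\rho$ and $\ESS$ is harmless. It therefore remains to show that $\mathcal{R}$ is an HP-simulation; the theorem then follows from \autoref{def:hp-sim}.

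The static-equivalence clause is immediate. Equalities are insensitive to the event parameter, so $A\vDash M=N$ iff $B\vDash M\rho=N\rho$ follows by instantiating the defining condition at equality formulas.

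For the transfer clause, suppose $A\dlts{\pi}{u}A'$ with the partition $\ESS=\ESS_1\cup\ESS_2$ given by (in)dependence of $\event{\pi}{u}$. Assume, for contradiction, that no matching transition $B\dlts{\pi\rho}{u'}B'$ respects $\ran{\ESS_1},\ran{\ESS_2}$ and yields $A' \mathrel{\mathcal{R}^{\rho',\ESS_1\cup\{(\event{\pi}{u},\event{\pi\rho}{u'})\}}} B'$. Each admissible candidate $B'_i$ then comes with a distinguishing formula $\phi_i$. Using negation we may arrange that $A'$ satisfies $\phi_i$ and $B'_i$ does not, where the event parameters are those extended by the new pair and the alias of an output is abstracted to the bound variable $x$. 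Image-finiteness, as in the footnote after \autoref{fig:modal-1} together with event determinism, leaves only finitely many candidates up to the relevant congruence, so $\bigwedge_i\phi_i$ is a finite formula. We then take the location in the modality to be $u$ itself, so that the side conditions $\event{\pi}{u}\Indy\ran{\ESS_1}$ and $\event{\pi}{u}\notIndy\ran{\ESS_2}$ on the formula side hold because $\ESS$ pairs each event with itself on the left. This gives $A\vDash^{\ESS}\ediam{\pi}{u}\bigwedge_i\phi_i$. On the $B$ side, the failure rule for the diamond covers the admissible transitions, which fail by the choice of the $\phi_i$. Transitions that violate the independence constraints relative to $\ran{\ESS}$ are excluded by the rule's side conditions themselves. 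Hence $B\nvDash\ (\ediam{\pi}{u}\bigwedge_i\phi_i)\rho$, which contradicts $A\mathrel{\mathcal{R}}B$. For outputs the argument is the same with $\rho[\alpha\mapsto\alpha']$, choosing the bound variable fresh and invoking the freshness side condition $\isfresh{\alpha}{\phi}$.

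The main obstacle is the bookkeeping of event relations. We must check that the formula-side pairs $\ESS_L,\ESS_R$ of the satisfaction relation line up with the process-side $\ESS$ after each step, mirroring the commuting square used in the soundness theorem: $\dom{\ESS_L}=\dom{\ESS}$, $\dom{\ESS_R}=\ran{\ESS}$ and $\ESS\circ\ESS_R=\ESS_L\circ\rho$. It is this invariant, rather than the plain form used in the first paragraph, that I would actually put into the definition of $\mathcal{R}$, quantifying over all $\ESS_L,\ESS_R$ satisfying it. Doing so ensures that the partition of $\ESS$ into independent and dependent parts induces the same partition on the formula-side relation. Finiteness of the conjunction is the other delicate point: replication gives infinitely many syntactic successors, so I would quotient successors by $\bang P\equiv P\cpar\bang P$ and $\alpha$-conversion, and argue that satisfaction is invariant under this congruence.
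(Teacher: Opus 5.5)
Your argument has the same skeleton as the paper's proof. The relation you finally settle on (logical equivalence under every $\ESS_L,\ESS_R$ with $\dom{\ESS_L}=\dom{\ESS}$, $\dom{\ESS_R}=\ran{\ESS}$ and $\ESS\circ\ESS_R=\ESS_L\circ\rho$) is essentially the paper's $\mathcal{R}$. Symmetry and static equivalence are handled the same way.

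You differ in the transfer step. The paper extracts one match $B\dlts{\pi\rho}{v}B'$ from $\ediam{\pi}{u}\ttt$ and claims, via event determinism, that $B'$ is unique up to equivariance. It then derives the contradiction from a single distinguishing formula. You instead range over all admissible matches, invoke image-finiteness, and conjoin the distinguishing formulas under one diamond, which is the classical Hennessy--Milner argument.

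Your route is the sounder one. Event determinism fixes the target only once the location $v$ is fixed, whereas the diamond constrains $v$ only through its independence pattern relative to $\ran{\ESS}$. So inequivalent matches can coexist, e.g.\ the two $\co{a}(\lambda)$-transitions of $\cout{a}{a}.\cout{b}{b}+\cout{a}{a}.\cout{c}{c}$ at $[0]$ and $[1]$. A formula that fails at just one of them yields no contradiction, which is exactly the gap in the paper's version.

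What your route costs is image-finiteness modulo a congruence. Successors reached through different copies of a replicated process sit at different locations, so identifying them requires transporting the event parameter along the induced relocation. The paper likewise only asserts this.

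One further point is left implicit by both proofs, and it is where your announced bookkeeping really bites. The witnesses $\phi_i$ must be evaluated under the parameter obtained from the predecessor's by the diamond rule, with the new pair being $\event{\pi}{u}$ matched with $\event{\pi}{u}$; only then can they be conjoined and lifted through $\ediam{\pi}{u}$. Once $\mathcal{R}$ quantifies over all admissible $\ESS_L,\ESS_R$, a failure of $A'\mathrel{\mathcal{R}}B'_i$ a priori only supplies a witness under \emph{some} admissible parameter, possibly a different one for each $i$. That identification still has to be justified.
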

\begin{proof}
The strategy is to show that
the following relation is an HP-bisimulation.
\begin{align*}
	\mathcal{R} &= 
	\left\{
	\left(A, \rho, \ESS, B\right)
	\colon
	\forall \phi, \ESS_L, \ESS_R,
	\text{\st }
	\begin{aligned}
		& \dom{\ESS_L} = \dom{\ESS}, \dom{\ESS_R} = \ran{\ESS},\\
		& \ESS \circ \ESS_R = {\ESS_L} \circ \rho \mbox{ and }A \vDash^{\ESS_L} \phi \text{ iff } B \vDash^{\ESS_R} \phi\rho  
	\end{aligned}
	\right\}
\end{align*}
   We aim to prove that
   $\mathcal{R}$ is an HP-bisimulation.
Symmetry is immediate.
In the following cases assume $A \mathrel{\mathcal{R}^{\rho, \ESS}} B$.

\textit{Case for input and tau transitions.}
Suppose $A \dlts{\pi}{u} A'$. 
Hence $A \vDash^{\ESS_L} \ediam{\pi}{u}\ttt$,
so, by definition of $\mathcal{R}$,
we have 
$\ESS_L, \ESS_R,
   ~\mbox{s.t.}~
   \dom{\ESS_L} = \dom{\ESS},
   \dom{\ESS_R} = \ran{\ESS},
   \ESS \circ \ESS_R = {\ESS_L} \circ \rho$,
and 
$B \vDash^{\ESS_R} \ediam{\pi\rho}{u}\ttt$.
Therefore, for some $B'$ we have $B \dlts{\pi\rho}{v} B'$.
By event determinisation (see~\cite{Aubert2022e}),
$B'$ is unique up to equivariance.
Indeed, it is easy to check $\equiv$ preserves satisfiability, by induction over the definition of satisfaction.

Observe that 
$\ESS = \ESS^0 \cup \ESS^1$
such that 
$\event{\pi}{u} \Indy \dom{\ESS^0}$
and
$\event{\pi}{u} \notIndy \dom{\ESS^1}$.
Since
$\dom{\ESS} = \dom{\ESS_L}$
and
$\ran{\ESS} = \dom{\ESS_R}$,
there exists
$\ESS^0_L$ 
and
$\ESS^0_R$ 
and
$\ESS^1_L$ 
and
$\ESS^1_R$ 
such
that
$\ESS_L = \ESS^0_L \cup \ESS^1_L$ 
and
$\ESS_R = \ESS^0_R \cup \ESS^1_R$
and
$\dom{\ESS^0} = \dom{\ESS^0_L}$
and
$\ran{\ESS^0} = \dom{\ESS^0_R}$.
Hence also we have
$\ESS^0 \circ \ESS^0_R = {\ESS^0_L} \circ \rho$.

Now, suppose for contradiction
that
$A' \mathrel{\mathcal{R}^{\rho,\ESS^0 \cup \left\{\left( \event{\pi}{u}, \event{\pi\rho}{v} \right) \right\}}} B'$
does not hold.
If that were true, then there exists $\phi$ such that
$A' \vDash^{\ESS^0_L \cup \left\{\left( \event{\pi}{u}, \event{\pi}{u} \right) \right\} } \phi$
and
$B' \nvDash^{\ESS^0_R \cup \left\{\left( \event{\pi\rho}{v}, \event{\pi\rho}{u} \right) \right\}} \phi\rho$.
Hence $A \vDash^{\ESS_L} \ediam{\pi}{u}\phi$
but
$B \nvDash^{\ESS_R} \ediam{\pi\rho}{u}\phi\rho$,
contradicting the assumption that
$A \mathrel{\mathcal{R}^{\rho,\ESS}} B$.
Hence, 
$A' \mathrel{\mathcal{R}^{\rho, \ESS^0 \cup\left\{\left( \event{\pi}{u}, \event{\pi\rho}{v} \right) \right\}  }} B'$,
as required.

The case outputs is similar to the above but, in addition $\rho$ is extended.

\textit{Case of static equivalence.}
This folows immediately from the
definition of $\mathcal{R}$,
since for any $\ESS_L$
$A \vDash^{\ESS_L} M = N$
for some $\ESS_R$, we have
$B \vDash^{\ESS_R} M\rho = N\rho$.
Hence if
$A \vDash^{\emptyset} M = N$
then
$B \vDash^{\emptyset} M\rho = N\rho$.
Furthermore, by symmetry of $\mathcal{R}$ and since $\rho$ is a bijection,
if $B \vDash^{\emptyset} M\rho = N\rho$
then
$A \vDash^{\emptyset} M\rho\rho^{-1} = B\rho\rho^{-1}$,
and hence 
$A \vDash^{\emptyset} M = N$.
Therefore $A$ and $B$ are statically equivalent.

Thus, $\mathcal{R}$ is an HP-bisimulation, as required.
\end{proof}

\subsection{Example HP-simulation}

By including the proof of the following theorem, 
we also illustrate the proof method
of constructing an HP-simulation.

\systemHPtwo*

\begin{proof}
In what follows, we will make use of the following representation of intermediate states of ePassport (prover) and reader (verifier) sessions.

\begin{align*}
	\MainUK'(c,d,ke,km, nt) & \triangleq d(y). \MainUK''(c,d,ke,km, nt, y)
	\\[2em]
	\MainUK''(c,d,ke,km, nt, y) &\triangleq
		\begin{aligned}[t]
			&
			\match{ \snd{y} = \mac{\fst{y}, km} }
			\\
			\MoveEqLeft[-.5]
			\match{ nt = \fst{\snd{\dec{\fst{y}}{ke}}} }
			\\
			\MoveEqLeft[-1]\nu kt.\clet{m}{\enc{\pair{nt}{\pair{\fst{\dec{\fst{y}}{ke}}}{kt}}}{ke}} \\
			\MoveEqLeft[-1.5] \cout{c}{m, \mac{m, km}}
			\end{aligned}
		\\[2em]
		\Reader'(c,d,ke,km, nt) & \triangleq
		\begin{aligned}[t]
			& \nu nr.\nu kr. \\
			& \clet{m}{\enc{\pair{nr}{\pair{nt}{kr}}}{ke}} \\
			\MoveEqLeft[-.5] \cout{c}{m, \mac{\pair{m}{km}}}
	\end{aligned}
\end{align*}

To construct a proof certificate, 
we build a relation that is an HP-simulation upto equivariance ($\alpha$-conversion and swapping of bound names).
Let $\mathcal{R}$ be the least symmetric relation below,
where the various parameters we will define progressively in what follows.
\[
\mathopen{\nu \vec{z}.}\left(
\theta \cpar 
\left(
Q_1
\cpar 
\ldots
\left(
Q_n
\cpar
\SystemMin \right) \ldots \right)\right)
\mathrel{\mathcal{R}^{\rho, \ESS}}
\mathopen{\nu \vec{w}.}\left(
\sigma
\cpar
\left(
T_{1} \cpar \ldots \left(T_{2m}
\cpar 
\SpecMin \right)
\ldots
\right)\right)
\]
The relation above is defined for all natual numbers $n$ and sequences of natural numbers $m_1, m_2, \ldots m_n$,
where $m_i$ will be used to enumerate the number of sessions of prover $i$ that are represented in the system.
In the above, let $m = \sum_{i = 0}^{n} m_i$, \ie the total number of sessions represented across all provers.
Next, below, we define first the processes $Q_1, \ldots Q_n$ on the LHS above,
where each $Q_i$ is a process comprising of $m_i$ sessions involving the same passport and reader,
and $\vec{z}$ and $\theta$ representing knowledge that have been output thus far.
After that, later, we define the processes $T_1, \ldots T_{2m}$ representing individual sessions
on the RHS above, and the corresponding knowledge $\vec{w}$ and $\sigma$ output thus far by the idealised specification.

In what follows,
the sets $\varphi_0$, $\varphi_1$ and $\varphi_2$ are disjoint such that
$\left\{ 1 \ldots m \right\} = \varphi_0 \cup \varphi_1 \cup \varphi_2$, used to track states of $m$ readers
(not started, received first message only, sent response).
The sets $\vartheta_0$, $\vartheta_1$, $\vartheta^{KO}_2$, $\vartheta^{OK}_2$, and $\vartheta^{OK}_3$ are disjoint such that
$\left\{ 1 \ldots m \right\} = \vartheta_0 \cup \vartheta_1 \cup \vartheta^{KO}_2 \cup \vartheta^{OK}_2 \cup \vartheta^{OK}_3$, used to track states of $m$ ePassports
(respectively, not started, sent first message only, received bad message, received good message and authenticated, sent message after successful authentication).

\textit{LHS processes, active substitution, and extruded names.} 
In the following, also let $k_{i}^{j} = \sum_{\ell = 0}^{i}{m_{\ell}} + j$, which we use to enumerate the $j^{th}$ session involving the $i^{th}$ ePassport, according to the unfolding of system in the given state.

We set up the set of names $\vec{z}$, which may be in any order due to equivariance.
Let $\vec{z}$ be the least set of names such that, for all $i \in \left\{1 \ldots n\right\}$ such that $m_i > 0$, we have $ke_i \in \vec{z}$ and $km_i \in \vec{z}$ (that is, the key material of the ePassport must have been extruded),
and,
for all $j \in \left\{1 \ldots m_i \right\}$, we have the following:
\begin{itemize}
\item
if $k^i_j \notin \vartheta_0$
then $nt_{k^i_j} \in \vec{z}$ (\ie the $i^{th}$ ePassport has extruded a nonce in session $j$);
\item
if $k^i_j \in \varphi_2$ then $kt_{k^i_j} \in \vec{z}$ and $nt_{k^i_j} \in \vec{z}$ (\ie the $i^{th}$ reader has extruded a nonce and key in session $j$);
\item
if $k^i_j \in \vartheta^{OK}_3$ then $kr_{k^i_j} \in \vec{z}$ (\ie the $i^{th}$ ePassport has extruded a key in session $j$).
\end{itemize}

We set up the active substitution $\theta$ as follows, making use of messages $NT_{k^i_j}$ and $Y_{k^i_j}$
such that
$\isfresh{\vec{z}}{Y_{k^i_j}, NT_{k^i_j}}$.
Message $NT_{k^i_j}$ is defined whenever $k^i_j \in \varphi_1 \cup \varphi_2$, \ie the $j^{th}$ session of the $i^{th}$ reader has received an input message.
Message $Y_{k^i_j}$ is defined whenever 
$k^i_j \in \vartheta^{OK}_2 \cup \vartheta^{OK}_3 \cup \vartheta^{KO}_2$,
\ie the $j^{th}$ session of the $i^{th}$ ePassport has received an input message.
Let $\theta$ and the above input messages be related as follows.
\begin{align*}
\theta(\prefix{1^i0^j1}\lambda_1) & = nt_{k^i_j}
&&
\text{only if }k^i_j \in \vartheta_1 \cup \vartheta^{KO}_2 \cup \vartheta^{OK}_2  \cup \vartheta^{OK}_3
\\
\theta(\prefix{1^i0^{\ell}0}\lambda_2) & = \pair{
 M^i_\ell
}{
 \mac{ M^i_\ell, km_i}
}
&& \text{only if } k^i_\ell \in \varphi_2
\\
\theta(\prefix{1^i0^j1}\lambda_3) & = \pair{
 N^i_j
}{
 \mac{ N^i_j, km_i}
} &&
 \text{only if }
k^i_j \in \vartheta^{OK}_3
\\
\exists \ell,\text{ s.t. }
&
k^i_{\ell} \in \varphi_2
 \text{ and }
NT_{k^i_{\ell}} \mathrel{=_{E}} \prefix{1^i0^j1}\lambda_1
\\
&
\text{ and } Y_{k^i_j} \mathrel{=_{E}} \prefix{1^i0^j0}\lambda_2
&&
 \text{only if }
k^i_j \in \vartheta^{OK}_2 \cup \vartheta^{OK}_3
\end{align*}
where, in the above, we make use of the messages $M^i_\ell$ and $N^i_j$ below.
\begin{align*}
	M^i_\ell = \enc{\pair{ nr_{k^i_\ell} }{ \pair{ NT_{k^i_{\ell}}\theta }{ kr_{k^i_\ell} }}}{ke_i}
\qquad
	N^i_j =  \enc{\pair{ nt_{k^i_j} }{ \pair{ nr_{k^i_\ell} }{ kt_{k^i_j} }}}{ke_i}
\end{align*}
To avoid cycles in the definition above we ensure, in addition,
that $\isfresh{\prefix{1^i0^{\ell}0}\lambda_2}{NT_{k^i_{\ell}}}$.
That is, the output with alias
$\prefix{1^i0^{\ell}0}\lambda_2$ 
cannot be used to produce the input $NT_{k^i_{\ell}}$ on which it depends.

Notice that there are several constraints above that apply
when $k^i_j \in \vartheta^{OK}_3$.
These, together, ensure that
$\vartheta^{OK}_3$ concerns ePassport sessions that has successfully authenticated
and output the final message $\prefix{1^i0^j1}\lambda_3$ has been sent.
The constraints also ensure that,
for all $i$ and $j$ such that
$k^i_j \in \vartheta^{OK}_2 \cup \vartheta^{OK}_3$ (\ie authentication is successful for the $j^{th}$ session of ePassport $i$),
there exists some $\ell$
such that, 
 $k^i_{\ell} \in \varphi_2$ (i.e, there is some $\ell^{th}$ reader session with ePassport $i$ that has output a reponse)
such that
$NT_{k^i_{\ell}} \mathrel{=_{E}} \prefix{1^i0^j1}\lambda_1$
and
$Y_{k^i_j} \mathrel{=_{E}} \prefix{1^i0^j0}\lambda_2$.
That is, we have
$NT_{k^i_{\ell}}\theta \mathrel{=_{E}} nt_{k^i_j}$
and also.
\[
Y_{k^i_j}\theta \mathrel{=_{E}} 
\pair{
 \enc{\pair{ nr_{k^i_\ell} }{ \pair{ nt_{k^i_j} }{ kr_{k^i_\ell} }}}{ke_i}
}{\mac{ 
 \enc{\pair{ nr_{k^i_\ell} }{ \pair{ nt_{k^i_j} }{ kr_{k^i_\ell} }}}{ke_i}
 , km_i}}.
 \]
These conditions are sufficient and necessary to identify ePassport sessions that can reach a successful termination state.

The process $Q_i$ defined below,
is defined via the processes $R$ and $S$ representing possible states of a verifier and prover respectively.
Notice that if $m_i = 0$ then no action has happened involving the keys of the passport represented and hence they are not yet extruded.
\begin{align*}
	Q_i & = 
	\begin{dcases}
		\mathopen{\nu ke, km.\bang}\left( \Reader(c,d,ke,km) \cpar \MainUK(c,d,ke,km) \right) & \text{if } m_i = 0 \\[1em]
		\begin{aligned}
	& \left( R(ke_i, km_i)_{k_{i}^{1}}\theta \cpar S(ke_i, km_i)_{k_{i}^{1}}\theta \right) \cpar \big( \hdots\\
	& \Big(
	\left( R(ke_i, km_i)_{k_{i}^{m_i}}\theta \cpar S(ke_i, km_i)_{k_{i}^{m_i}}\theta \right)
	\cpar\\
	& \bang\left( \Reader(c,d,ke_i,km_i) \cpar \MainUK(c,d,ke_i,km_i) \right)
	\Big) \ldots \big)
	\end{aligned}
		& \text{if } m_i > 0
\end{dcases}
\\[2em]
R(ke_i, km_i)_k & = 
\begin{dcases}
	\Reader(c,d,ke_i,km_i) & \text{if }  k \in \varphi_0 \\
	\Reader'(c,d,ke_i,km_i, NT_k) & \text{if } k \in \varphi_1 \\
	0 & \text{if } k \in \varphi_2
\end{dcases}
\\[2em]
S(ke_i, km_i)_k & =
\begin{dcases}
	\MainUK(c,d,ke_i,km_i) & \text{if } k \in \vartheta_0 \\
	\MainUK(c,d,ke_i,km_i, nt_k)' & \text{if } k \in \vartheta_1 \\
	\MainUK(c,d,ke_i,km_i, nt_k, Y_k)'' & \text{if } k \in \vartheta^{KO}_2 \cup \vartheta^{OK}_2 \\
	0 & \text{if } k \in \vartheta^{OK}_3 
\end{dcases}
\end{align*}
Notice that processes $R$ and $S$ may contain input messages,
to which the active substitution has not yet been applied.
Hence the active substitution is applied to them in process $Q$.

\textit{RHS processes, active substitution, and extruded names.} 
Now, to set up the other side of the relation,
let $f \colon \left\{1 \ldots m \right\} \rightarrow \left\{1 \ldots m\right\}$ be some bijection.
The idea is that for every ePassport-reader pair, $k$, in the system, 
the ePassport-Reader pair $2f(k)$ in the specification is in the same state,
unless the ePassport has received an input that it can authenticate (that is $k \in \vartheta^{OK}_2 \cup \vartheta^{OK}_3$),
in which case the thread in the specification does not progress
and instead reader $2f(k) + 1$ simulates the final input and output of a successful authentication 
attempt involving the ePassport,
which are indistinguishable from the steps an ePassport would take if they were to be successfully authenticated.

We set up the set of names $\vec{w}$, which may be in any order due to equivariance.
Let $\vec{w}$ be the least set of names such that, for all
$k \in \left\{1 \ldots m\right\}$, we have the following:
\begin{itemize}
\item
If $k \notin \vartheta_0 \cap \varphi_0$, then
$ke_{2f(k)} \in \vec{w}$ and $km_{2f(k)} \in \vec{w}$ (\ie either the ePassport or reader in session $k$ has performed an action).
\item
If $k \notin \vartheta_0$ then $nt_{2f(k)} \in \vec{w}$ (\ie the ePassport in session $k$ has extruded its first nonce).
\item
If $k \in \varphi_2$
then $nr_{2f(k)} \in \vec{w}$ and $kr_{2f(k)} \in \vec{w}$ (\ie the reader in session $k$ has extruded a nonce and key in response to a challenge).
\item
If $k \in \vartheta^{OK}_2 \cup \vartheta^{OK}_3$, then
$ke_{2f(k)+1} \in \vec{w}$ and $km_{2f(k)+1} \in \vec{w}$ (\ie the ePassport in session $k$ authencates and hence it is simulated by a reader in $2f(k)+1$ that must have extruded its keys.
\item
If $k \in \vartheta^{OK}_3$
then $nr_{2f(k)+1} \in \vec{w}$ and $kr_{2f(k)+1} \in \vec{w}$ (\ie as the above, but, in addition, the ePassport has responded in session $k$. This results in key material being extruded by the reader $2f(k)+1$ that simulates it).
\end{itemize}

We now define the active substitution $\sigma$ as follows.
We make use of 
a bijection between $\rho \colon \dom{\theta} \rightarrow \dom{\sigma}$ such that
$\rho(\prefix{1^i0^j1}\lambda_1) =  \prefix{0^{2f(k^i_j)}1}\lambda_1$
and
$\rho(\prefix{1^i0^j0}\lambda_2) = \prefix{0^{2f(k^i_j)}0}\lambda_2$
and
$\rho(\prefix{1^i0^j1}\lambda_3) = \prefix{0^{2f(k^i_j)+1}0}\lambda_3$.
\begin{align*}
	\sigma(\prefix{0^{2f(k^i_j)}1}\lambda_1) & = nt_{k^i_j}
	&&
	\text{only if }
	k^i_j \notin \vartheta_0 
	\\
	\sigma(\prefix{0^{2f(k^i_j)}0}\lambda_2) & =
	\pair{ \hat{M}^i_j }{\mac{ \hat{M}^i_j, km_{2f(k^i_j)}}}
	&&
	\text{only if } k^i_j \in \varphi_2
	\\
	\sigma(\prefix{0^{2f(k^i_j)+1}0}\lambda_3) & =
	\pair{ \hat{N}^i_j }{
 \mac{ \hat{N}^i_j, km_i}
} && \text{iff } k^i_j \in \vartheta^{OK}_3
\end{align*}
where, in the above, we make use of the following in the output of a reader used to simulate the successful authentication of an ePassport.
\[
\hat{M}^i_j = \enc{\pair{ nr_{f(k^i_j)} }{ \pair{ NT_{k^i_j}\rho\sigma }{ kr_{f(k^i_j)} }}}{ke_{2f(k^i_j)} }
\]
\[
\hat{N}^i_j = \enc{\pair{ nr_{f(k^i_j)+1} }{ \pair{ Y_{k^i_j}\rho\sigma }{ kr_{f(k^i_j)+1} }}}{ke_{2f(k^i_j)+ 1} }
\]
Notice that the constraints on $NT_{k^i_j}$ and $Y_{k^i_j}$, that appear  in $M^i_j$ and in $K^i_j$, respectively, are 
inherited from the constraints on the LHS.

Following the above proof idea we define for each $k \in \left\{1 \ldots m \right\}$ we define
$T_{2f(k)}$ as follows, where it behaves as the $k^{th}$ process on the LHS, except that the ePassport does not progress if
it were to receive a message that would result in it being authenticated.
\[
T_{2f(k)} = 
\begin{dcases}
	\mathopen{\nu ke, km.}\left( \Reader(c,d,ke,km) \cpar \MainUK(c,d,ke,km) \right) & \text{if } k \in \varphi_0 \wedge k \in \vartheta_0\\
	R(ke_{2k}, km_{2k})_{2k} \cpar S(ke_{2k}, km_{2k})_{2k} \\
	\MoveEqLeft[8.5] \text{if $\left( k \notin \varphi_0 \vee k \notin \vartheta_0 \right) \wedge  k \notin \vartheta^{OK}_2 \cup \vartheta^{OK}_3$}\\
	R(ke_{2k}, km_{2k})_{2k} \cpar \MainUK(c,d,ke_{2k},km_{2k}, nt_{2k})' & \text{if } k \in \vartheta^{OK}_2 \cup \vartheta^{OK}_3 
\end{dcases}
\]
If the ePassport in session $k$ of the system 
has received a message that will result in successful authentication,
 we simulate it by starting a new ePassport in session $2f(k) + 1$ of the specification.
When the same ePassport outputs a message upon successful authentication, the state of the ePassport is also advanced.
 Nothing else can happen in the session $2f(k) + 1$ of the specification.
\[
T_{2f(k) + 1} = 
\begin{dcases}
	\mathopen{\nu ke, km.}\left( \Reader(c,d,ke,km) \cpar \MainUK(c,d,ke,km) \right) \\
	\MoveEqLeft[5.7] \text{if } k \not\in \vartheta^{OK}_{2} \cup \vartheta^{OK}_{3}\\
	\Reader'(ke_{2f(k)+1}, km_{2f(k)+1}, Y^i_j) \cpar \MainUK(c, d, ke_{2f(k)+1}, km_{2f(k)+1}) \\
	\MoveEqLeft[5.7] \text{if } k \in \vartheta^{OK}_2 \\
	0 \cpar \MainUK(c, d, ke_{2f(k)+1}, km_{2f(k)+1}) \\
	\MoveEqLeft[5.7] \text{if } k \in \vartheta^{OK}_3 
\end{dcases}
\]

\textit{Static equivalence.}
To show static equivalence, for this example,
it is sufficient to observe that an attacker cannot distinguish the messages obtained by breaking down message my applying deconstructors.
\begin{itemize}
\item
For $k^i_j \notin \vartheta_0$ we have
that $\left(\prefix{1^i0^j1}\lambda_1\right)\mathclose{\theta} = nt_{k^i_j}$
and
$\prefix{1^i0^j1}\lambda_1 \rho \circ \theta = nt'_{2f(k^i_j)}$, which are indistinguishable nonces.
\item
For $k^i_\ell \in \varphi_2$ we have
that \begin{align*}
\left(\fst{\prefix{1^i0^{\ell}0}\lambda_2}\right)\mathclose{\theta}
& = M^i_\ell\theta
\text{,} &
\left(\snd{\prefix{1^i0^{\ell}0}\lambda_2}\right)\mathclose{\theta}
&= \mac{ M^i_\ell\theta, km_i}\text{,}\\
\shortintertext{while we have,}
\left(\fst{\prefix{1^i0^{\ell}0}\lambda_2}\right)\mathclose{\rho\sigma}
&= \hat{M}^i_\ell
\text{,} &
\left(\snd{\prefix{1^i0^{\ell}0}\lambda_2}\right)\mathclose{\rho\sigma}
&= \mac{ \hat{M}^i_\ell\rho\sigma, km_{2f(k^i_\ell)}}\text{,}
\end{align*}
which are all irreducible cyphertexts and MACs, since the attack does not have a key; which are furthermore incomparable, since they all involve either different long-term keys or difference nonces.
\item
For $k^i_\ell \in \vartheta^{OK}_3$ we have
that $
\left(\fst{\prefix{1^i0^{\ell}1}\lambda_3}\right)\mathclose{\theta}
= N^i_\ell\theta
$
and
$
\left(\snd{\prefix{1^i0^{\ell}1}\lambda_3}\right)\mathclose{\theta}
= \mac{ N^i_\ell\theta, km_i}
$,
while we have, respectfully,
$
\left(\fst{\prefix{1^i0^{\ell}1}\lambda_3}\right)\mathclose{\rho\sigma}
= \hat{N}^i_\ell
$
and
$
\left(\snd{\prefix{1^i0^{\ell}1}\lambda_3}\right)\mathclose{\rho\sigma}
= \mac{ \hat{N}^i_\ell\rho\sigma, km_{2f(k^i_\ell)+1}}
$,
which are again all irreducible cyphertexts and MACs.
\end{itemize}

\textit{The histories.}
We now define a relation over the 
the most recent event in each thread, thereby respecting causal consistency,
that furthermore do not contain any outputs that are used in another message
either currently or previously active.
We define $\ESS$
for some 
$C1^i_j, C2^i_j, C3^i_j, D1^i_j, D2^i_j$ 
such that
$C1^i_j  \mathrel{=_E} c$, $C2^i_j \mathrel{=_E} c$, $C3^i_j \mathrel{=_E} c$,
$D1^i_j  \mathrel{=_E} d$, $D2^i_j  \mathrel{=_E} d$,
to be the least relation such that
the following hold.
Below, \enquote{is fresh} is to be read \wrt the actions of all events in $\ESS$
 and also with respect to $NT^i_j$ if $k^i_j \in \varphi_2$\footnote{This condition ensures
that if a reader has responded after receiving a message containing an alias,
then the output producing the alias has been removed from the relation,
which notably happens during a successful authentication session. The subsequent condition performs a similar check regarding the final output of an ePassport.}
and
with respect to $Y^i_j$ is $k^i_j \in \vartheta^{OK}_3$.
\begin{align*}
	&
	\text{If }
	k^i_j \in \vartheta_1 \text{ and }\prefix{1^i0^j1}\lambda_1 \text{ is fresh,}
	\text{ then }
	\event{ \co{C1^i_j}(\prefix{1^i0^j1}\lambda_1) }{ \prefix{1^i0^j1}[] } \ESS  \event{ \co{C1^i_j\rho}(\prefix{0^{2f(k^i_j)}1}\lambda_1) }{ \prefix{0^{2f(k^i_j)}1}[] }
	\text{.}
	\\
	&
	\text{If }
	k^i_j \in \varphi_1
	\text{then }
	\event{  D2^i_j\,NT^i_j }{ \prefix{1^i0^j0}[] } \ESS  \event{ D2^i_j\rho\,NT^i_k\rho }{ \prefix{0^{2f(k^i_j)}0}[] }
	\text{.}
	\\
	&
	\text{If }
	k^i_j \in \varphi_2
	\text{ and }
	\prefix{1^i0^j0}\lambda_2
	\text{ is fresh,}
	\text{then }
	\event{ \co{C2^i_j}(\prefix{1^i0^j0}\lambda_2) }{ \prefix{1^i0^j0}[] } \ESS  \event{ \co{C2^i_j}(\prefix{0^{2f(k^i_j)}0}\lambda_2) }{ \prefix{0^{2f(k^i_j)}0}[] }
	\text{.}
	\\
	&
	\text{If }
	k^i_j \in \vartheta^{KO}_2
	\text{ then }
	\event{ D3^i_j\,Y^i_j }{ \prefix{1^i0^j1}[] } \ESS  \event{ D3^i_j\rho\,Y^i_j\rho }{ \prefix{0^{2f(k^i_j)}0}[] }
	\text{.}\\
	&
	\text{If }
	k^i_j \in \vartheta^{OK}_2
	\text{ then }
	\event{ D3^i_j\,Y^i_j }{ \prefix{1^i0^j1}[] } \ESS  \event{ D3^i_j\rho\,Y^i_j\rho }{ \prefix{0^{2f(k^i_j)+1}0}[] }
	\text{.}\\
	&
	\text{If }
	k^i_j \in \vartheta^{OK}_3
	\text{ and }
	\prefix{1^i0^j1}\lambda_3
	\text{ is fresh,}
	\text{ then }
	\event{ \co{C3^i_j}(\prefix{1^i0^j1}\lambda_3) }{ \prefix{1^i0^j1}[] } \ESS  \event{ \co{C3^i_j\rho}(\prefix{0^{2f(k^i_j)+1}0}\lambda_3) }{ \prefix{0^{2f(k^i_j)+1}0}[] }
	\text{.}\end{align*}
Notice that all of the events in $\ESS$ are necessarily concurrent, 
since structural causality is preserved due to the partitioning of states of the readers and ePassports,
and also link causal is respected due to the freshness of any aliases bound by output events that occur.
	
The result follows by checking that all the above constraints are preserved under all transitions performed by the system and matched by the specification, which is routine. 
The most interesting case is where an input in a successful authentication session is received by the ePassport.
That is, we have
$k^i_j \in \vartheta_1$
and
$NT^i_\ell \mathrel{=_E} \prefix{1^i0^j1}\lambda_1$ 
and
$k^i_\ell \in \varphi_2$.
In that case,
the LHS can make a transition labelled with event
$\event{ D3^i_j\,Y^i_j }{ \prefix{1^i0^j1}[] }$
where $Y^i_j \mathrel{=_E} \prefix{1^i0^{\ell}0}\lambda_1$.
By the constraints on $\ESS$ there are no structural dependencies on that event,
at most the causal dependency 
$\event{ D3^i_j\,Y^i_j }{ \prefix{1^i0^j1}[] }
\notIndy
\event{ \co{C2^i_j}(\prefix{1^i0^j0}\lambda_2) }{ \prefix{1^i0^j0}[] }
$
holds (assuming some other input has not used the alias $\prefix{1^i0^j0}\lambda_2$ already thereby removing it from $\ESS$).
The same holds for the RHS with respect to
a transition 
labelled with event
$\event{ D3^i_j\rho\,Y^i_j\rho }{ \prefix{0^{2f(k^i_j)+1}0}[] }$
and
which cancels the event
$\event{ \co{C2^i_j}(\prefix{0^{2f(k^i_j)+1}0}\lambda_2) }{ \prefix{0^{2f(k^i_j)}0}[] }$
only,
again due to the reuse of the alias (recalling that $\rho : \prefix{1^i0^j0}\lambda_2 \mapsto \prefix{0^{2f(k^i_j)+1}0}\lambda_2$).
Thus the updated relation after the transition on the LHS
will be 
$\ESS \setminus \left\{\left( 
\event{ \co{C2^i_j}(\prefix{1^i0^j0}\lambda_2) }{ \prefix{1^i0^j0}[] }
,
\event{ \co{C2^i_j}(\prefix{0^{2f(k^i_j)}0}\lambda_2) }{ \prefix{0^{2f(k^i_j)}0}[] }
\right)
\right\}
\cup
\left\{\left(
\event{ D3^i_j\,Y^i_j }{ \prefix{1^i0^j1}[] }
,
\event{ D3^i_j\rho\,Y^i_j\rho }{ \prefix{0^{2f(k^i_j)+1}0}[] }
\right)\right\}$.
This is, exactly the relation expected when all constraints are the same except that 
$k^i_j \in \vartheta^{OK}_2$
rather than $k^i_j \in \vartheta_1$.
\end{proof}

\end{document}